\documentclass[10pt, twocolumn, conference]{IEEEtran}

\makeatletter
\def\ps@headings{%
\def\@oddhead{\mbox{}\scriptsize\rightmark \hfil \thepage}%
\def\@evenhead{\scriptsize\thepage \hfil \leftmark\mbox{}}%
\def\@oddfoot{}%
\def\@evenfoot{}}
\makeatother
\pagestyle{headings}

\usepackage{amsmath, mathrsfs,amsfonts}
\usepackage{amssymb}
\usepackage{graphicx}
\usepackage{cite}
\usepackage{algorithmic}
\usepackage{algorithm2e}
\usepackage{array}
\usepackage{subfigure,enumerate}

\def\ignore#1{}

\newtheorem{definition}{Definition}

\newtheorem{theorem}{Theorem}
\newtheorem{lemma}{Lemma}
\newtheorem{corollary}{Corollary}
\newtheorem{proposition}{Proposition}

               {\begin{list}{}{\leftmargin#1\rightmargin#2\topsep#3}\item{}}%
               {\end{list}}

\usepackage[top=0.75in, bottom=0.75in, left=0.5in, right=0.5in]{geometry}

\IEEEoverridecommandlockouts
\begin{document}
\title{Robustness of Bidirectional Interdependent Networks: Analysis and Design}

\author{\IEEEauthorblockN{Marzieh Parandehgheibi and Eytan Modiano }
\IEEEauthorblockA{Laboratory for Information and Decision Systems, Massachusetts Institute of Technology, Cambridge, MA, USA\\}}

\maketitle 

\begin{abstract}
We study the robustness of interdependent networks where two networks are said to be interdependent if the operation of one network depends on the operation of the other one, and vice versa. In this paper, we propose a model for analyzing bidirectional interdependent networks with known topology. We define the metric $\mathcal{MR}(D)$ to be the minimum number of nodes that should be removed from one network to cause the failure of $D$ nodes in the other network due to cascading failures. We prove that evaluating this metric is not only NP-complete, but also inapproximable. Next, we propose heuristics for evaluating this metric and compare their performances using simulation results. Finally, we introduce two closely related definitions for robust design of interdependent networks; propose algorithms for explicit design, and demonstrate the relation between robust interdependent networks and expander graphs.
\end{abstract}

\IEEEpeerreviewmaketitle

\section{Introduction}\label{Intro_sec}

Many of today's infrastructures are organized in the form of networks and are becoming increasingly interdependent. For example, the power grid and communication networks have a cyber-physical interdependency where the power nodes depend on the control signals coming from the communication nodes and communication nodes operate using the power coming from the power nodes. As another example, the power grid and gas networks have a physical-physical interdependency where the compressors in gas networks require power from the power grid to transmit gas and the gas generators in the power grid require gas to generate power. 

Although interdependency is required for the operation of both networks under normal conditions, if a failure happens inside one of the networks it can cascade to the other network. For example, if a failure occurs inside the power grid, some of the communication nodes will lose their power and fail. As a result, new power nodes lose their control and fail which can lead to the failure of additional communication nodes. Thus, a cascade of failures can occur between the two networks due to the strong interdependency.

The concept of interdependency was first introduced by Rinaldi \textit{et. al.} in \cite{rinaldi2001identifying}, where the authors described different types of interdependencies as well as different types of failures that can occur in interdependent systems. In 2010, Buldyrev \textit{et. al.} developed the first mathematical model for describing interdependency between two networks \cite{buldyrev2010catastrophic}. They modeled each network as a random graph and assumed there exists a one-to-one interdependency between the two networks. They consider two networks $A$ and $B$, where node in network $A$ can operate if it is connected to the largest connected component in network $A$ and its correspondent node in network $B$ is also operating. Using percolation theory, they showed that failures spread more in interdependent networks than isolated networks; thus, interdependent networks are more vulnerable. Using this model, Parshani \textit{et al.} showed that reducing the dependency between the networks makes them more robust to random failures \cite{parshani2010interdependent}. In \cite{buldyrev2011interdependent}, the robustness of a slightly different version of interdependent networks was investigated where mutually dependent nodes have the same number of connectivity links. A discussion of follow-up works on Buldyrev's model can be found in \cite{gao2012networks}. 

The robustness of interdependent networks under targeted attacks was studied in \cite{huang2011robustness}. The stability of interdependent spatially embedded networks, and the impact of geographical attacks on the robustness of two interdependent spatially embedded networks was studied in \cite{bashan2013extreme} and \cite{berezin2013spatially}, respectively. Moreover, in \cite{Sundaram2015:Online}, the notion of interdependency was generalized to more than two networks, and the ability of networks to tolerate certain structural attacks was investigated.

There have also been some efforts on designing robust interdependent networks. The authors in \cite{schneider2013towards} proposed a strategy based on ``betweenness" centrality measurement to make a minimum number of nodes resilient such that the overall robustness of networks is increased. In \cite{stippinger2013enhancing} a dynamic enhancing model was studied, where the authors defined a healing process, where an interdependency link is formed with some given probability. They showed that there is a threshold for this probability where catastrophic failures occur below the threshold and do not occur above the threshold. Finally, the authors in \cite{yagan2012optimal} showed that the robustness of interdependent networks depends on the allocation of the interdependency links, and characterized an optimum allocation against random attacks.

As described above, most of the literature on interdependency follows the model of \cite{buldyrev2010catastrophic}, and relies on the asymptotic behavior of random networks. However, in reality networks are not random, and models and analytical tools are needed for studying networks with known topology. There have been a few attempts on assessing deterministic interdependent networks in the literature. In \cite{parandehgheibi2013robustness}, we proposed a new connectivity model for deterministic networks with known topologies. In this model, every node has to be connected to a source node in the network. For example, in the case of the power grid, a load has to be connected to the generator as its source. The interdependency model was extended in \cite{sen2014identification} and \cite{Das2014Root} by introducing different types of interdependency, where a node depends on multiple types of nodes for its operation.

There have also been some attempts at modeling the interactions between networks in more realistic settings. The impact of failures in the power grid on the communication network was explored in \cite{rosato2008modelling}, and the impact of power failures on layered optical communication networks was studied in \cite{parandehgheibi2014survivable}. In \cite{parandehgheibi2014mitigating,korkali2014reducing}, the interdependency model for power grid and communication network was modified by considering the power flow equations and cascading failures inside the power grid. In particular, we showed in \cite{parandehgheibi2014mitigating} that under some conditions, the connectivity model proposed in this paper can be used to describe the cascading failures between interdependent power grid and communication networks.

In this paper, we propose a new model for interdependent networks with ``known'' topologies. We propose several metrics for identifying the impact of failures in one network on the vulnerability of the other one due to interdependency, analyze the complexity of our metrics, and propose algorithms for approximating them. 

Next, we prove that interdependent networks with bidirectional edges are more robust than those with unidirectional edges, and propose two closely-related definitions for robust interdependent networks: (1) a lexicographic definition which guarantees that networks robust to large failures are also robust to small failures, and (2) a relative definition which guarantees that the ratio of the size of the initial failure to the size of the final failure is large.
	
Finally, we propose explicit algorithms for robust design of networks under the first definition and showed the relation of robust interdependent networks with expander graphs under the second definition.

The rest of this paper is organized as follows. In Section \ref{Model_sec}, we introduce our model for interdependent networks. In Section \ref{Analysis_sec}, we formulate two closely-related metrics for vulnerability assessment, analyze their complexity and propose heuristics for approximating them. Next, in Section \ref{Design_sec}, we introduce two definitions for robust networks and propose algorithms for allocating the interdependency links in order to obtain the most robust bidirectional interdependent networks. Finally, in Section \ref{Discussion_sec}, we discuss the robustness of interdependent networks with general topologies, and conclude in Section \ref{Conclusion_sec}.

\section{Interdependency Model}\label{Model_sec}

Consider network $A$ with $N_1$ nodes and a set of robust source nodes $S_A$\footnote{We assume that source nodes do not fail.} where every node in $A$ is connected to at least one source node in $S_A$ via a path in network $A$. Similarly, network $B$ has $N_2$ nodes and a set of robust source nodes $S_B$ where every node in $B$ is connected to at least one source node in $S_B$ via a path in $B$. Without loss of generality, we assume that each network has exactly one source node, where one can replace all source nodes in $S_A$ ($S_B$) with one dummy node called $S_A$ ($S_B$). In addition, there exists an interdependency between nodes in networks $A$ and $B$ as follows: 
\begin{enumerate}
\item Every node in network $A$ receives at least one incoming edge from a node in network $B$; 
\item Every node in network $B$ receives at least one incoming edge from a node in network $A$; 
\end{enumerate}

See Figure \ref{Model} for an example of our interdependent network.

\begin{figure}[ht]
\centering
\includegraphics[scale=0.33, angle=-90]{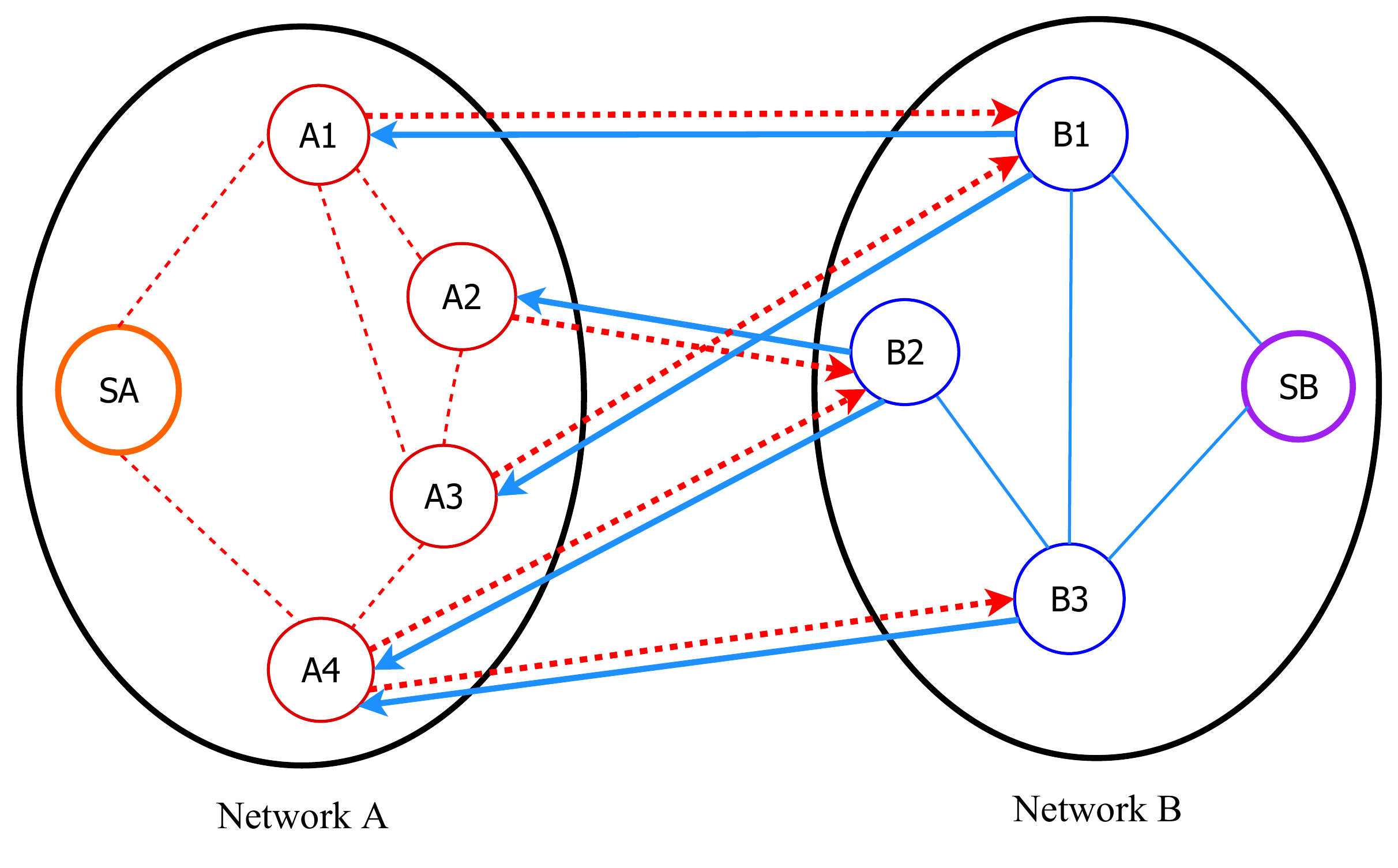}\vspace{-0.3cm}
\caption{Interdependency Model - Dotted lines represent links of type A and solid lines represent links of type B.}
\label{Model}\vspace{-0.2cm}
\end{figure}

According to our model, a node in network $A$ is operating if (a1) it is connected to source $S_A$ via a path of operating nodes in $A$, and (a2) it is connected to at least one operating node in network $B$. Similarly, a node $B$ is operating if (b1) it is connected to source $S_B$ via a path of operating nodes in $B$, and (b2) it is connected to at least one operating node in network $A$. Note that condition (a2) guarantees the connection of node $A$ to source $S_B$, as well. This is due to the fact that an operating node of type $B$ should be connected to $S_B$. Similarly, condition (b2) guarantees the connection of node $B$ to source $S_A$.

It is worthwhile to note a critical difference between the interdependent networks and isolated networks which makes the analysis of interdependent networks more complex. In our model, every node of type $A$ will be operational if it is receiving incoming edges from both type $A$ and type $B$ nodes; however, its outgoing edge will be ``only" of type $A$. Therefore, although each operational node is connected to both sources $S_A$ and $S_B$ via two paths, the type of nodes in each path also matters.

\begin{figure}[h]
\centering
\subfigure[Step1 - $A_4$ fails, initially]
{\label{step1}\includegraphics[scale=0.29, angle=-90]{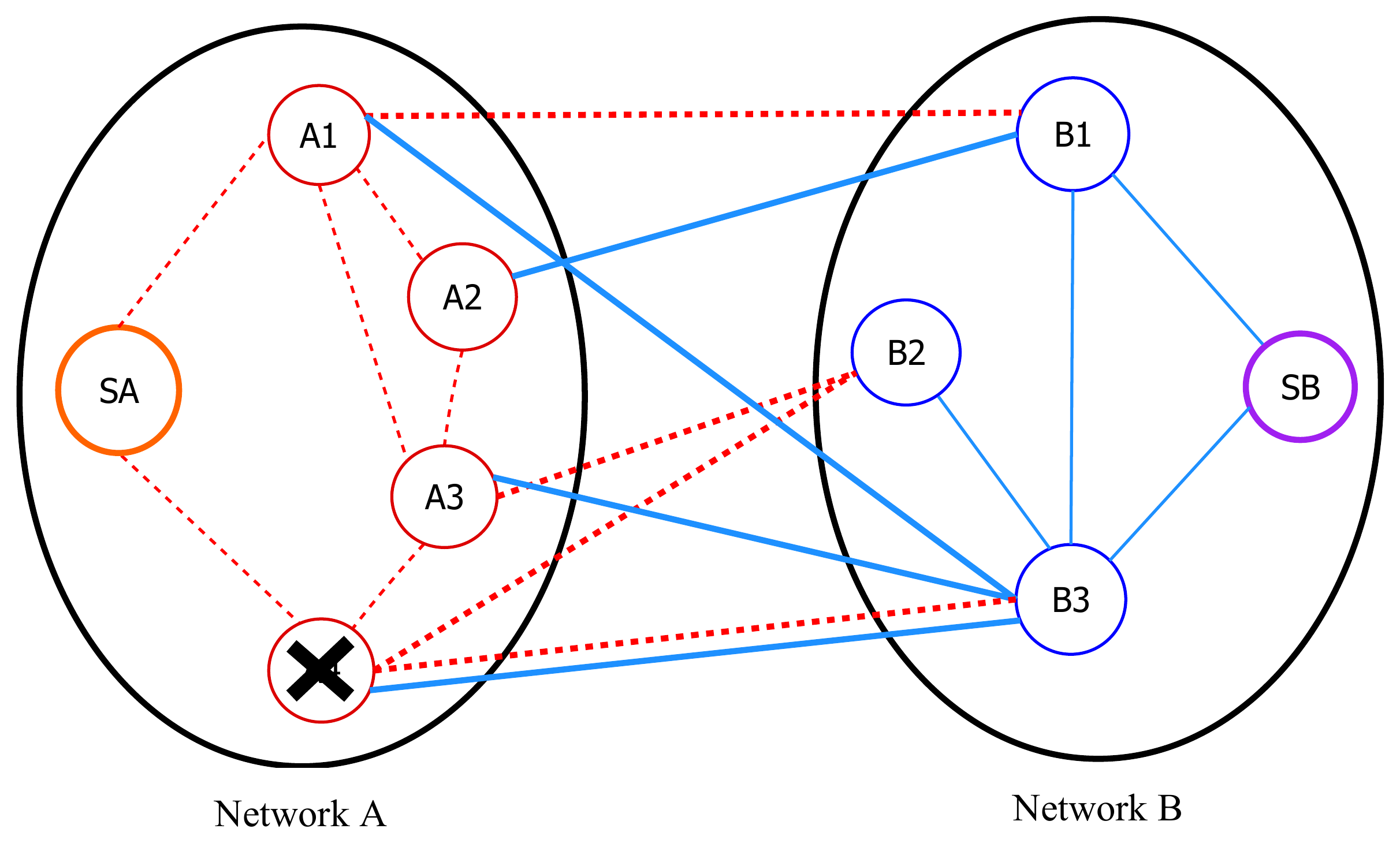}}                
\subfigure[Step2 - $B_3$ fails]
{\label{step2}\includegraphics[scale=0.29, angle=-90]{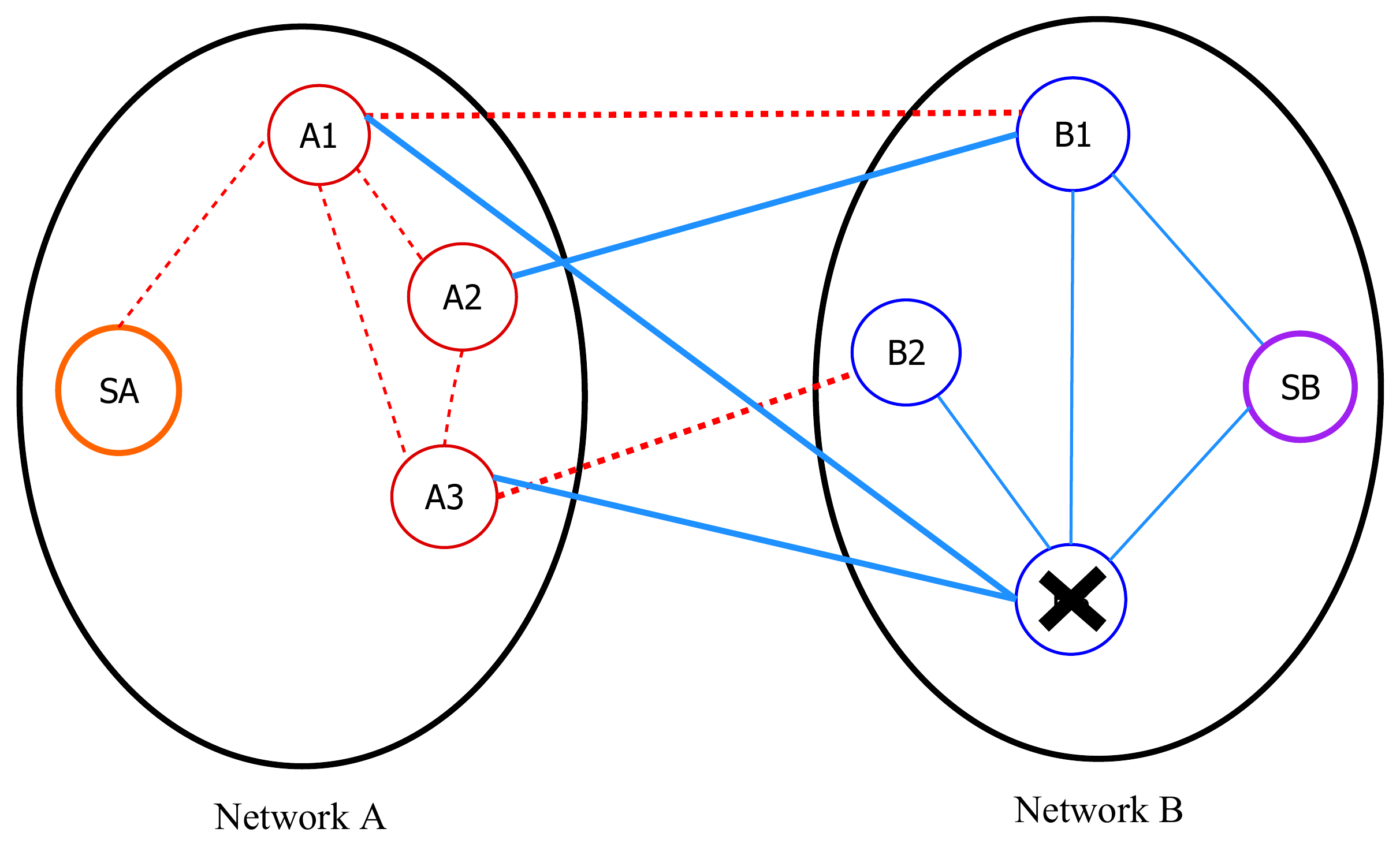}}
\subfigure[Step3 - $A_1$, $A_3$ and $B_2$ fail]
{\label{step3}\includegraphics[scale=0.29, angle=-90]{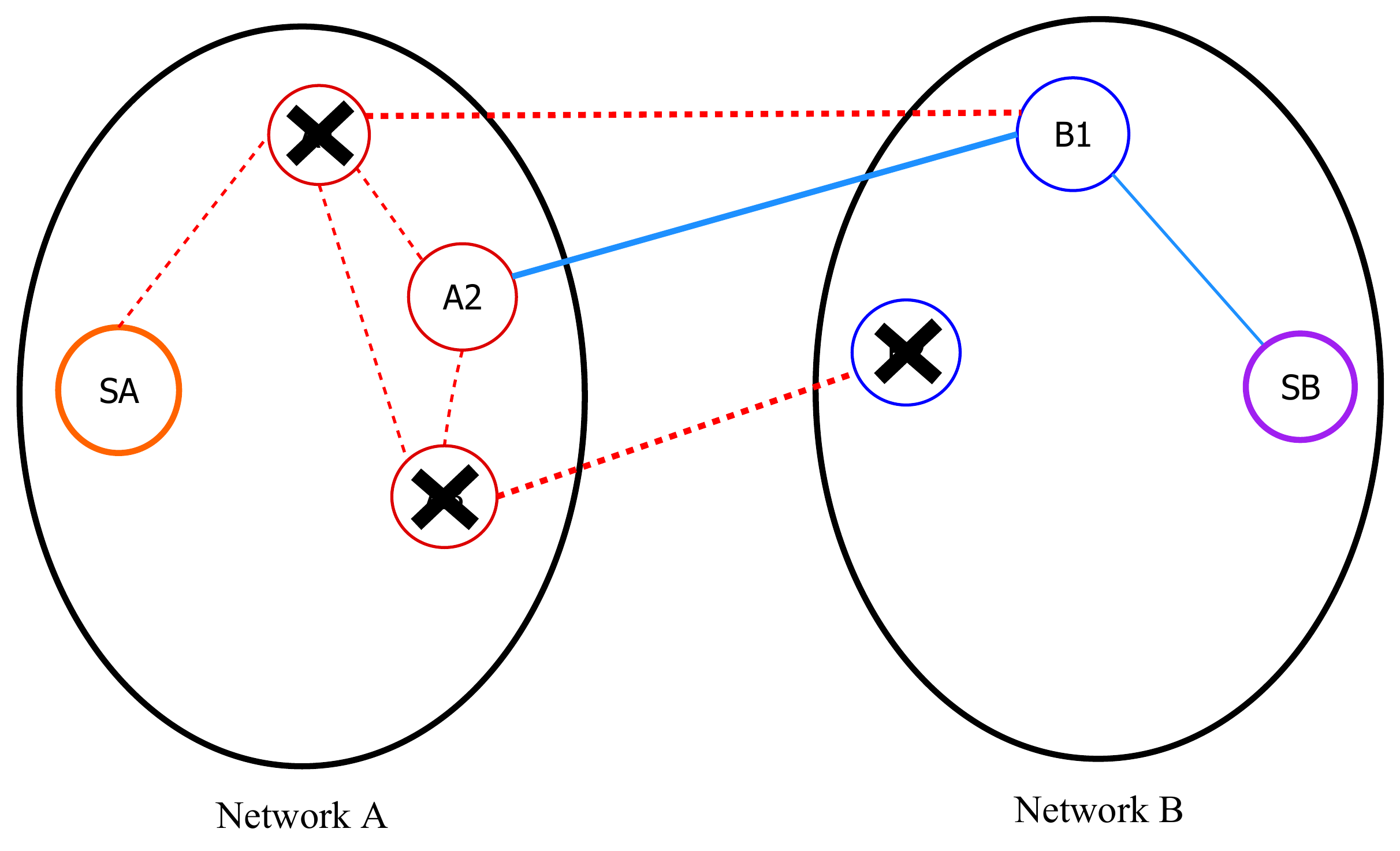}}
\subfigure[Step4 - $B_1$ and $A_2$ fail]
{\label{step4}\includegraphics[scale=0.29, angle=-90]{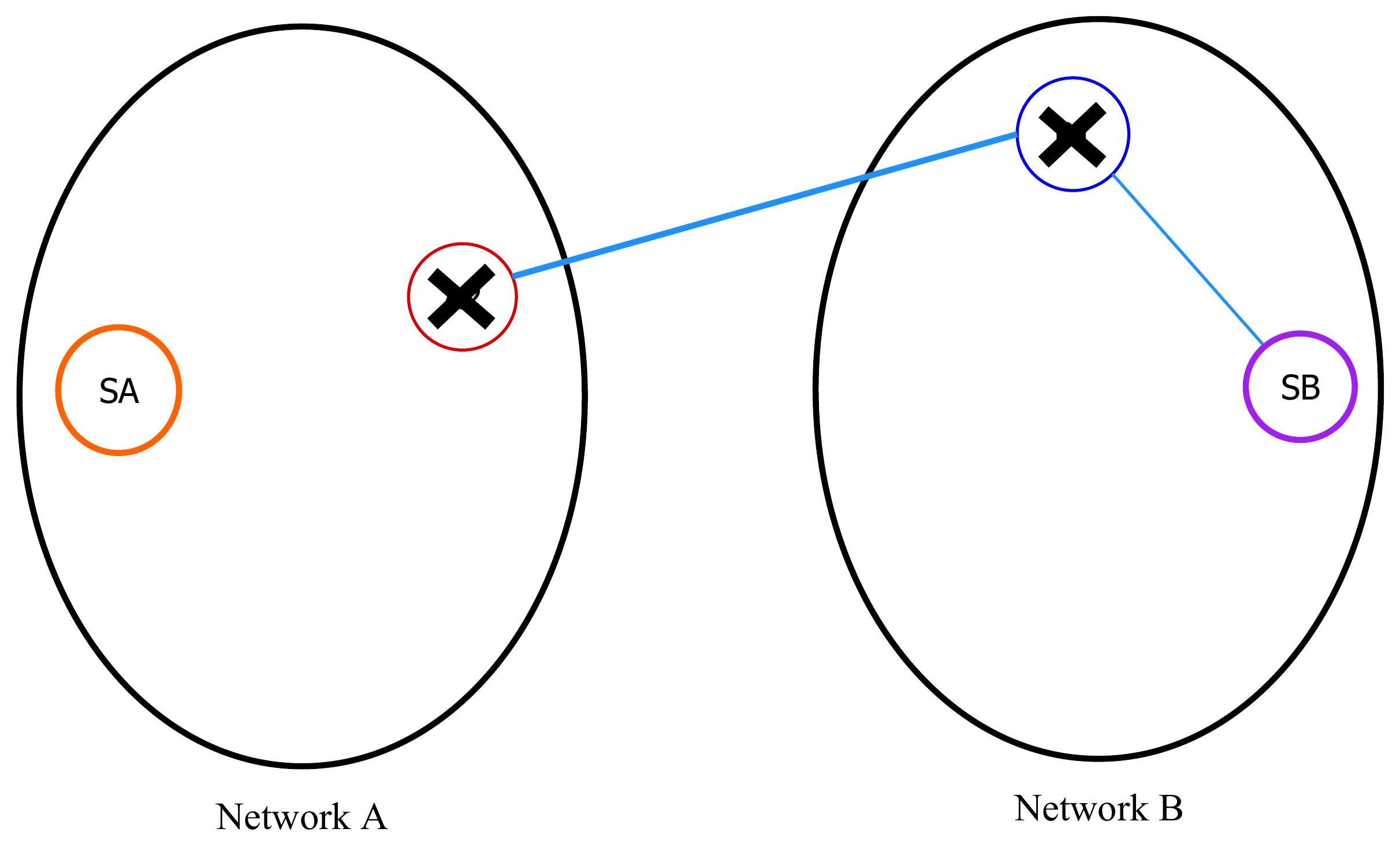}}
\caption{Cascade of a single failure in an interdependent model}
\label{Cascade}
\end{figure}

\subsection{Failure Cascades}

We start with an example demonstrating that a single failure can cascade multiple times within and between networks $A$ and $B$ (Figure \ref{Cascade}). Suppose that initially node $A_4$ fails (Step 1). As a result, all the edges attached to $A_4$ fail, and node $B_3$ loses its connection to network $A$ and fails (Step 2); Consequently, node $A_1$ and $A_3$ lose their connection to network $B$, and node $B_2$ loses its connection to source $S_B$, and all fail (Step 3). Finally node $B_1$ loses its connection to network $A$, and substation $A_2$ loses its connection to source $S_A$, and both fail (Step 4).

In this paper, the interdependent networks $A$ and $B$ have special \textit{star topologies}; i.e. every node is directly connected to the source in that network. In a star topology, failure of a node in network $A$ cannot disconnect other nodes in network $A$ from source $S_A$; and similarly, failure of a node in network $B$ cannot disconnect other nodes in network $B$ from source $S_B$. Therefore, any cascading failure in the system would be \textit{only} due to the interdependency between the networks. We consider this topology as it gives us the opportunity to investigate the impact of interdependency on the robustness of networks.

\subsection{Types of Interdependency}

One can consider both unidirectional and bidirectional interdependency. In unidirectional interdependency, interdependent edges are unidirectional; i.e. if node $i$ in network $A$ supports node $j$ in network $B$, it is not necessarily supported by node $j$. In bidirectional interdependency, interdependent edges are bidirectional; i.e. if node $i$ in network $A$ supports node $j$ in network $B$, it is also supported by node $j$.

The main difference between the cascade of failures in these two networks is the fact that in unidirectional networks, a failure can cascade in multiple stages, whereas in a bidirectional network, a failure cascades only in one stage \footnote{Suppose failure cascades from $i_1$ to $i_2$ to $i_3$; i.e. two stages. This means that node $i_2$ has two neighbors ; i.e. two incoming edges; thus, loss of neighbor $i_1$ does not lead to the failure of node $i_2$} (See \cite{parandehgheibi2013robustness} for more details). Later, we will see that the bidirectional interdependent networks are more robust than the unidirectional interdependent networks due to this difference.

\section{Analysis}\label{Analysis_sec}
In this section, we analyze the robustness of interdependent networks with known topology to cascading failures. First, we define two closely related metrics which find the most influential nodes in an interdependent network. Then, we formulate these metrics, investigate their complexity and propose algorithms for evaluating them.

\subsection{Metrics}

We define two metrics $\mathcal{MR}(D)$ and $\mathcal{MRB}(D)$ to evaluate the impact of cascading failures in interdependent networks.

\begin{definition}
In an interdependent network, metric $\mathcal{MR}(D)$ denotes the minimum number of node removals from network $A$ which causes the failure of $D$ \textit{arbitrary} nodes in network $B$.
\end{definition}

\begin{definition}
In an interdependent network, metric $\mathcal{MRB}(D)$ denotes the minimum number of node removals from \textit{both} networks which causes the failure of $D$ \textit{arbitrary} nodes in network $B$.
\end{definition}

Note that due to symmetry, one can define the same metrics for analyzing the effect of removals on network $A$. 

\begin{theorem}\label{Uni_Bi_comp}
Consider the set of all operating interdependent networks, namely $G$, with $N_1$ nodes in network $A$, $N_2$ nodes in network $B$, $E$ edges from network $A$ to $B$ and $E$ edges from network $B$ to $A$, where $A$ and $B$ have star topologies and every node has at least one outgoing edge. For any arbitrary value of $D$, the network with largest $\mathcal{MR}(D)$ has bidirectional edges. 
\end{theorem}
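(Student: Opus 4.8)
The plan is to show that any admissible network can be turned into a bidirectional one without decreasing $\mathcal{MR}(D)$. Given an arbitrary operating network $G$ from the stated family, I would construct a bidirectional network $G'$ with the same parameters by discarding the $E$ edges of $G$ directed from $B$ to $A$ and replacing every edge directed from $A$ to $B$ in $G$ by a bidirectional edge on the same pair of endpoints. First I would verify that $G'$ belongs to the family: the star topologies of $A$ and $B$ are untouched, $G'$ has $E$ edges in each direction, and since in $G$ every node of $A$ had at least one outgoing edge (necessarily into $B$) and every node of $B$ had at least one incoming edge (necessarily from $A$), in $G'$ every node of $A$ and of $B$ now has at least one incident edge, hence at least one incoming and one outgoing edge; moreover $G'$ is operating, because in the undamaged state every node still reaches its source through the star and still has an operating neighbor in the other network. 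The theorem then reduces to the inequality $\mathcal{MR}_{G'}(D)\ge \mathcal{MR}_{G}(D)$: since $G$ ranges over the whole family while $G'$ ranges over its bidirectional members, this shows the maximum value of $\mathcal{MR}(D)$ over the family is attained by a bidirectional network.

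To prove that inequality I would analyze the cascade through a quantity that depends only on the $A$-to-$B$ edges. For a removed set $R\subseteq A$, let $F_1(R)$ be the set of nodes $b\in B$ all of whose incoming edges from $A$ originate in $R$; this set is identical in $G$ and in $G'$. Two facts are needed. (i) In any network, removing $R$ makes every node of $F_1(R)$ fail already in the first round, since it has lost all of its incoming $A$-edges; and because a cascade can only add failures, the final set of failed $B$-nodes always contains $F_1(R)$. (ii) In a bidirectional network, no node of $A\setminus R$ ever fails: by induction on the cascade rounds, if no node of $A\setminus R$ has failed so far, then every $a\in A\setminus R$ still has a live neighbor in $B$, because if all of its $B$-neighbors had failed, then for any such neighbor $b$ to fail all of \emph{its} $A$-neighbors must have failed, and these include $a$, contradicting $a\notin R$ together with the induction hypothesis; hence $a$ survives the next round. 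Consequently, in a bidirectional network the only failed $A$-nodes are those in $R$, so a node $b\in B$ fails if and only if $b\in F_1(R)$; the cascade stops after one round and the failed $B$-set is exactly $F_1(R)$, recovering the ``one-stage'' phenomenon noted earlier.

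Combining the two facts: let $R^*$ be a minimum removal set that causes at least $D$ failures in $B$ for the bidirectional network $G'$. By fact (ii), the failed $B$-set of $G'$ under $R^*$ equals $F_1(R^*)$, so $|F_1(R^*)|\ge D$. Applying fact (i) to $G$, removing the \emph{same} set $R^*$ from $A$ in $G$ causes at least $|F_1(R^*)|\ge D$ failures in $B$, whence $\mathcal{MR}_{G}(D)\le |R^*| = \mathcal{MR}_{G'}(D)$, which is exactly what is required.

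I expect the only real obstacles to be bookkeeping: checking that the constructed $G'$ genuinely satisfies every structural constraint of the family (in particular the ``at least one incoming edge'' conditions in both directions, and the ``operating'' condition), and setting up the round-by-round induction in fact (ii) so that the survival of an $A$-node is correctly tied to the survival of one of its bidirectional $B$-neighbors. The remainder is just monotonicity of the cascade in the set of initially removed nodes.
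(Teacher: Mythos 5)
Your proposal is correct and takes essentially the same route as the paper's proof: both hinge on the facts that removing the direct $A$-neighbors of a target set suffices to fail it in any admissible network, that in a bidirectional star network the cascade stops after one stage so $\mathcal{MR}(D)$ is exactly the minimum neighborhood size, and on transferring the $A$-to-$B$ edge allocation to a bidirectional network with the same parameters. The only difference is presentational: you argue directly that the bidirectionalized network $G'$ satisfies $\mathcal{MR}_{G'}(D)\ge\mathcal{MR}_{G}(D)$ for every $G$ (and you verify feasibility and the one-stage-cascade induction more explicitly), whereas the paper phrases the same comparison as a contradiction against a hypothetically better unidirectional maximizer.
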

\begin{proof}
By contradiction - Let $G_1 \in G$ be the set of bidirectional interdependent networks and $G_2 \in G$ be the set of unidirectional interdependent networks, where $G_1 \cup G_2 = G$. Moreover, for any subset of $D$ nodes in network $B$, namely $Y_D$, let $X(Y_D)$ denote the minimum node removal from network $A$ for the failure of $Y_D$.

Suppose $G_1^* \in G_1$ is the bidirectional network that has the largest $\mathcal{MR}(D)=X^*$ among all networks in $G_1$. Next, we prove by contradiction that there exists no unidirectional interdependent network with larger $\mathcal{MR}(D)$.

Consider an arbitrary unidirectional interdependent network in $G_2$. In order to cause the failure of any subset $Y_D$ with minimum node removal ($X(Y_D)$), one should either remove its direct neighbors $N(Y_D)$ (i.e. the set of nodes in network $A$ that provide direct incoming edges to nodes in $Y_D$) or the nodes that their failure leads to the failure of $N(Y_D)$. Thus, $X(Y_D) \leq N(Y_D)$. Suppose there exists $G_2^* \in G_2$ with $\mathcal{MR}(D)>X^*$. It means that there exists an allocation of $E$ edges from network $A$ to $B$ such that for any $Y_D \in B$, $X^*<X(Y_D) \leq N(Y_D)$. Thus, one can construct a bidirectional network with the same allocation such that $\mathcal{N}(Y_D)>X^*$, for all $Y_D \in B$. Therefore, $\mathcal{MR}(D) = \min \{\mathcal{N}(Y_D): \forall Y_D \in B\} >X^*$ which is a contradiction.

\end{proof}

Theorem \ref{Uni_Bi_comp} indicates that bidirectional networks are more robust than unidirectional networks. Thus, throughout this paper, we will only focus on analyzing networks with bidirectional interdependency.

\begin{lemma}\label{NeighborRemoval}
In bidirectional interdependent networks with star topologies, the smallest set of nodes in network $A$ whose removals lead to the failure of a \textit{given} set of $D$ nodes, namely $Y_D$, in network $B$ is the set of direct neighbors of nodes in $Y_D$, namely $\mathcal{N}(Y_D)$.
\end{lemma}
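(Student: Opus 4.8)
The plan is to prove the statement in two directions: first that removing $\mathcal{N}(Y_D)$ does in fact kill every node of $Y_D$, and second that no smaller set of nodes in network $A$ can do so. The first direction is immediate from the operating conditions of the model: if we remove every node in network $A$ that has an outgoing edge into some node of $Y_D$, then each node in $Y_D$ loses condition (b2) (connection to an operating node in network $A$) and therefore fails. So $\mathcal{N}(Y_D)$ is always a feasible removal set, and $|\mathcal{N}(Y_D)|$ is an upper bound on the quantity in question.

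For the lower bound, I would argue that any removal set $X \subseteq A$ whose cascade eliminates all of $Y_D$ must satisfy $|X| \ge |\mathcal{N}(Y_D)|$, and in fact the natural way to see this is to show one can always replace $X$ by $\mathcal{N}(Y_D)$ without increasing its size. The key structural fact to invoke is the ``one-stage cascade'' property of bidirectional networks noted earlier in the paper (the footnote in Section~\ref{Model_sec}): in a bidirectional interdependent network a failure cascades in only one stage, because any node with two or more incoming edges survives the loss of a single neighbor. Combined with the star topology — which guarantees that removing nodes inside network $A$ never disconnects other $A$-nodes from $S_A$ — this means the only way a node $b \in Y_D$ can fail as a consequence of removing a set $X \subseteq A$ is for $b$ to lose \emph{all} of its $A$-neighbors, i.e. $\mathcal{N}(\{b\}) \subseteq X \cup (\text{nodes of }A\text{ that failed in the cascade})$. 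But one-stage cascading, together with bidirectionality, forces the set of $A$-nodes that fail in the cascade to be controlled by $X$ in a way that does not help: an $A$-node fails only if it loses its $B$-support, which requires its $B$-neighbors to have already failed, and those can only be the initially-targeted $B$-nodes — so no $A$-node outside $X$ fails unless it is ``cheaper'' to have included its own neighbors.

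Making this precise is the crux, and I expect it to be the main obstacle: one has to rule out the possibility that a clever choice of $X$ exploits a long alternating chain of dependencies to kill $Y_D$ using fewer than $|\mathcal{N}(Y_D)|$ removals. The cleanest route is a direct exchange/swap argument: take any feasible $X$, and show that $\mathcal{N}(Y_D)$ is also feasible with $|\mathcal{N}(Y_D)| \le |X|$, by observing that for every $b \in Y_D$ at least one neighbor $a \in \mathcal{N}(\{b\})$ must have been removed or killed, and that — because of one-stage cascading in the bidirectional setting — counting the removals needed to kill those $A$-neighbors is never cheaper than simply removing the $A$-neighbors themselves. Formally one can induct on the number of cascade stages, which by the bidirectional property is at most one beyond the initial removal, collapsing the analysis to a single bipartite covering argument between $Y_D$ and $\mathcal{N}(Y_D)$. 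I would also double-check the edge case where some node in $\mathcal{N}(Y_D)$ lies in $Y_D$'s own dependency set in both directions, but the star topology makes this harmless. \qed
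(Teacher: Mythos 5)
Your approach is the same as the paper's: both directions rest on the one-stage-cascade property of bidirectional networks plus the star topology (the paper's own proof is just a one-line appeal to that fact), and your feasibility direction (removing $\mathcal{N}(Y_D)$ violates condition (b2) for every node of $Y_D$) is fine. Where you go astray is in treating the lower bound as an open "crux" requiring an exchange/swap argument or an induction on cascade stages --- no such machinery is needed, and the sketch you give is shaky in two places: you later say "at least one neighbor $a\in\mathcal{N}(\{b\})$ must have been removed or killed" (it must be \emph{all} of them, as you correctly stated earlier), and the claim that killing $A$-neighbors indirectly "is never cheaper than simply removing them" is left as an assertion. The precise argument is two lines: order the failures in time and suppose some $b\in Y_D$ fails while having a neighbor $a\in\mathcal{N}(\{b\})\setminus X$. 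Since the star topology keeps $b$ connected to $S_B$, $b$ can only fail by losing all $A$-support, so $a$ must have failed strictly before $b$; but $a\notin X$ can fail only by losing all of its $B$-neighbors, and by bidirectionality $b$ is one of them, so $b$ failed strictly before $a$ --- a contradiction. Hence every feasible removal set $X$ satisfies $\mathcal{N}(Y_D)\subseteq X$, which gives not only $|X|\ge|\mathcal{N}(Y_D)|$ but that $\mathcal{N}(Y_D)$ is the unique smallest such set, exactly the statement of the lemma. With that substitution your proof is complete and matches the paper's intent; as written, the decisive step is only gestured at.
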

\begin{proof}
This is due to the fact that in bidirectional interdependent networks with star topologies, failures cascade only in one stage.
\end{proof}
Note that by Lemma \ref{NeighborRemoval}, 
\begin{align}
	\mathcal{MR}(D) & = \min \{\mathcal{N}(Y_D): \forall Y_D \in B, |Y_D|=D \}
\end{align}

By Lemma \ref{NeighborRemoval}, It is easy to see that in bidirectional interdependent networks with star topology, metric $\mathcal{MRB}(D)$ can be obtained directly from metric $\mathcal{MR}(D)$ via equation \ref{Relation_MPFR_OneSided}. Thus, it is enough to only focus on evaluating metric $\mathcal{MR}(D)$

\begin{align}
\mathcal{MRB}(D)& =  \min\{\mathcal{MR}(D-1)+1,\mathcal{MR}(D)\} \nonumber \\
			& =		\min_{i=0,\cdots,D}\{\mathcal{MR}(i)+D-i\} \label{Relation_MPFR_OneSided}
\end{align}

It's worth reminding the readers that all the analysis in the rest of this paper are focused on ``bidirectional'' interdependent networks with ``star'' topology unless mentioned otherwise.

\subsection{Formulation}
Here, we provide an ILP formulation for evaluating metric $\mathcal{MR}(D)$. Let $N_1$ denote the number of nodes in network $A$ and $N_2$ denote the number of nodes in network $B$. Moreover, let $X$ denote the set of binary variables associated to the nodes in network $A$ where $X_i=1$ if node $i$ is removed, and $X_i=0$ otherwise. Similarly, let $Y$ denote the set of binary variables associated to the nodes in network $B$ where $Y_j=1$ if node $j$ fails due to the cascading effect, and $Y_j=0$ otherwise. Our formulation is as follows.

\begin{subequations}
\begin{align}
\min \quad &\sum_{i=1}^{N_1} X_i \label{Objective_ILP}\\
\mbox{s.t.}	\quad & Y_j \leq X_i \quad (i,j) \in E  \label{XYrelation}\\
									\quad & \sum_{j=1}^{N_2} Y_j \geq D  \label{FinalFailureD}\\		
									\quad & X_i, Y_j \in \{0,1\}	\label{IntegerConstraint}
\end{align}
\label{ILP_Star_Bi}
\end{subequations}

Here, the objective is to minimize the number of node removals from network $A$. Constraint (\ref{XYrelation}) shows that node $Y_j$ from network $B$ fails if all of its direct neighbors in network $A$ are removed. Moreover, constraint (\ref{FinalFailureD}) enforces the failure of at least $D$ nodes in network $B$.

\subsection{Complexity}
In this section, we show that evaluating $\mathcal{MR}(D)$ is an NP-complete problem in general; however, for certain values of $D$ it can be solved in polynomial time.

\begin{theorem}\label{MRD_hardness}
For arbitrary values of $D$, finding the $\mathcal{MR}(D)$ in a bidirectional interdependent network with star topology is an NP-complete problem.
\end{theorem}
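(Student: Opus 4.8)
The plan is to establish NP-completeness of the decision version of the problem: given a bidirectional interdependent network with star topologies and integers $D$ and $\k$, decide whether $\mathcal{MR}(D)\le\k$. Membership in NP is straightforward: by Lemma~\ref{NeighborRemoval}, a minimum-cardinality set of $A$-nodes whose removal fails exactly a given $Y_D\subseteq B$ is $\mathcal{N}(Y_D)$, so a subset $Y_D$ with $|Y_D|=D$ is a polynomial-size certificate and one only needs to verify $|\mathcal{N}(Y_D)|\le\k$. Moreover, again by Lemma~\ref{NeighborRemoval}, $\mathcal{MR}(D)=\min\{|\mathcal{N}(Y_D)|:\ Y_D\subseteq B,\ |Y_D|=D\}$, so what must be shown hard is the purely combinatorial problem on the bipartite interdependency graph between $A$ and $B$: choose $D$ nodes of $B$ whose combined neighbourhood in $A$ is as small as possible.

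For hardness I would reduce from \textsc{Clique}: given a graph $H=(V,E_H)$ of maximum degree $\Delta$ and an integer $k$, does $H$ contain a clique on $k$ vertices? We may assume $2\le k$ and $\Delta\ge k-1$ (otherwise there is trivially no $k$-clique) and that $H$ has no isolated vertex. Construct an interdependent network as follows. Network $B$ has one node $b_v$ for each $v\in V$. Network $A$ has one node $a_e$ for each $e\in E_H$, together with, for each vertex $v$, a set of $\Delta-\deg_H(v)$ extra ``padding'' nodes. Put a bidirectional interdependency edge between $a_e$ and $b_v$ exactly when $v$ is an endpoint of $e$, and between each padding node of $v$ and $b_v$; finally add a source $S_A$ (resp.\ $S_B$) adjacent to every node of $A$ (resp.\ $B$), giving the required star topologies. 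Every node then has at least one neighbour, every $A$-node receives an edge from $B$ and conversely, so this is a valid operating network in the class $G$ of the theorem, and it is built in polynomial time. Set $D:=k$ and $\k:=\Delta k-\binom{k}{2}$.

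The heart of the argument is the identity
\[
|\mathcal{N}(S)|\;=\;\Delta\,|S|\;-\;e_H(S)\qquad\text{for every }S\subseteq B,
\]
where $S$ is identified with the corresponding vertex set of $H$ and $e_H(S)$ is the number of edges of $H$ with both endpoints in $S$: the neighbourhood of $b_v$ consists of $\deg_H(v)$ edge-nodes and $\Delta-\deg_H(v)$ padding nodes, so each chosen node contributes exactly $\Delta$ to the union $\mathcal{N}(S)$, except that each edge inside $S$ would be counted twice and hence is subtracted once. Consequently $\mathcal{MR}(k)=\Delta k-\max\{e_H(S):|S|=k\}$, and since $e_H(S)\le\binom{k}{2}$ with equality precisely when $S$ induces a clique, we get $\mathcal{MR}(D)\le\k$ if and only if $H$ has a $k$-clique. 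Together with NP-membership this proves NP-completeness.

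The step I expect to be the real obstacle is obtaining exactly this identity. A naive reduction that uses the plain vertex--edge incidence bipartite graph of $H$ fails, because minimizing $|\mathcal{N}(S)|$ would then favour a set of low-degree vertices rather than a dense (clique-like) set; the padding nodes — equivalently, the fact that \textsc{Clique} is already NP-hard on $\Delta$-regular graphs — are precisely what turns ``minimize the union of neighbourhoods'' into ``maximize the number of internal edges''. A secondary point, easy but worth stating explicitly, is to check that the gadget genuinely lies in the network class $G$ (operating, star topologies on each side, a single source per side, every node with an outgoing edge), so that the hardness is not an artefact of leaving that class. Finally one should note that the theorem only claims hardness for \emph{arbitrary} $D$: for small fixed $D$ (e.g.\ $D=1$) the quantity $\mathcal{MR}(D)$ can be read off directly, consistent with the remark opening this subsection.
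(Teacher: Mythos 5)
Your proof is correct, but it follows a genuinely different route from the paper's. The paper stays with the equivalent combinatorial problem given by Lemma~\ref{NeighborRemoval} and passes to the bipartite complement $G'$: minimizing $|\mathcal{N}(Y_D)|$ over $|Y_D|=D$ in $G$ corresponds to maximizing the side $A\setminus X$ of a biclique $(A\setminus X,\,Y_D)$ in $G'$, and the paper then argues that a polynomial-time algorithm for $\mathcal{MR}(D)$ would decide the Balanced Complete Bipartite Subgraph problem --- essentially a Turing-style reduction from balanced biclique. You instead give a direct Karp reduction from \textsc{Clique} using degree-padding gadgets, hinging on the exact identity $|\mathcal{N}(S)|=\Delta|S|-e_H(S)$, so that $\mathcal{MR}(k)\le \Delta k-\binom{k}{2}$ if and only if $H$ has a $k$-clique; this identity is verified by the standard count $|\{e:\ e\cap S\neq\emptyset\}|=\sum_{v\in S}\deg_H(v)-e_H(S)$ together with the $\Delta-\deg_H(v)$ private padding nodes, and your gadget does lie in the required class (bidirectional edges, star topologies, every node with at least one interdependency edge). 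You also make NP membership explicit via the certificate $Y_D$, which the paper leaves implicit. What each approach buys: your reduction is self-contained, needs only the NP-completeness of \textsc{Clique}, and exposes that the problem is minimum $k$-union / densest-$k$-subgraph in disguise; the paper's complement-biclique correspondence is weaker as stated (it is phrased as ``if we can solve $\mathcal{MR}(D)$ for every $D$''), but it is precisely the construction reused in the paper's inapproximability argument (Appendix~\ref{Inapproximability_proof}), where hardness-of-approximation for maximum balanced biclique is transferred to $\mathcal{MR}(D)$ --- a consequence your \textsc{Clique} reduction does not directly yield.
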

\begin{proof}
The proof is based on a reduction from the problem of balanced complete bipartite subgraph which is known to be NP-complete \cite{Garey}. The details can be found in Appendix \ref{MRD-hardness-proof}
\end{proof}

\begin{corollary}\label{MRBD_hardness}
For arbitrary values of $D$, finding the $\mathcal{MRB}(D)$ in a bidirectional interdependent network with star topology is an NP-complete problem.
\end{corollary}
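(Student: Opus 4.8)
The plan has two parts. Membership in NP is immediate: a certificate is the list of removed nodes from networks $A$ and $B$, and since failures in a bidirectional star network propagate in a single stage (Lemma \ref{NeighborRemoval} and the discussion preceding it), one verifies in polynomial time that at least $D$ nodes of network $B$ are ultimately failed. The substance is the NP-hardness, which I would establish by reducing the decision problem ``is $\mathcal{MR}(D)\le k$?'' --- NP-hard by Theorem \ref{MRD_hardness} --- to the decision problem ``is $\mathcal{MRB}(D')\le k$?''.

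The obstacle is that relation \ref{Relation_MPFR_OneSided} expresses $\mathcal{MRB}(D)=\min_{i=0,\dots,D}\{\mathcal{MR}(i)+D-i\}$, a lossy lower envelope of the $\mathcal{MR}$-values: because $\mathcal{MR}(0)=0$ one always has $\mathcal{MRB}(D)\le D$, so a single $\mathcal{MRB}$-query need not recover a large value of $\mathcal{MR}(D)$. I would get around this with a padding step that makes direct removals from network $B$ uncompetitive. Given an $\mathcal{MR}$-instance --- a bidirectional star network $G$ with $N_1$ nodes in $A$, $N_2$ nodes in $B$, and integers $D$ (we may assume $D\le N_2$) and $k$ --- build $G'$ by replacing every node $v$ of $B$ with $m:=N_1+1$ twin copies $v^1,\dots,v^m$, each attached to $S_B$ and carrying exactly the same bidirectional interdependency edges to $A$ that $v$ had; network $A$ is left unchanged, and set $D':=mD$. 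It is routine to check that $G'$ is again an operating bidirectional interdependent network with star topology in which every node has an outgoing edge, and that $G'$ has size polynomial in that of $G$.

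The crux is the identity $\mathcal{MR}_{G'}(i)=\mathcal{MR}_{G}(\lceil i/m\rceil)$ for $0\le i\le mD$. This rests on the one-stage cascade rule: under a set $R_A\subseteq A$ of removed nodes, a copy $v^j$ fails if and only if every $A$-neighbor of $v$ lies in $R_A$ --- a condition independent of $j$ and identical to the one under which $v$ fails in $G$. Hence the failed copies always form complete ``clusters'' (so no cheap partial-cluster failure exists), and failing $i$ copies as cheaply as possible amounts to fully covering $\lceil i/m\rceil$ original $B$-nodes at cost $\mathcal{MR}_G(\lceil i/m\rceil)$. Substituting into relation \ref{Relation_MPFR_OneSided} and grouping the index $i$ by $j=\lceil i/m\rceil$ gives $\mathcal{MRB}_{G'}(mD)=\min_{j=0,\dots,D}\{\mathcal{MR}_G(j)+m(D-j)\}$. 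Finally, since $\mathcal{MR}_G(D)\le N_1<m$ (removing every node of $A$ fails all of $B$), each term with $j<D$ is at least $m>\mathcal{MR}_G(D)$, whereas the $j=D$ term equals $\mathcal{MR}_G(D)$; therefore $\mathcal{MRB}_{G'}(mD)=\mathcal{MR}_G(D)$, and $\mathcal{MRB}_{G'}(mD)\le k$ holds if and only if $\mathcal{MR}_G(D)\le k$. Together with Theorem \ref{MRD_hardness} and NP membership this gives NP-completeness. The step I expect to demand the most care is the clusterwise-cascade identity for $\mathcal{MR}_{G'}$ (and the accompanying observation that a proper nonempty subset of a cluster cannot be made to fail except by paying for each such node directly); everything else is bookkeeping. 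Alternatively, one can fold the same padding directly into the balanced-complete-bipartite-subgraph reduction of Appendix \ref{MRD-hardness-proof} and prove the hardness of $\mathcal{MRB}$ from scratch.
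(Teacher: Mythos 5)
Your proposal is correct and is essentially the paper's own reduction: Appendix \ref{MRBD-hardness-proof} likewise replaces each node of $B$ by a cluster of $W$ identical copies (with $W>N_1+N_2$ rather than your $m=N_1+1$, either of which exceeds $N_1$ and so makes direct removals in $B$ uncompetitive) and sets $D'=WD$ so that $\mathcal{MRB}_{G'}(WD)=\mathcal{MR}_G(D)$. Your derivation via relation \ref{Relation_MPFR_OneSided} and the clusterwise identity $\mathcal{MR}_{G'}(i)=\mathcal{MR}_G(\lceil i/m\rceil)$ just makes explicit the step the paper states informally, so no substantive difference remains.
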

\begin{proof}
The proof is based on a reduction from NP-complete problem $\mathcal{MR}(D)$ (Theorem \ref{MRD_hardness}). The details can be found in Appendix \ref{MRBD-hardness-proof}.
\end{proof}

\begin{proposition}\label{Polynomial_MRD}
$\mathcal{MR}(D)$ can be found in polynomial time for values of $D=k$ and $D=N_2-k$ where $k$ is a constant. In particular, for $D=1$, $\mathcal{MR}(D)$ is the minimum degree of nodes in network $B$, and for $D=N_2$, $\mathcal{MR}(D)$ is the size of network $A$; i.e. $N_1$.
\end{proposition}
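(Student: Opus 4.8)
The plan is to recall from Lemma \ref{NeighborRemoval} that $\mathcal{MR}(D) = \min\{|\mathcal{N}(Y_D)| : Y_D \subseteq B, |Y_D| = D\}$, so evaluating the metric amounts to finding a $D$-subset of nodes in network $B$ whose neighborhood in network $A$ is smallest. The strategy for the case $D = k$ constant is brute force: there are only $\binom{N_2}{k} = O(N_2^k)$ such subsets, and for each we compute $|\mathcal{N}(Y_D)|$ in $O(E)$ time by scanning the incoming edges. Since $k$ is a constant, $O(N_2^k \cdot E)$ is polynomial in the input size, which settles the first half. For the specialization $D = 1$, the minimizing singleton $\{j\}$ has $|\mathcal{N}(\{j\})| = \deg(j)$, the in-degree of node $j$ from network $A$; taking the minimum over all $j$ gives $\mathcal{MR}(1) = \min_{j \in B} \deg(j)$, the minimum degree of nodes in network $B$.

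For the case $D = N_2 - k$, I would use a complementary counting argument: instead of choosing the $D$ failing nodes, choose the $k$ surviving nodes $Z = B \setminus Y_D$. Then $\mathcal{N}(Y_D) = \mathcal{N}(B \setminus Z)$, and minimizing $|\mathcal{N}(B\setminus Z)|$ over all $k$-subsets $Z$ can again be done by enumerating the $\binom{N_2}{k} = O(N_2^k)$ choices of $Z$ and computing the neighborhood of the remaining $N_2 - k$ nodes in $O(E)$ time each — still polynomial for constant $k$. For the extreme case $D = N_2$, every node in $B$ must fail, so $Y_D = B$ and $\mathcal{N}(B)$ is the set of all nodes in $A$ that have at least one outgoing edge; since the model assumes every node in $A$ has at least one outgoing edge, $\mathcal{N}(B) = A$ and hence $\mathcal{MR}(N_2) = N_1$.

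The only subtlety — and the step I would state carefully rather than gloss over — is justifying that brute-force enumeration is legitimately polynomial: this relies on $k$ being a fixed constant independent of the input, so that $N_2^k$ is a polynomial (of fixed degree $k$) in $N_2$; if $D$ or $N_2 - D$ were allowed to grow with the input, this argument collapses, which is exactly the gap that Theorem \ref{MRD_hardness} exploits. I would also note explicitly, for the $D = N_2$ claim, that the hypothesis ``every node has at least one outgoing edge'' (present in the model and in Theorem \ref{Uni_Bi_comp}) is what forces $\mathcal{N}(B)$ to be all of $A$ rather than a proper subset; without it one would only get that $\mathcal{MR}(N_2)$ equals the number of non-isolated nodes in $A$.
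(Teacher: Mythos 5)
Your proposal is correct and follows essentially the same route as the paper: apply Lemma \ref{NeighborRemoval} and enumerate the $\binom{N_2}{k}=\binom{N_2}{N_2-k}=O(N_2^k)$ candidate failure sets, read off the $D=1$ case as a minimum-degree computation, and for $D=N_2$ use the fact that every node of $A$ has an edge into $B$ so that $\mathcal{N}(B)=A$ (the paper phrases this last step as a contradiction via one-stage cascading, but it is the same observation). Your explicit remarks on why constant $k$ is essential and on the role of the ``at least one outgoing edge'' assumption are sound and consistent with the paper's hypotheses.
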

\begin{proof}
By Lemma \ref{NeighborRemoval}, $\mathcal{MR}(D) = \min \{\mathcal{N}(Y_D): \forall Y_D \in B\}$. For $D=k$ and $D=N_2-k$, the number of combinations of $Y_D$ is polynomial in $D$ (i.e., $C(N_2,k)=C(N_2,N_2-k)=O(N_2^k)$); thus, $\mathcal{MR}(D)$ can be found in polynomial time. 

For $D=1$, clearly the target node in network $B$ is the one with the minimum number of neighbors in network $A$; thus, $\mathcal{MR}(D)$ is the minimum degree of nodes in network $B$.

For $D=N_2$, we prove our claim by contradiction. Suppose node $i \in A$ has not been removed. Since failures cascade only in one stage, removal of no set of nodes in network $A$ can lead to the failure of node $i \in A$. Thus, $i$ remains an operating node, which means that it is connected to at least one node $j \in B$. Therefore, node $j \in B$ is operating, too; i.e. $D < N_2$ which is a contradiction.
\end{proof}

Next, we show that not only one cannot evaluate the exact value of $\mathcal{MR}(D)$ in polynomial time (unless $\mathcal{P=NP}$), one cannot approximate this metric in polynomial time.

\begin{theorem}
There exists no PTAS to provide an r-approximation for the $\mathcal{MR}(D)$ for some values of $r>1$.
\end{theorem}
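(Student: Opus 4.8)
The plan is to promote the NP-hardness of Theorem~\ref{MRD_hardness} to an inapproximability statement by constructing a \emph{gap-preserving} reduction, using the reformulation supplied by Lemma~\ref{NeighborRemoval}. By that lemma $\mathcal{MR}(D)=\min\{|\mathcal{N}(Y_D)|:Y_D\subseteq B,\ |Y_D|=D\}$, so computing $\mathcal{MR}(D)$ is exactly the ``minimum $D$-union'' problem on the family of neighbourhoods $\{N(b):b\in B\}\subseteq A$ --- equivalently, it is the problem of picking $D$ nodes of $B$ with the largest common set of \emph{non}-neighbours in $A$, which places it in the balanced-biclique / densest-subgraph family. The objective is a polynomial-time map, together with a choice of $D$, such that for some absolute constant $r>1$ the ``yes'' instances of a hard source problem yield networks with $\mathcal{MR}(D)\le\tau$ and the ``no'' instances yield networks with $\mathcal{MR}(D)\ge r\tau$.

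First I would measure how much gap the reduction already used for Theorem~\ref{MRD_hardness} gives. As written it separates the two cases only by an \emph{additive} constant sitting on an optimum of order $N_1$, which is harmless to a PTAS: a PTAS may take time that blows up with the requested accuracy, so it need not resolve an additive-$1$ gap on a polynomially large value, and the naive cure of tensoring the $A$--$B$ adjacency to compound the ratio costs a super-polynomial blow-up. The fix is to feed the reduction a source that already carries a \emph{multiplicative} inapproximability gap --- a densest-subgraph / maximum-balanced-biclique problem, or Max Clique routed through the classical clique-to-biclique reduction --- and to push it through a gap-aware variant of the Theorem~\ref{MRD_hardness} construction, verifying stage by stage that the ``yes'' regime keeps $\mathcal{MR}(D)$ a fixed factor below the value forced in the ``no'' regime. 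The model's side conditions (every node has an outgoing edge and receives an interdependency edge) are accommodated by padding with a constant number of auxiliary nodes that enter every $\mathcal{N}(Y_D)$ identically, hence do not move the ratio.

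Granting such a reduction the rest is routine. A polynomial-time $r$-approximation for $\mathcal{MR}(D)$, run on the reduced network and compared against the value separating $\tau$ from $r\tau$, would decide the source problem in polynomial time, contradicting $\mathcal{P}=\mathcal{NP}$; and a PTAS supplies such an $r$-approximation (invoke it with accuracy $\epsilon=r-1>0$), so no PTAS exists.

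The step I expect to be the genuine obstacle is the manufacture of the multiplicative gap inside the reduction itself. The Theorem~\ref{MRD_hardness} construction is most naturally set up so that it turns ``common neighbours of a $D$-set'' into ``$N_1$ minus common neighbours'', which degrades a multiplicative advantage in the source into merely an additive one in $\mathcal{MR}(D)$; so the construction has to be re-engineered so that the quantity that is small in the ``yes'' case is $\mathcal{MR}(D)$ itself --- a genuinely sparse region of the $A$--$B$ bipartite graph --- and, dually, the ``no'' guarantee has to exclude \emph{every} sparse $D$-subset, not merely bicliques. Identifying which source hardness is strong enough in the parameter window one can actually realize, and checking that the composed transformation transmits its gap multiplicatively, is where essentially all the effort lies; the final threshold test and the size bookkeeping are mechanical.
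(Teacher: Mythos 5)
You have correctly located the right source problem (maximum balanced biclique, via the complement of the $A$--$B$ adjacency) and correctly identified the central tension (complementation turns a multiplicative gap in the biclique size into a merely additive gap in $\mathcal{MR}(D)$), but your proposal stops exactly there: the ``gap-aware variant of the Theorem~\ref{MRD_hardness} construction'' that is supposed to make $\mathcal{MR}(D)$ itself small on yes-instances is never constructed, and you explicitly defer it as ``where essentially all the effort lies.'' As written, nothing beyond the NP-hardness of Theorem~\ref{MRD_hardness} has been established, so the proposal has a genuine gap precisely at the step the theorem is about.

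Moreover, your diagnosis that the complement route is ``harmless to a PTAS'' and must be re-engineered is what the paper's argument (Appendix~\ref{Inapproximability_proof}) shows to be unnecessary; the reduction is run in the opposite direction and the additive loss is absorbed. Assume a PTAS for $\mathcal{MR}(D)$ and set $Z_D=N-W_D$, $Z_D^*=N-W_D^*$, so that the largest balanced biclique in the complement graph is $\max_D\min\{Z_D^*,D\}$. In the only nontrivial regime one may take $Z_D^*\geq N/k$ for a constant $k>1$ (for constant $D$ or constant $Z_D^*$ the problem is solved exactly by enumeration, as in Proposition~\ref{Polynomial_MRD}), and then $W_D\leq rW_D^*$ gives $Z_D\geq rZ_D^*-N(r-1)\geq [r-k(r-1)]\,Z_D^*$: the additive error $N(r-1)$ becomes a multiplicative factor that tends to $1$ as $r\to 1$. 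Since the PTAS hypothesis supplies every $r>1$, in particular $r\leq 1+\frac{1-mN^{-\epsilon}}{k-1}$, this yields an $mN^{-\epsilon}$-factor approximation of maximum balanced biclique, contradicting the polynomial-factor inapproximability of that problem \cite{feige2004hardness, khot2006ruling}. The two missing ingredients in your plan are exactly these: (i) the observation $Z_D^*=\Omega(N)$ in the hard regime, which converts additive into multiplicative loss, and (ii) the use of polynomial-factor hardness of balanced biclique rather than a fixed constant-factor gap instance, so no re-engineered gap-preserving construction and no choice of a ``strong enough source in the right parameter window'' is needed at all.
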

\begin{proof}
The proof is based on the inapproximability of the balanced biclique problem \cite{feige2004hardness, khot2006ruling}. The details can be found in Appendix \ref{Inapproximability_proof}.
\end{proof}

Next, we show several heuristics that provide nearly-optimal approximations for metric $\mathcal{MR}(D)$ in practice.

\subsection{Heuristics}

In this section, first we propose three heuristics and then, compare their performances using simulation results.

\subsubsection{Greedy Algorithm}
The first algorithm is a Greedy approach that only uses the adjacency matrix of the network, and works as follows.

\begin{table}[h]
	\begin{tabular}{ll}	
	\hline
			& \textbf{Greedy Algorithm}\\
	\hline
			1 & Initialize the removal and failure sets as $R=\phi$ and $F=\phi$;\\
			2 & Select the node with minimum degree in network $B$, and add it to $F$;\\
				& If there are several nodes with minimum degree, select one randomly;\\
			3 & Remove all nodes in network $A$ that are attached to the node selected in \\
			  & Step 2. Add these nodes to set $R$;\\
			4 & Remove all the edges attached to the nodes in $F$ and $R$. Update degrees;\\
			5 & Repeat previous steps until $|F|=D$;\\
			6 & Return $|R|$.	\\
	\hline	
	\end{tabular}
\end{table}

In each iteration, the greedy algorithm removes the minimum number of nodes from network $A$ required for the the failure of one additional node in network $B$. Therefore, after at most $D$ iterations, removal of nodes in set $R$ leads to the failure of $D$ nodes in network $B$; i.e. set $F$.

\begin{proposition}\label{Greedy_Performance}
In the worst case, the solution of greedy algorithm is no more than $D$ times the optimal solution.
\end{proposition}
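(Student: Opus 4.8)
The plan is to bound the greedy solution against the optimal one by tracking how many nodes the greedy algorithm removes in each of its (at most $D$) iterations, and comparing each of those counts against $\mathcal{MR}(D)$ itself. Let $\OPT = \mathcal{MR}(D)$ be the optimal value, and let $Y_D^{\ast}$ be an optimal target set of $D$ nodes in network $B$, so that $|\mathcal{N}(Y_D^{\ast})| = \OPT$ by Lemma~\ref{NeighborRemoval}. The key observation is that at the start of \emph{every} greedy iteration, the remaining (not-yet-failed) nodes of $Y_D^{\ast}$ form a nonempty set $Y'$ whose neighborhood, restricted to the not-yet-removed nodes of network $A$, has size at most $\OPT$; in particular at least one node $v \in Y'$ has current degree at most $\OPT$. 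Therefore the minimum-degree node chosen by the greedy rule in Step~2 has current degree at most $\OPT$, so Step~3 removes at most $\OPT$ nodes from network $A$ in that iteration.

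Given that, I would finish as follows. The greedy algorithm terminates after $k \le D$ iterations (each iteration adds at least one new node to $F$), and in each iteration it adds at most $\OPT$ nodes to $R$; hence $|R| \le k \cdot \OPT \le D \cdot \OPT = D \cdot \mathcal{MR}(D)$, which is the claimed bound. It remains only to justify the italicized claim above. This is where the argument must be made carefully: once greedy has failed some nodes of network $B$ and removed their neighbors, one must check that the ``budget'' of the optimal solution is not exhausted. But since failures cascade only in one stage (bidirectional star topology), a node $v \in Y_D^{\ast}$ that has not yet failed still has all of its original neighbors among the currently-present nodes of $A$ \emph{except} possibly those already removed by greedy — and the removed ones only help, since they reduce $v$'s current degree. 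Thus the current degree of such a $v$ is at most its original degree, which is at most $|\mathcal{N}(Y_D^{\ast})| = \OPT$. So as long as fewer than $D$ nodes of network $B$ have failed, at least one node of $Y_D^{\ast}$ survives and witnesses that the current minimum degree in $B$ is at most $\OPT$.

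The main obstacle — really the only subtlety — is making precise that the minimum degree available to greedy at each step is genuinely bounded by $\OPT$ and not by something that could grow as nodes get removed; the single-stage cascade property of Lemma~\ref{NeighborRemoval} is exactly what rules out any such pathology, since removed $A$-nodes never ``resurrect'' and edge deletions can only decrease degrees. One should also note that the bound is essentially the best possible for this algorithm: it is not hard to construct instances (e.g.\ a network $B$ in which every singleton needs one removal but a cleverly chosen $D$-set shares all its neighbors) where greedy pays close to $D$ times the optimum, so the factor $D$ in the statement cannot be improved in general.
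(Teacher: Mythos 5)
Your proof is correct and follows essentially the same approach as the paper: every node of the optimal failure set has degree at most $\mathcal{MR}(D)$, so the minimum-degree node available to greedy in each of its at most $D$ iterations costs at most $\mathcal{MR}(D)$ removals, giving the factor-$D$ bound. Your explicit per-iteration accounting (that surviving nodes of the optimal set retain degree at most $\mathcal{MR}(D)$ after updates) just spells out a step the paper's contradiction argument leaves implicit, and your closing remark on tightness matches the paper's worst-case construction.
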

\begin{proof}
By contradiction - Suppose that $\mathcal{MR}(D)=X$; thus, the degree of each node in the optimal failure set in network $B$ is at most $X$. Moreover, suppose that the greedy algorithm returns a removal set of size $X'$ where $X' > DX$. Thus, greedy algorithm has selected a node in network $B$ with degree of larger than $X$. This is contradiction to the fact that greedy starts by selecting nodes in network $B$ with minimum degree, and there are at least $D$ nodes with degree smaller than or equal to $D$.

Next, we show that this bound can be tight. Consider a bipartite graph where network $A$ has $X(D+1)$ nodes divided into $D+1$ batches of equal sizes, and network $B$ has $2D$ nodes divided into two batches of equal sizes. Connect each node $i$ in the first batch of network $B$ to all the $X$ nodes in the $i^{th}$ batch in network $A$. Moreover, connect all of the $D$ nodes in the second batch in network $B$ to all of the $X$ nodes in the last batch in network $A$ (See Figure \ref{WorstCase}). It is easy to see that $\mathcal{MR}(D)=X$ where the greedy algorithm could select $XD$ nodes. This is due to the fact that all nodes in network $B$ have degree $X$. Thus, greedy algorithm could select all nodes from the first batch in network $B$ which requires $XD$ removals from network $A$.

\begin{figure}[ht]
\centering
\includegraphics[scale=0.4]{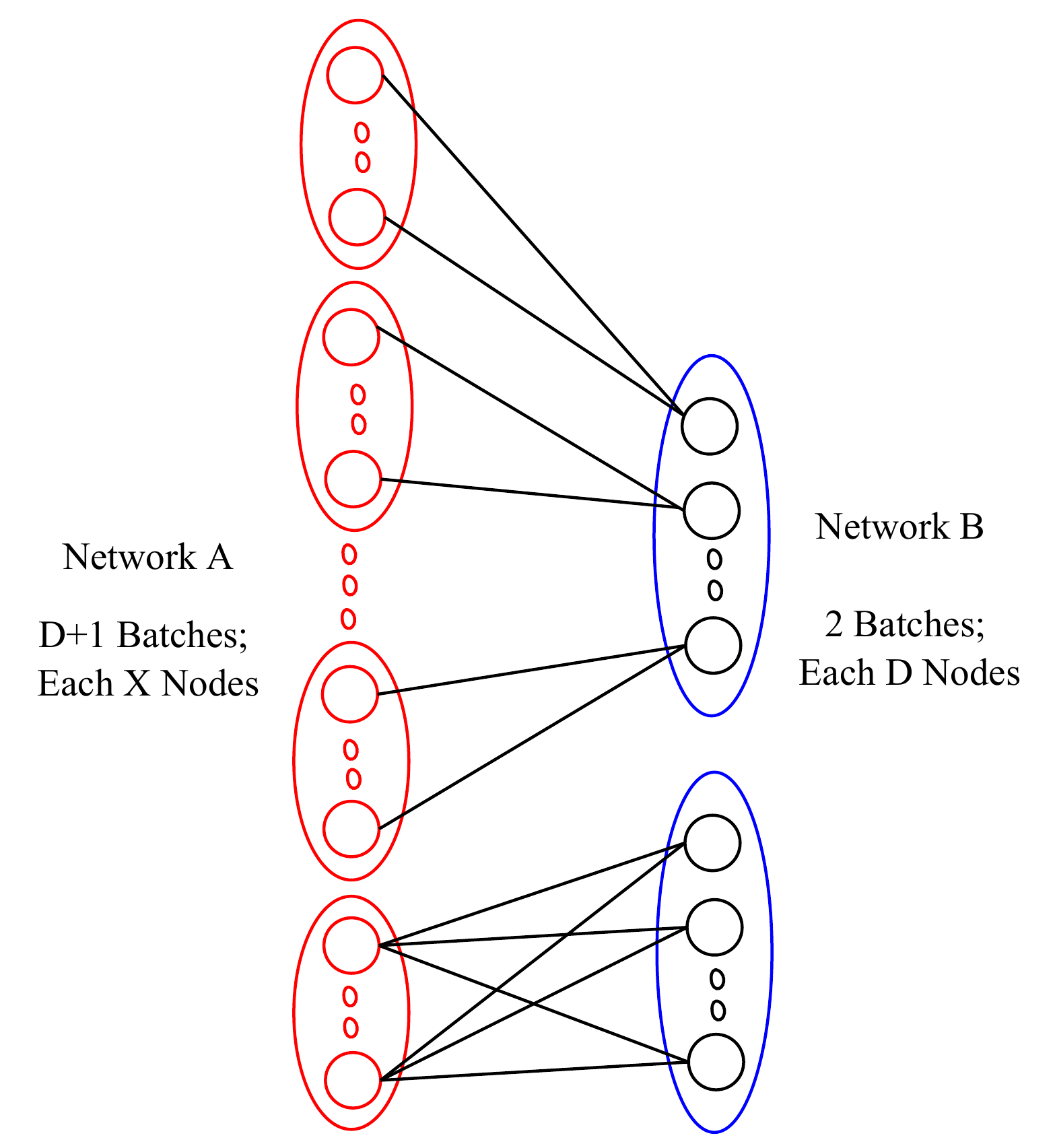}
\caption{A scenario where worst-case bound of greedy algorithm is tight.}
\label{WorstCase}
\end{figure}

\end{proof}

Note that in the example of Proposition \ref{Greedy_Performance}, all nodes in network $B$ have equal degrees, and the greedy algorithm selects one of them randomly. Therefore, although this algorithm could achieve the worst-case solution, the probability of this event is $(1-\frac{D}{2D})(1-\frac{D-1}{2D-1})\cdots(1-\frac{1}{D+1})$ which becomes very small as $D$ increases. Later, in the simulation section, we will show that the greedy algorithm has a good performance in most scenarios.

\subsubsection{Randomized Rounding}

The second algorithm is a modified randomized rounding. Randomized rounding is a widely used technique to solve difficult integer optimization problems. In general, it solves the Linear Program (LP) relaxation of the original ILP formulation, and rounds the solution randomly. In our case, we relax the constraint (\ref{IntegerConstraint}) so that $X$ and $Y$ can take any real value in range $[0,1]$. 

Let $X_i^*$ and $Y_j^*$ be the optimal values of the relaxed LP problem. Our randomized rounding algorithm works as follows.

\begin{table}[h]
	\begin{tabular}{ll}	
	\hline
			& \textbf{Randomized Rounding}\\
	\hline
			1 & Initialize the removal and failure sets as $R=\phi$ and $F=\phi$; \\
			2 & Select each node $Y_j \in B$ with probability $Y_j^*$, and add it to set $F$;\\
			3 & Repeat step 2 until $|F|=D$; i.e. $D$ nodes fail;\\
			4 & Find all the nodes in network $A$ that are attached to the nodes in the failure\\
			  & set $F$. Add all these nodes the set of removals $R$;\\
			6 & Return $|R|$.\\
	\hline	
	\end{tabular}
\end{table}

In this algorithm, we select nodes from network $B$ randomly and independently until $D$ nodes are selected for the failure set $F$. Clearly, nodes with larger values of $Y_j^*$ have a higher chance to be part of set $F$. Later, we will see in the simulation section that for networks that $Y_j^*$ has values close to either 1 or 0, the randomize rounding algorithm provides a nearly-optimal solution.

\subsubsection{Simulated Annealing}
Simulated Annealing is a random search strategy that can be used to find the near optimal solutions for integer problems \cite{Wolsey1998}. Here, we propose two versions of the SA where the difference is in selecting the neighbors.

The first algorithm selects a random neighbor $R'$ of current removal set $R$ by adding, removing or replacing nodes in $R$, and then checking for feasibility; i.e. checking if the new removal set $R'$ leads to the failure of $D$ nodes in network $B$. If the neighbor set $R'$ is feasible and has smaller or equal number of nodes, algorithm moves to $R'$ with probability 1. If $R'$ is feasible but larger; i.e. has an additional node $i \in R' \backslash R$, the algorithm moves to $R'$ with some positive probability proportional to the degree of node $i$ such that neighbors with larger degree nodes are more likely to be selected.

Let $d(i)$ denote the degree of node $i$. The details of first SA algorithm are as follows.

\begin{table}[h]
	\begin{tabular}{ll}	
	\hline
			& \textbf{Simulated Annealing 1}\\
	\hline
			1 & Start with an initial set of node removals $R=R_0$ from $A$ that lead to the \\
			  & failure of $D$ nodes in $B$; Set initial temperature $T$, final temperature $T_F$,  \\
			  & and reduction parameter $r \in (0,1)$; \\
			2 & Repeat the followings for $L$ times:\\
			  & a) Pick a neighbor of $R$, namely $R'$, by either adding, removing or  \\
			  &    replacing one random node in $R$; \\			  
			  & b) set $\Delta=1$, if $|R'|>|R|$; and set $\Delta=-1$, otherwise;\\
			  & c) If $R'$ is feasible; i.e. removal of nodes in $R'$ leads to the failure of $D$ \\
			  &    nodes in $B$, move to the new neighbor according to the following rules:\\
			  & \hspace{2mm} i) if $\Delta = -1$, set $R=R'$ and $F=F'$;\\
			  & \hspace{2mm} ii) if $\Delta = 1$, set $R=R'$ and $F=F'$ with probability \\
			  & \hspace{2mm}     $exp(-\frac{1}{T} (1-\frac{d(i)}{\sum_{i=1}^{N_1} d(i)}))$;\\
			3 & Set $T=rT$;\\
			4 & Repeat steps 2 and 3 until $T < T_F$;\\
			5 & Return $|R|$.\\
	\hline	
	\end{tabular}
\end{table}

Next, we propose another Simulated annealing algorithm which selects a random neighbor $F'$ of failure set $F$ such that $|F'| \geq D$. This guarantees that the removal set $R'$ associated to failure set $F'$ is always feasible. Under this selection, if $R'$ has smaller or equal number of nodes than $R$, the algorithm moves to the new neighbor; otherwise, it moves to the larger neighbor with some positive probability proportional to the increase in size of removal set, where larger $R'$ has lower probability to be selected. The details of algorithm is as follows.

\begin{table}[h]
	\begin{tabular}{ll}	
	\hline
			& \textbf{Simulated Annealing 2}\\
	\hline
			1 & Start with an initial set of node removals $R=R_0$ from $A$ that lead to the \\
			  & failure of $D$ nodes in $B$; Set initial temperature $T$, final temperature $T_F$, \\
			  & and reduction parameter $r \in (0,1)$; \\
			2 & Repeat the followings for $L$ times:\\
			  & a) Pick a \textit{feasible} neighbor of $F$, namely $F'$, according to the following rules:\\
			  & \hspace{2mm} i) if $|F|=D$, either add or replace a random node in $F$ (call it $F'$), \\
			  & \hspace{2mm}    and find the set of removals $R'$ for failure of $F'$;\\			  
			  & \hspace{2mm} ii) if $|F|>D$, randomly add or remove a node from $F$ (call it $F'$),\\
			  & \hspace{2mm}    and find the set of removals $R'$ for failure of $F'$;\\
			  & b) set $\Delta=1$, if $|R'|>|R|$; and set $\Delta=-1$, otherwise;\\
			  & c) Move to the new neighbor according to the following rules:\\
			  & \hspace{2mm} i) if $\Delta = -1$, set $R=R'$ and $F=F'$;\\
			  & \hspace{2mm} ii) if $\Delta = 1$, set $R=R'$ and $F=F'$ with probability \\
			  & \hspace{2mm}     $exp(-\frac{|R'|-|R_0|}{T})$;\\
			3 & Set $T=rT$;\\
			4 & Repeat steps 2 and 3 until $T < T_F$;\\
			5 & Return $|R|$.\\
	\hline	
	\end{tabular}
\end{table}

In practice, we initialize both simulated annealing algorithms with the solution of greedy algorithm to have a good starting point. In addition, instead of returning the final removal set, the algorithm returns the smallest $|R|$ found during all iterations.

\subsubsection{Comparison}

In this section, we compare the performances of our algorithms by running simulations over a set of networks. We also obtain the optimal solution by solving the ILP formulation given by equations (\ref{Objective_ILP})-(\ref{IntegerConstraint}) using CPLEX. The ILP can be solved for small networks; thus, we can compare the performance of our algorithms with the optimal solution.

Since the networks in this paper have bipartite topologies, we generate random bipartite graphs according to the Molloy and Reed model, where every pair of nodes are randomly connected based on the degree of all nodes (See \cite{guillaume2006bipartite} for more details). Here, we consider networks with two types of degree distributions: Type (1) all $N$ nodes on both sides have a binomial degree distribution with average $k$\footnote{We also generated random regular bipartite graphs with degree $k$; since the behavior of regular graphs was very close to graphs with binomial distribution, we do not show the simulation results}, and Type (2) half of nodes in each side has a binomial degree distribution with average $k_1$ and the other half has a binomial distribution with average $k_2$.

Figures \ref{MRD_N100_Binomial_D1_D5_1}-\ref{MRD_N100_Binomial_D1_D5_4} show the performances of our algorithms for type(1) networks of size $N=100$ and average degrees $k=1,\cdots,4$. It can be seen that for $k=1$, the randomized rounding is nearly optimal. However, as $k$ increases, its performance degrades. This is due to the fact that for small degree networks, the optimal solution of the relaxed LP achieves values close to 0 or 1. Thus, approximating these values will give a nearly optimal solution. However, as the degree increases, the values of variables in the relaxed LP are no longer close to 0 or 1; thus, the approximation of these values is no longer close to the optimal solution.

Moreover, as $k$ increases, the performance of the greedy algorithm improves. The reason is that in networks with small degrees, there are fewer nodes in network $B$ that have common neighbors in $A$. Therefore, the greedy algorithm has a lower chance to find them. However, when the degree increases, more nodes share neighbors; thus, the greedy algorithm performs better (See Proposition \ref{Greedy_Performance} for a more detailed argument).

Finally, as expected both simulated annealing algorithms perform better than greedy algorithm. This is due to the fact that the starting point of the simulated annealing algorithm is selected to be the output of the greedy algorithm.

\begin{figure}[ht]
\centering
\subfigure[k=1]
{\label{MRD_N100_Binomial_D1_D5_1}\includegraphics[scale=0.035]{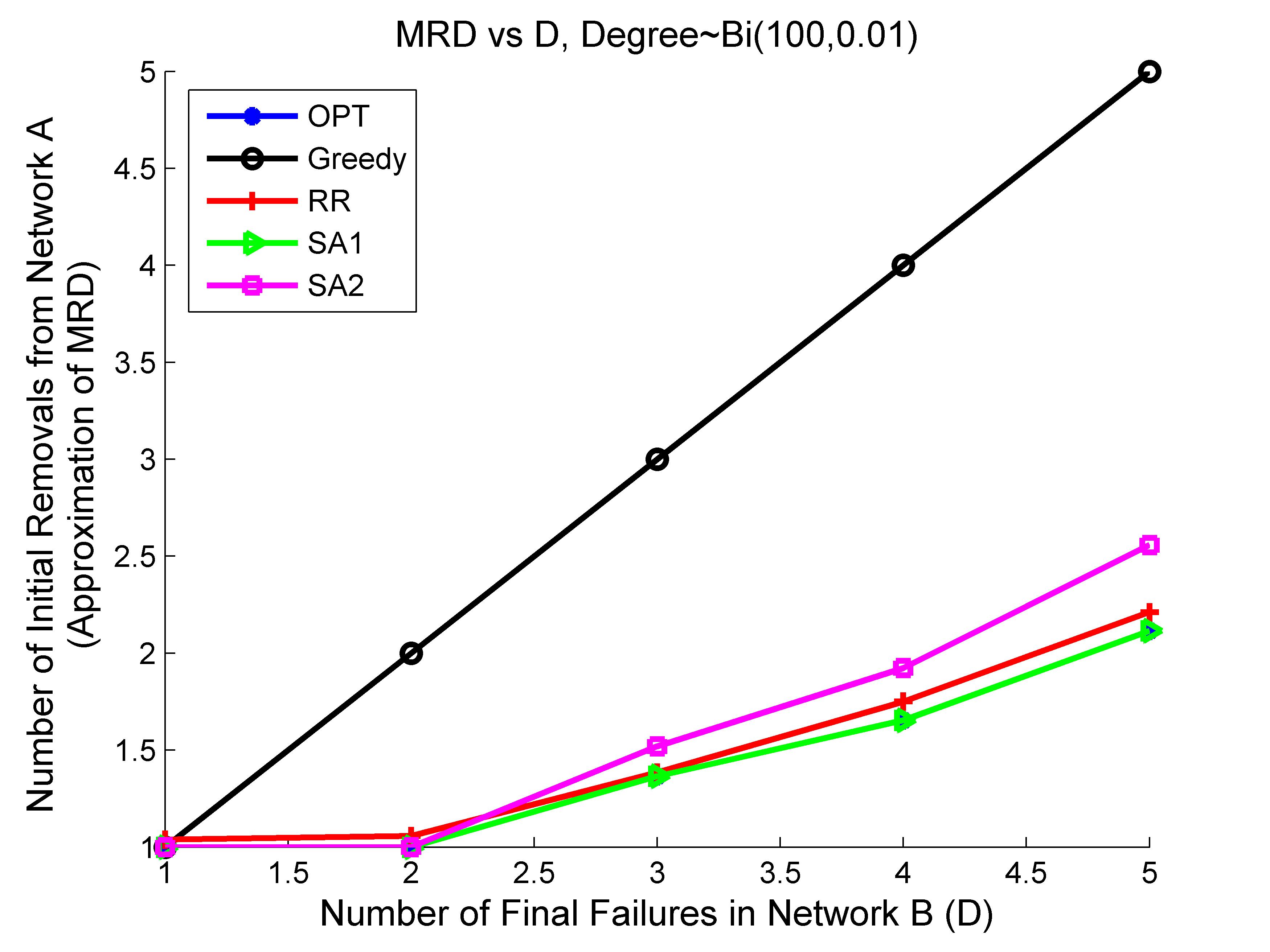}\vspace{-0.3cm}}
\subfigure[k=2]
{\label{MRD_N100_Binomial_D1_D5_2}\includegraphics[scale=0.035]{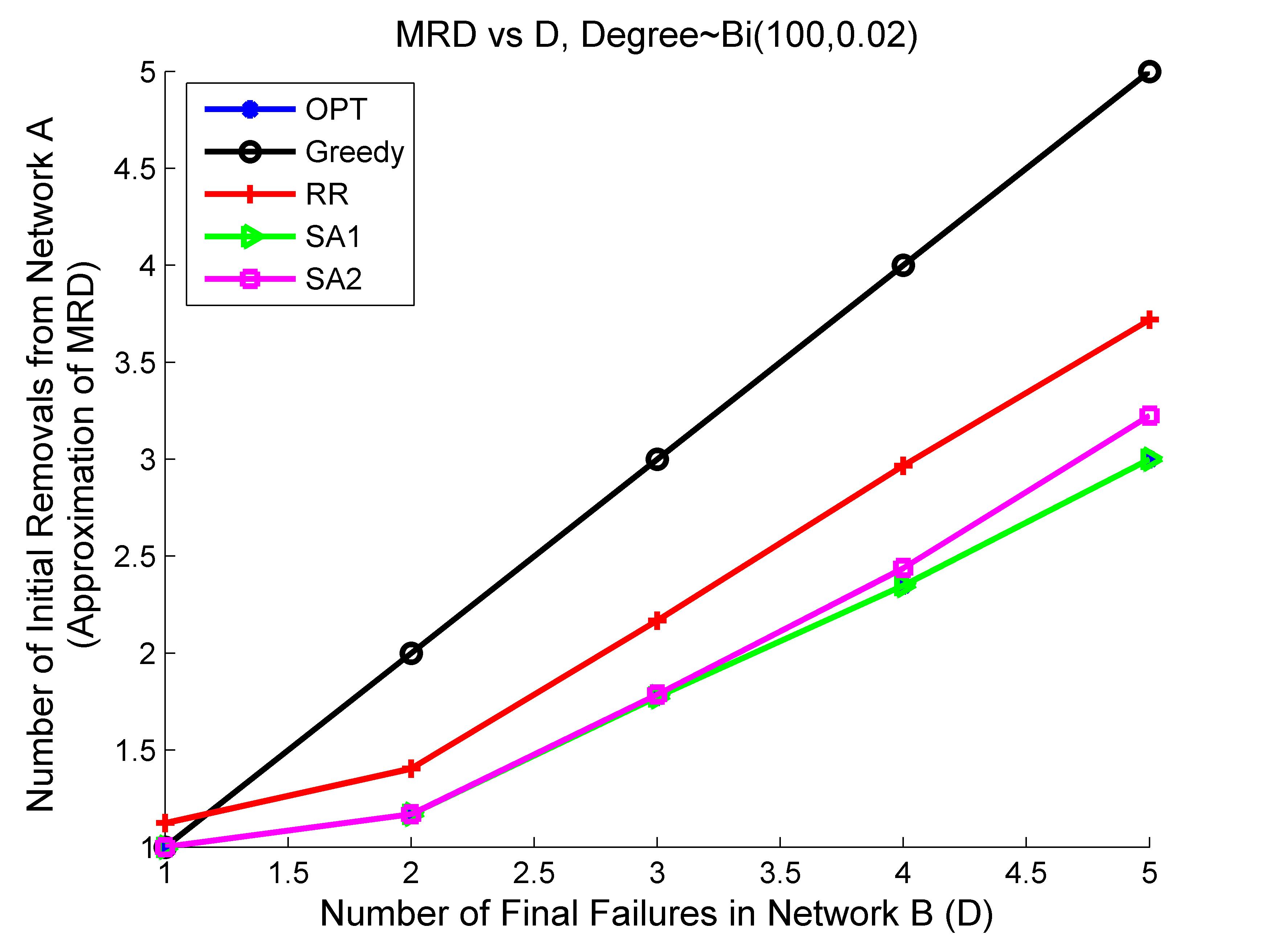}\vspace{-0.3cm}}
\subfigure[k=3]
{\label{MRD_N100_Binomial_D1_D5_3}\includegraphics[scale=0.035]{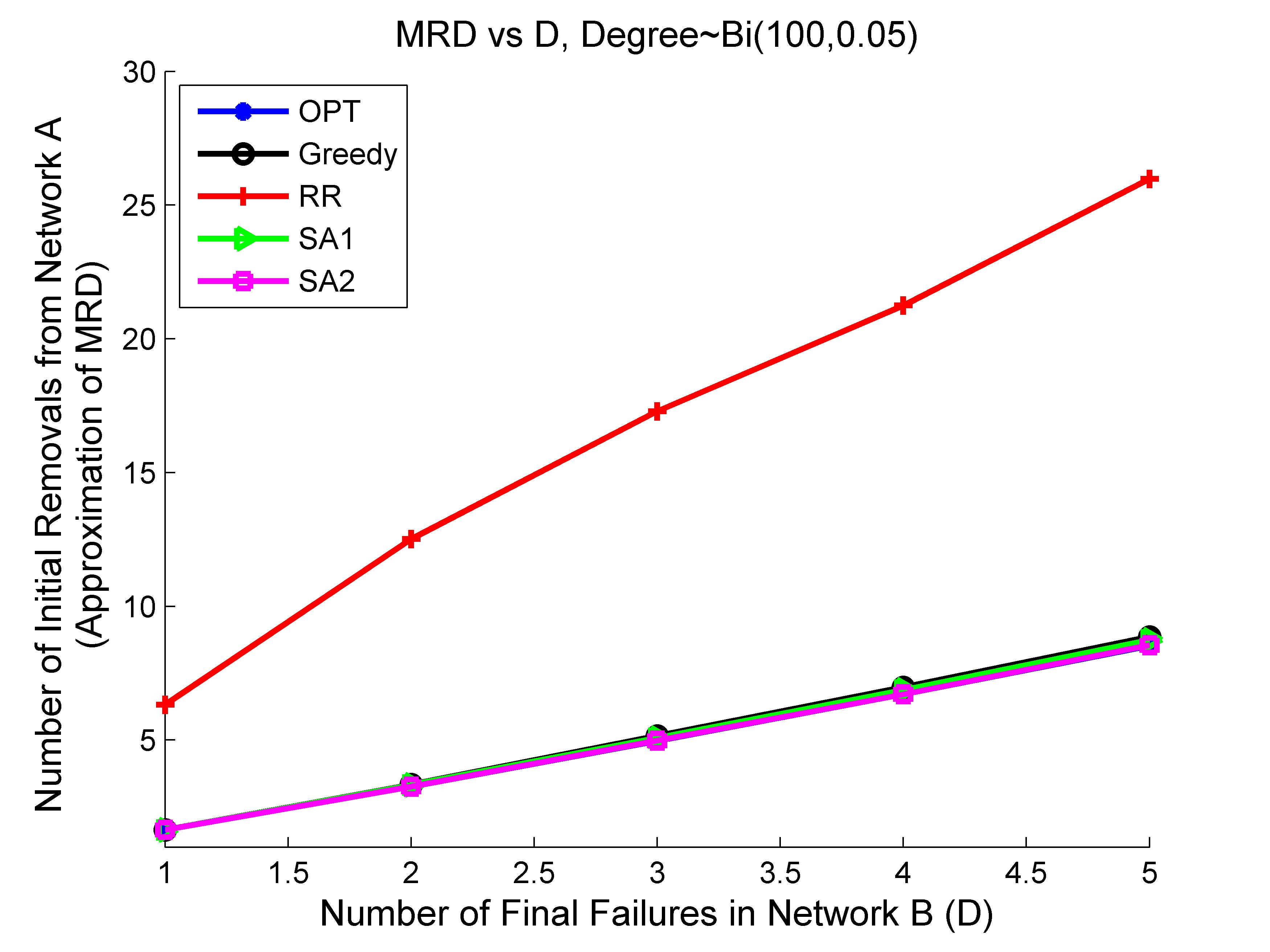}\vspace{-0.3cm}}
\subfigure[k=4]
{\label{MRD_N100_Binomial_D1_D5_4}\includegraphics[scale=0.035]{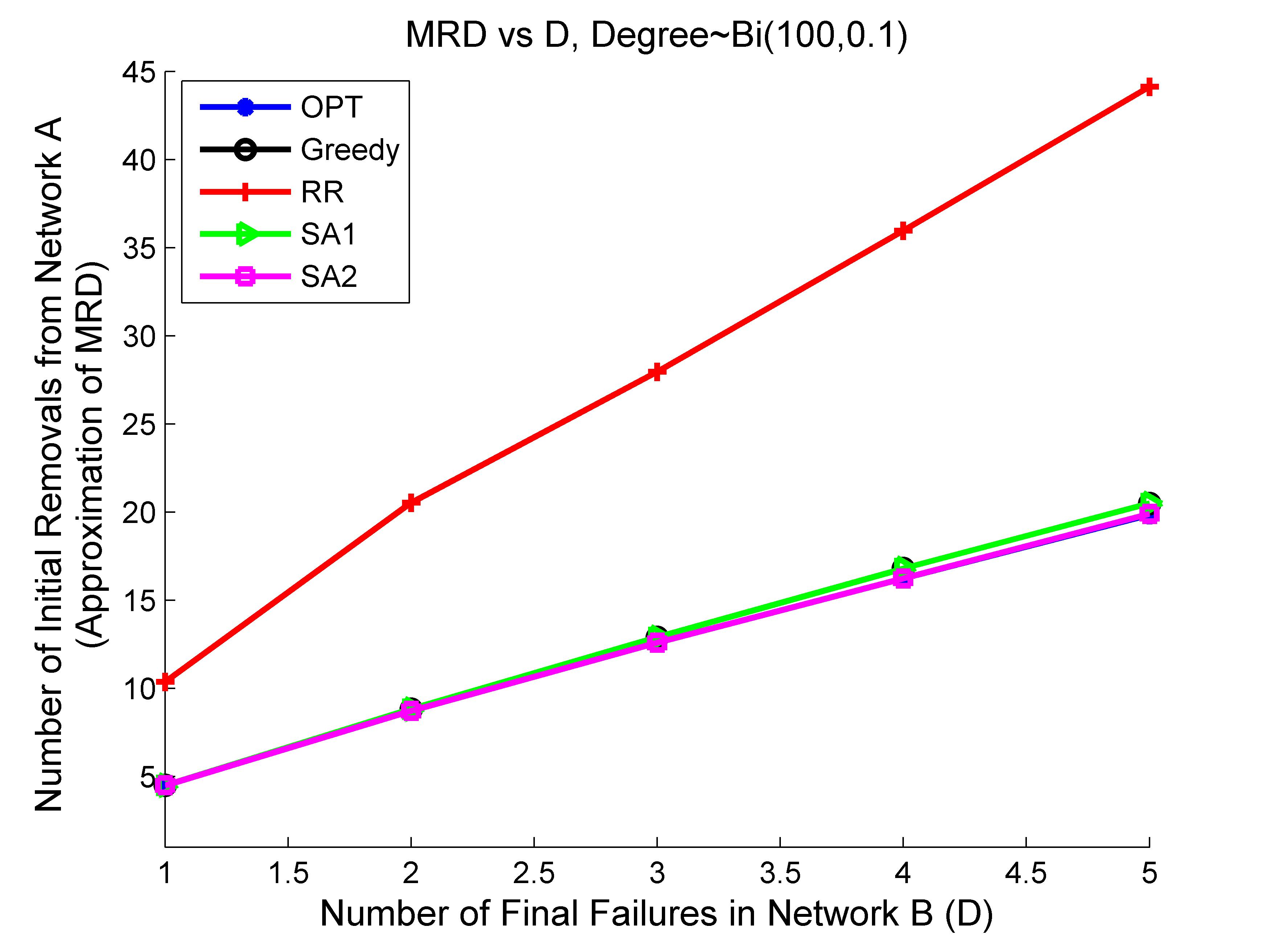}\vspace{-0.3cm}}
\caption{Minimum Node Removal vs Final Failure Size, Type(1) network of size $N=100$, Failure sizes $D \in [1,2,3,4,5]$} \label{MRD_N100_Binomial_D1_D5}\vspace{-0.2cm}
\end{figure}

Figures \ref{Time_N100_Binomial_D1_D5_1}-\ref{Time_N100_Binomial_D1_D5_4} show the runtime of the algorithms for the same set of networks. It can be seen that greedy and randomized rounding are very fast, and the runtime for the optimal solution becomes prohibitive as the size of the network increases. Moreover, it can be seen that the first simulated annealing algorithm has an almost constant run time for all values of average degree $k$ and final failures $D$, whereas the runtime of the second simulated annealing algorithm remains constant for all values of average degree $k$, but increases as $D$ increases.

\begin{figure}[ht]
\centering
\subfigure[k=1]
{\label{Time_N100_Binomial_D1_D5_1}\includegraphics[scale=0.035]{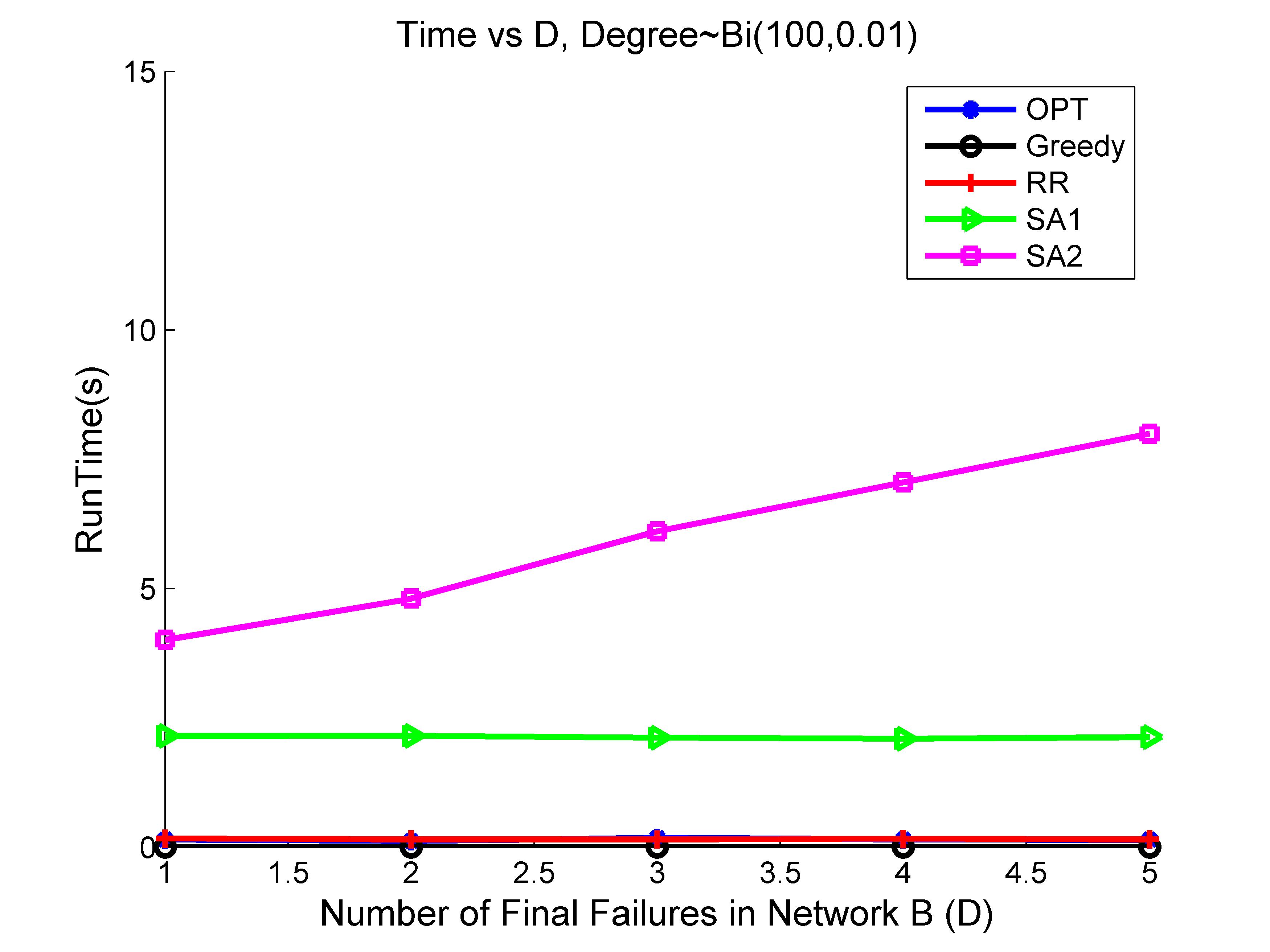}\vspace{-0.3cm}}
\subfigure[k=2]
{\label{Time_N100_Binomial_D1_D5_2}\includegraphics[scale=0.035]{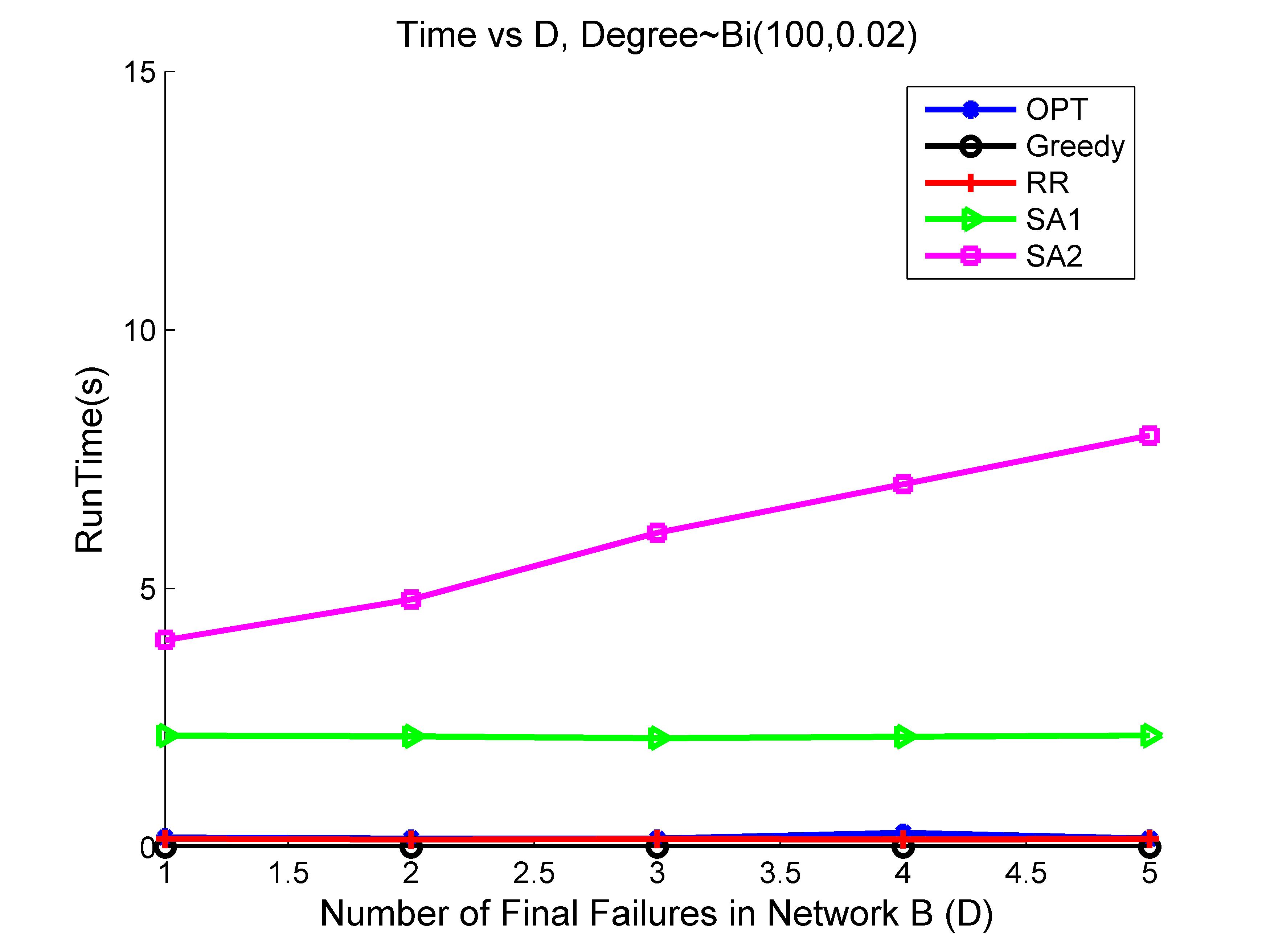}\vspace{-0.3cm}}
\subfigure[k=3]
{\label{Time_N100_Binomial_D1_D5_3}\includegraphics[scale=0.035]{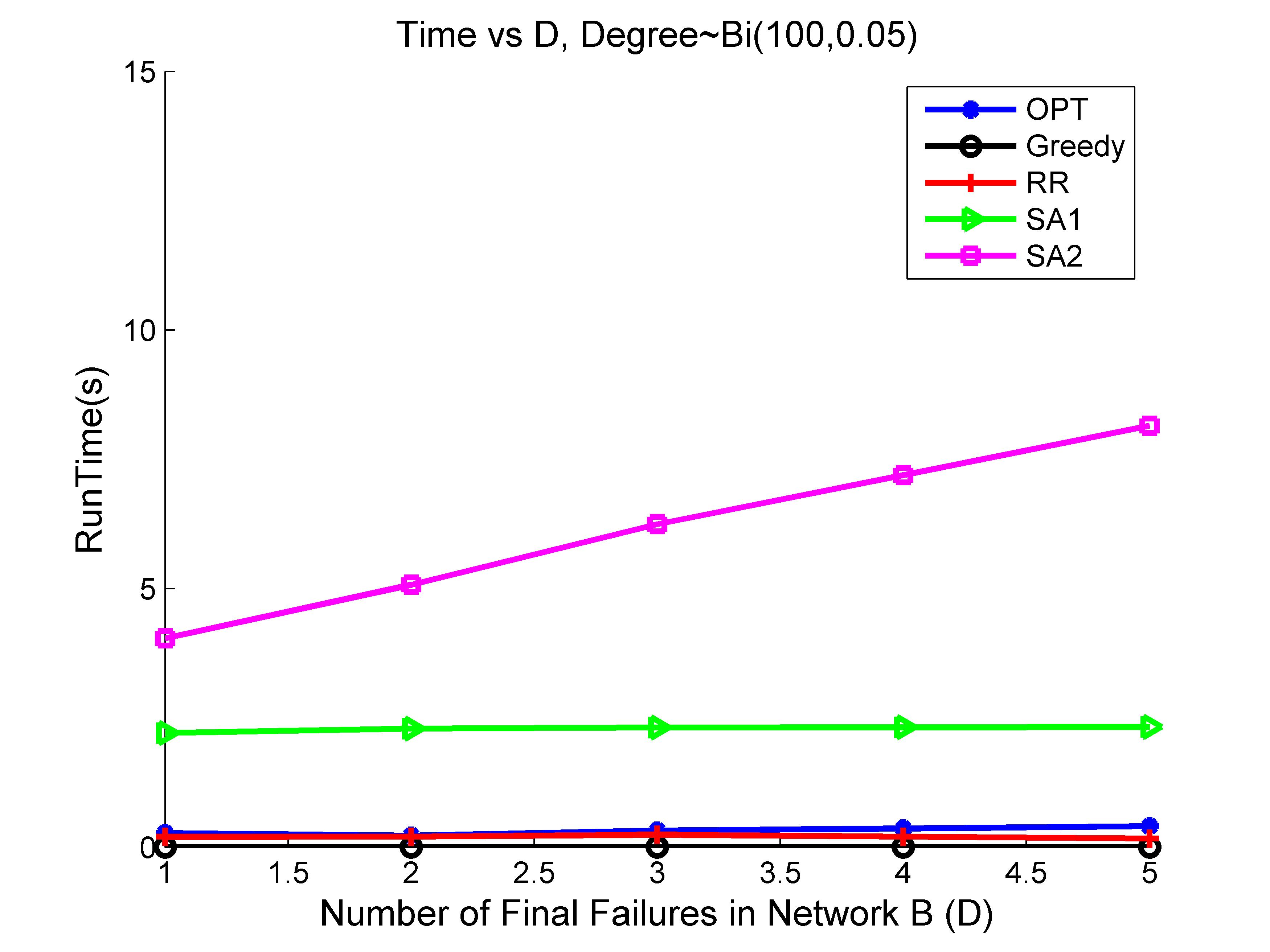}\vspace{-0.3cm}}
\subfigure[k=4]
{\label{Time_N100_Binomial_D1_D5_4}\includegraphics[scale=0.035]{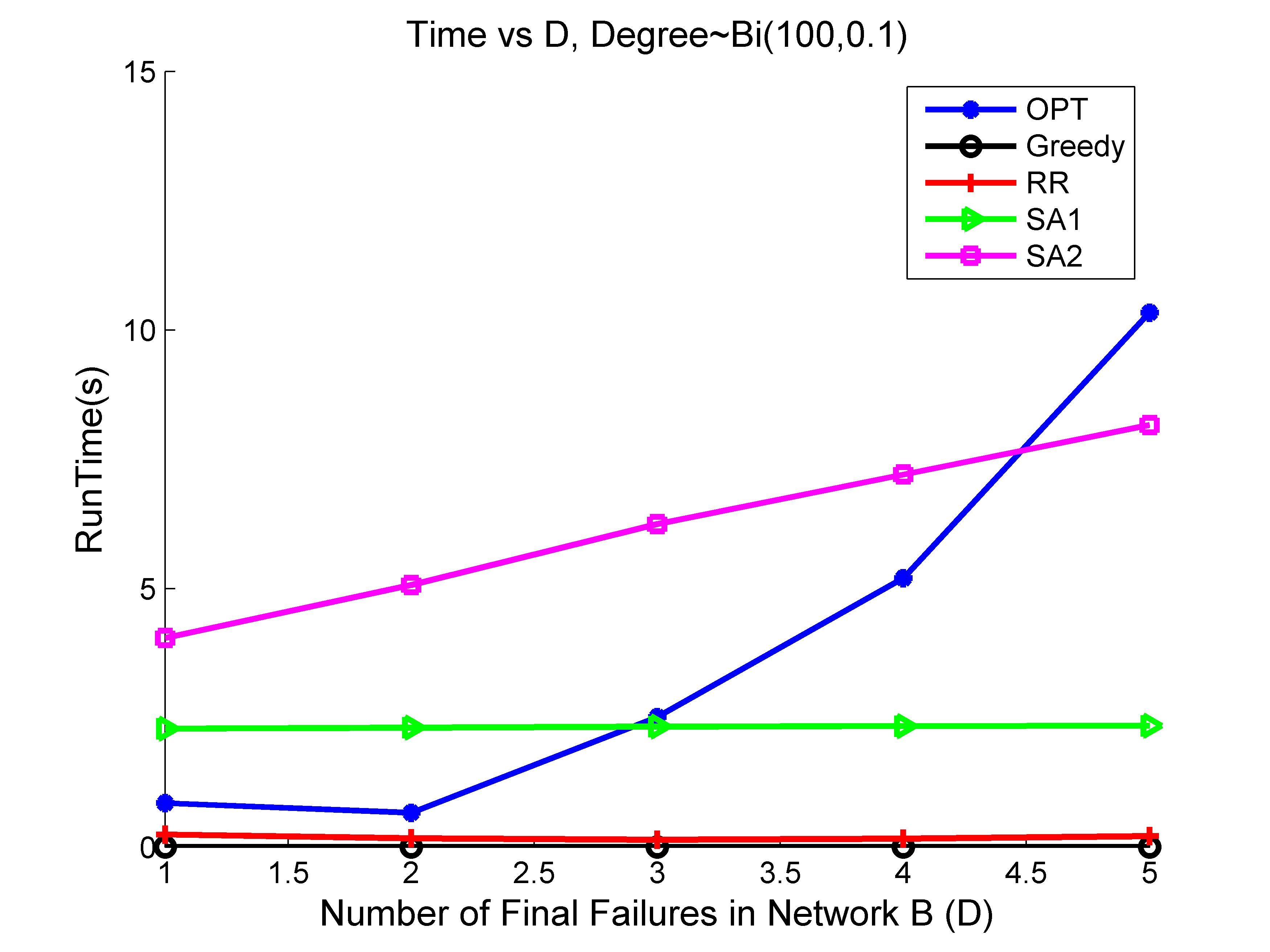}\vspace{-0.3cm}}
\caption{Run-time vs Final Failure Size, Type(1) network of size $N=100$, Failure sizes $D \in [1,2,3,4,5]$}
\label{Time_N100_Binomial_D1_D5}\vspace{-0.2cm}
\end{figure}

We also analyze the performances of our algorithms for the same set of networks but larger values of $D$. Figures \ref{MRD_N100_Binomial_D45_D50_1}-\ref{MRD_N100_Binomial_D45_D50_4} show that the behavior of the algorithms remains the same, and the the first simulated annealing algorithm performs nearly-optimal.

\begin{figure}[ht]
\centering
\subfigure[k=1]
{\label{MRD_N100_Binomial_D45_D50_1}\includegraphics[scale=0.035]{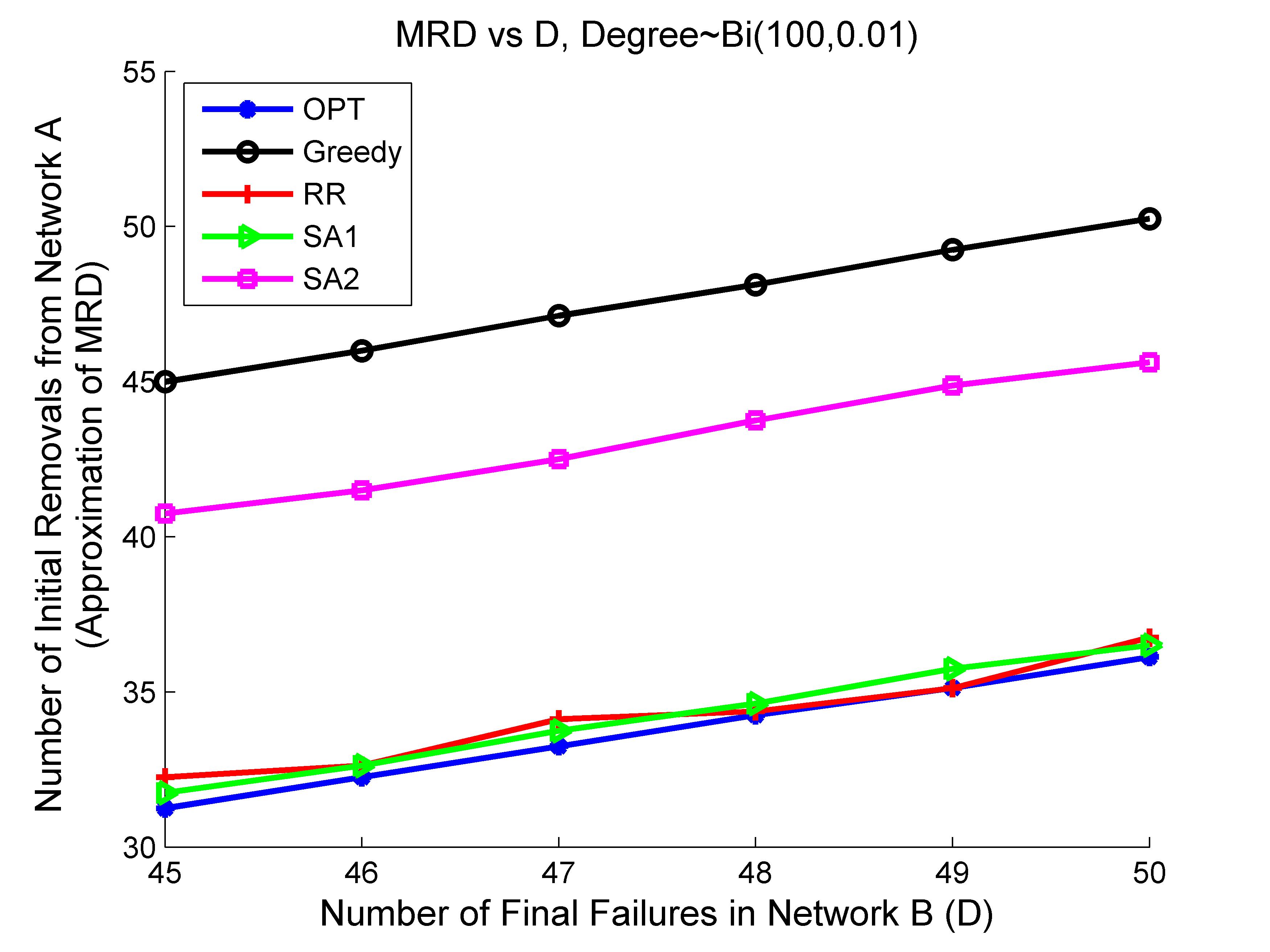}\vspace{-0.3cm}}
\subfigure[k=2]
{\label{MRD_N100_Binomial_D45_D50_2}\includegraphics[scale=0.035]{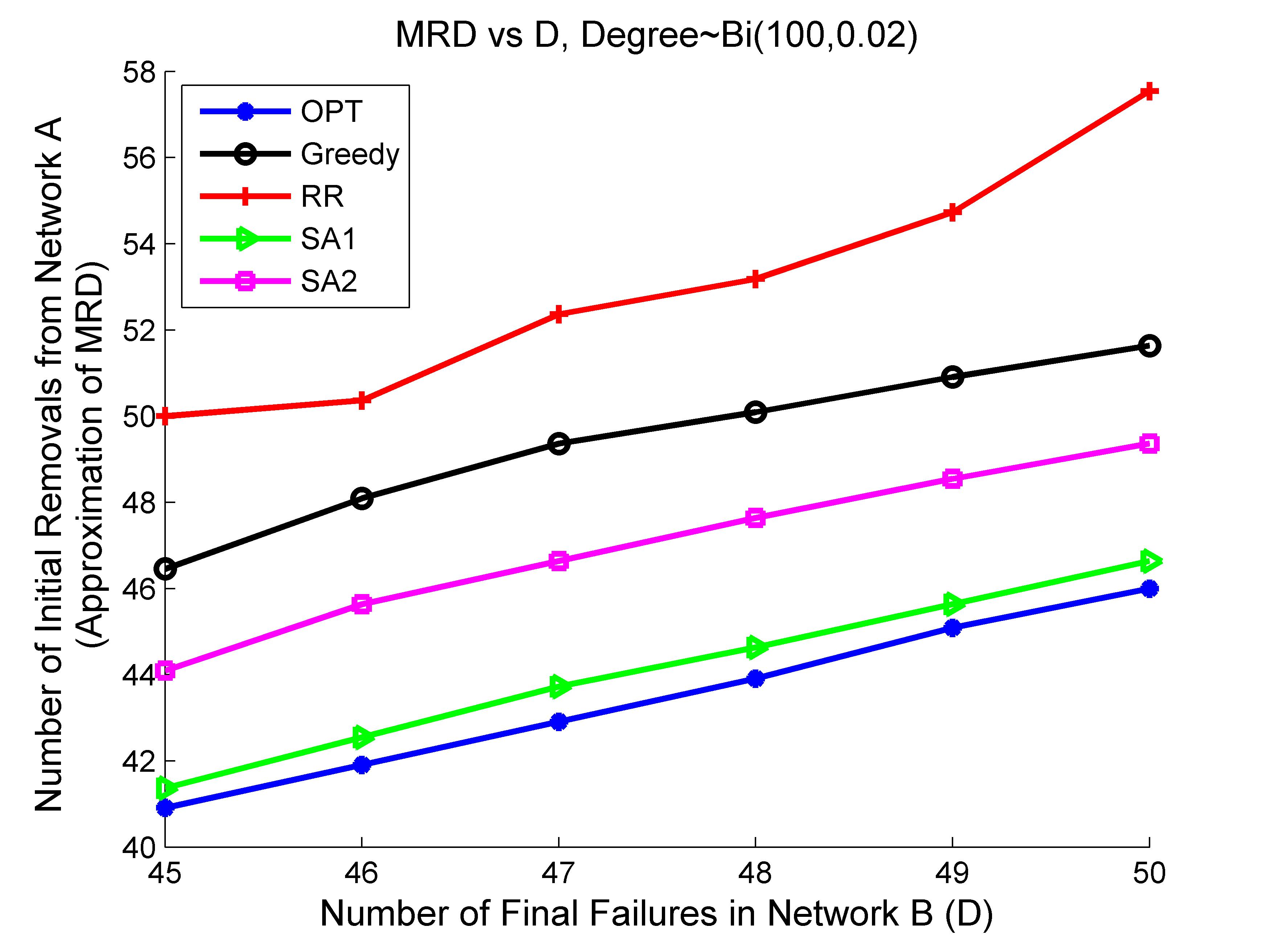}\vspace{-0.3cm}}
\subfigure[k=3]
{\label{MRD_N100_Binomial_D45_D50_3}\includegraphics[scale=0.035]{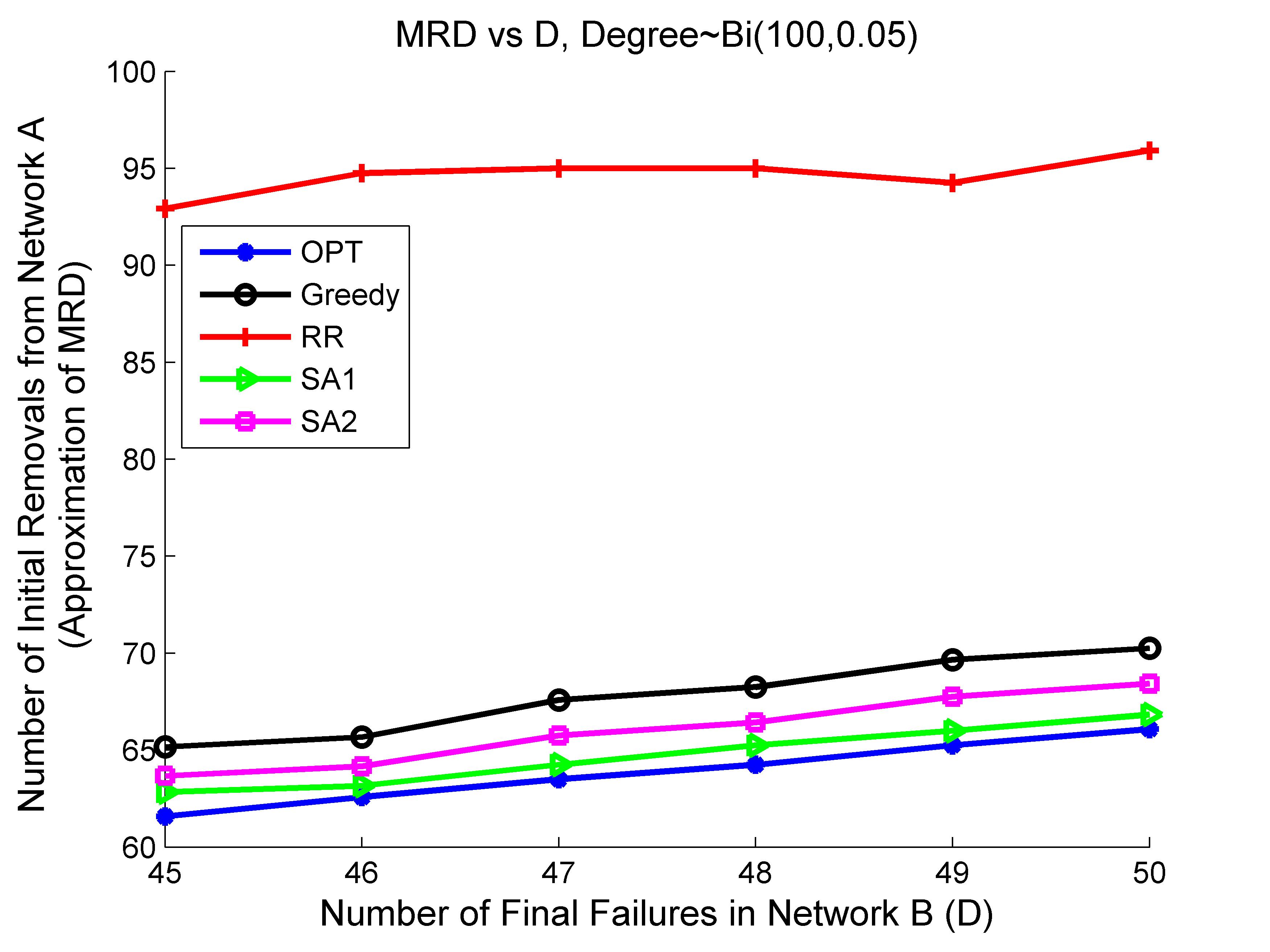}\vspace{-0.3cm}}
\subfigure[k=4]
{\label{MRD_N100_Binomial_D45_D50_4}\includegraphics[scale=0.035]{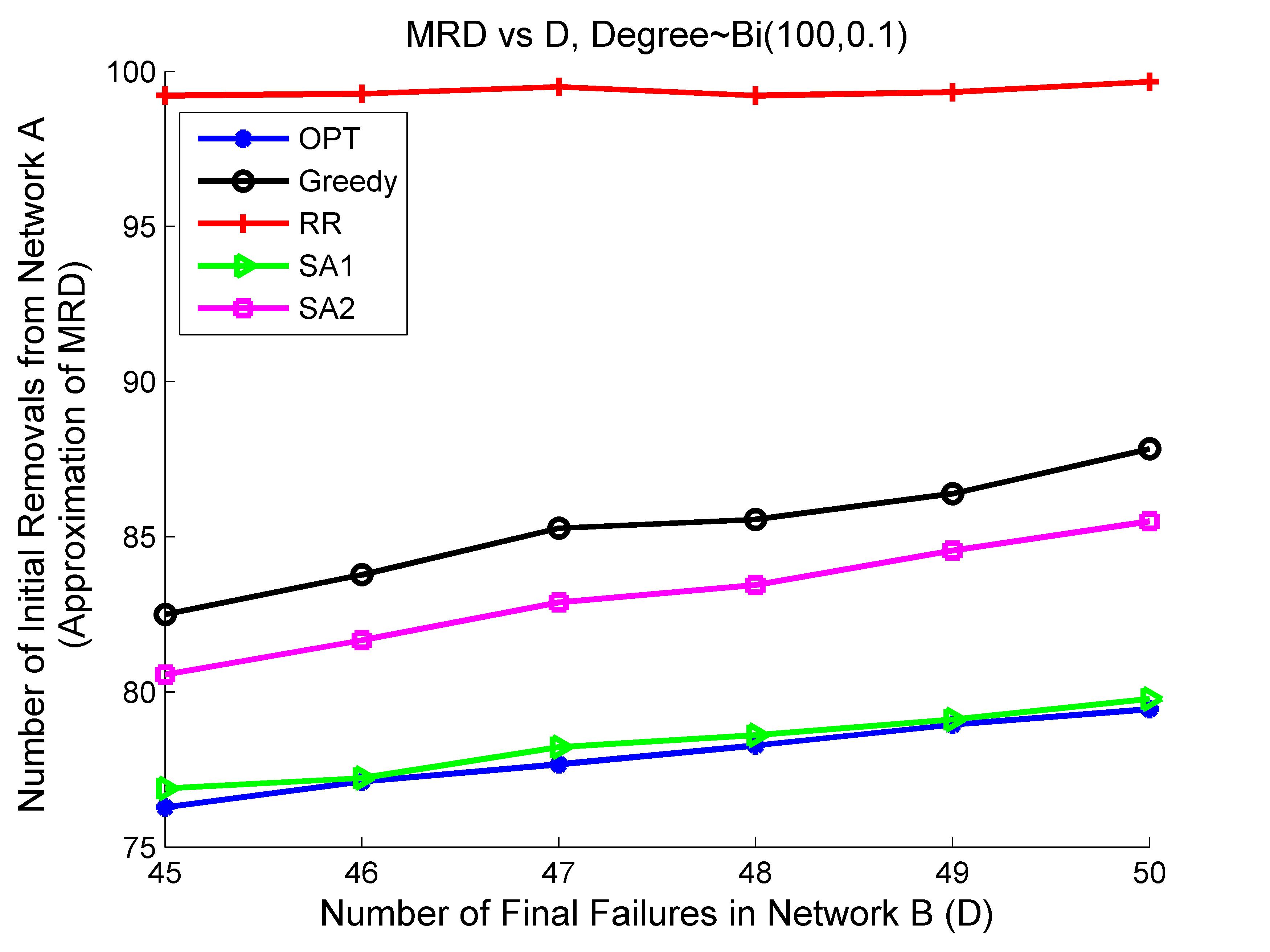}\vspace{-0.3cm}}
\caption{Minimum Node Removal vs Final Failure Size, Type(1) network of size $N=100$, Failure sizes $D \in [45,46,47,48,49,50]$}
\label{MRD_N100_Binomial_D45_D50}\vspace{-0.2cm}
\end{figure}

Next, we analyze the performances of our algorithms for type(2) networks of size $N=100$ and average degrees of $k_1=2$ and $k_2=20$. It can be seen from Figure \ref{MRD_N100_TwoBinomial} that the first simulated annealing algorithm provides the best performance. We also observed that the randomized rounding algorithm performs poorly in this set of networks.

\begin{figure}[ht]
	\centering
	\includegraphics[scale=0.05]{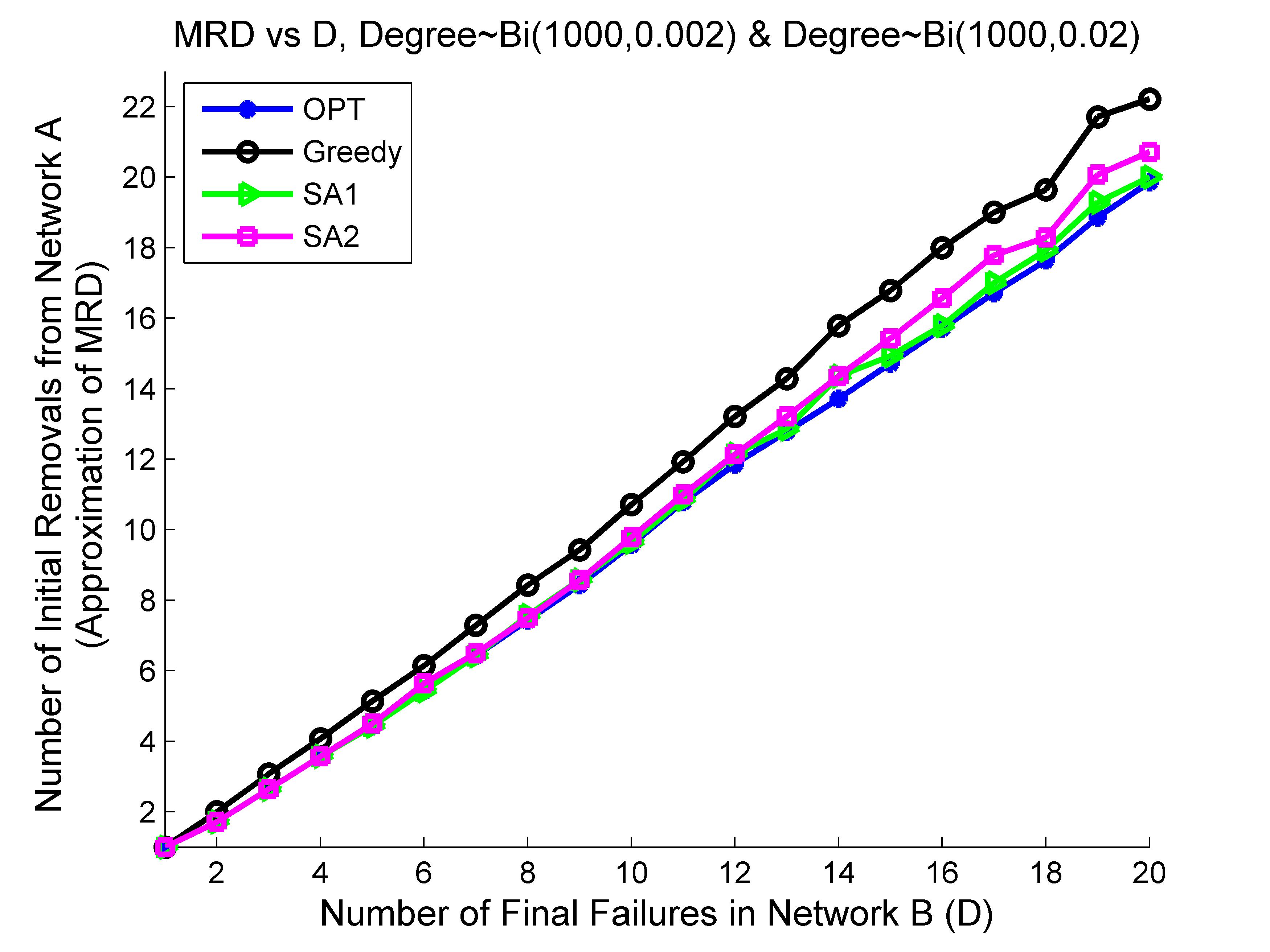}
	\caption{Minimum Node Removal vs Final Failure Size, Type(2) network of size $N=100$ and $k_1,k_2=[2,20]$, Failure sizes $D \in [1,\cdots,20]$}
	\label{MRD_N100_TwoBinomial}
	\vspace{-0.2cm}
\end{figure}

Finally, we consider larger networks of size $N=1000$. For this size of network, the ILP formulation cannot be solved optimally anymore as the run-time becomes prohibitive. Thus, we only compare the performances of the heuristic algorithms. Figures \ref{LargeNetwork_Binomial} and \ref{LargeNetwork_TwoBinomial} illustrate the results of networks of type(1) and type(2). It can be seen that the simulated annealing algorithms do not provide a significant improvement in the size of initial removals compared to the greedy algorithm, while their run time is much larger than the greedy algorithm.

Another interesting point is that for large networks, the second simulated annealing algorithm outperforms the first one. Moreover, the run time of the second simulated annealing algorithm remains constant, while the run time of the first simulated annealing algorithm increases as $D$ increases.

\begin{figure}[ht]
\centering
\subfigure[Minimum Node Removal vs Final Failure Size]
{\label{MRD_LargeNetwork_Binomial}\includegraphics[scale=0.25]{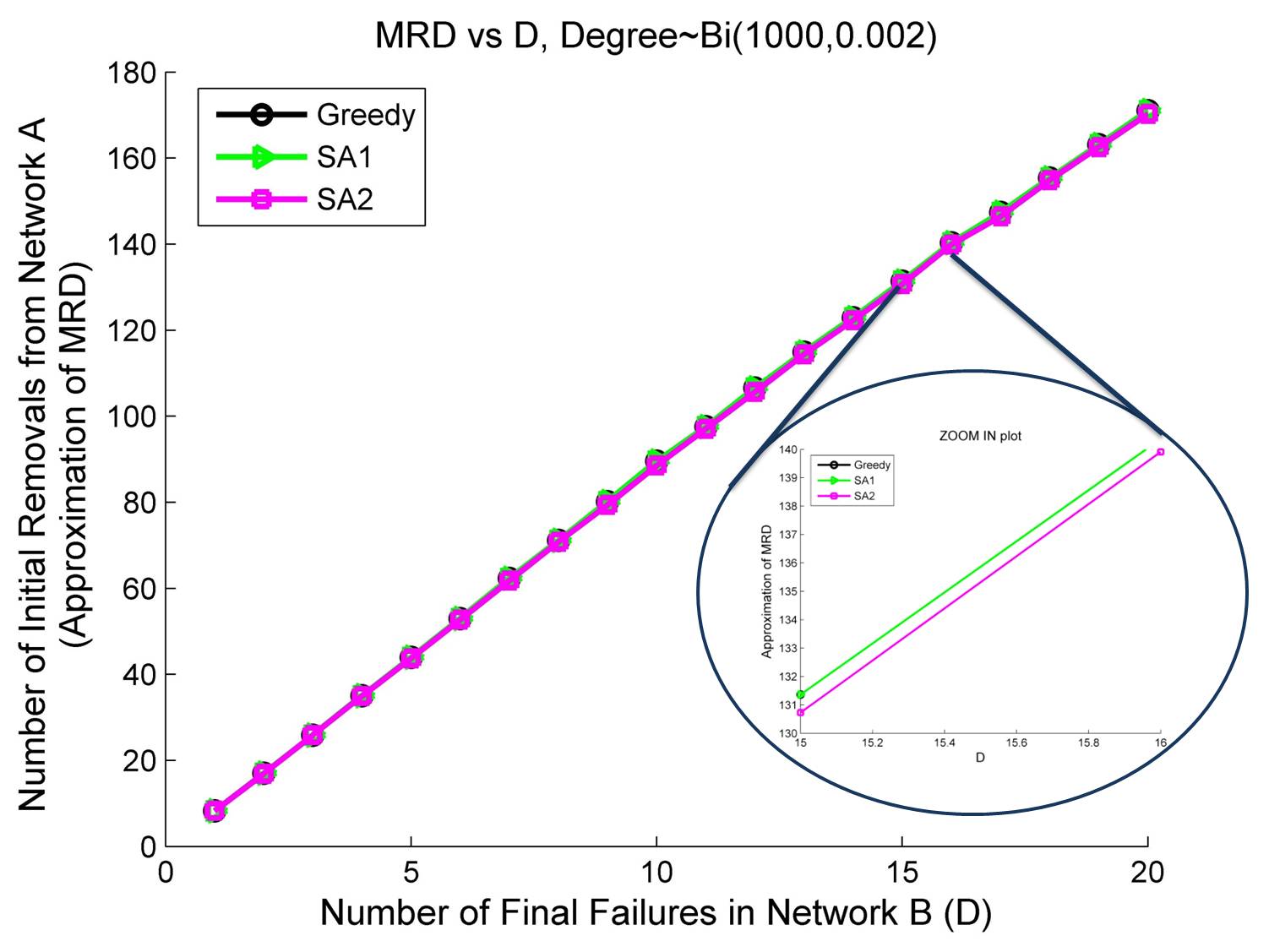}\vspace{-0.3cm}}
\subfigure[Run-Time vs Final Failure Size]
{\label{Time_LargeNetwork_Binomial}\includegraphics[scale=0.05]{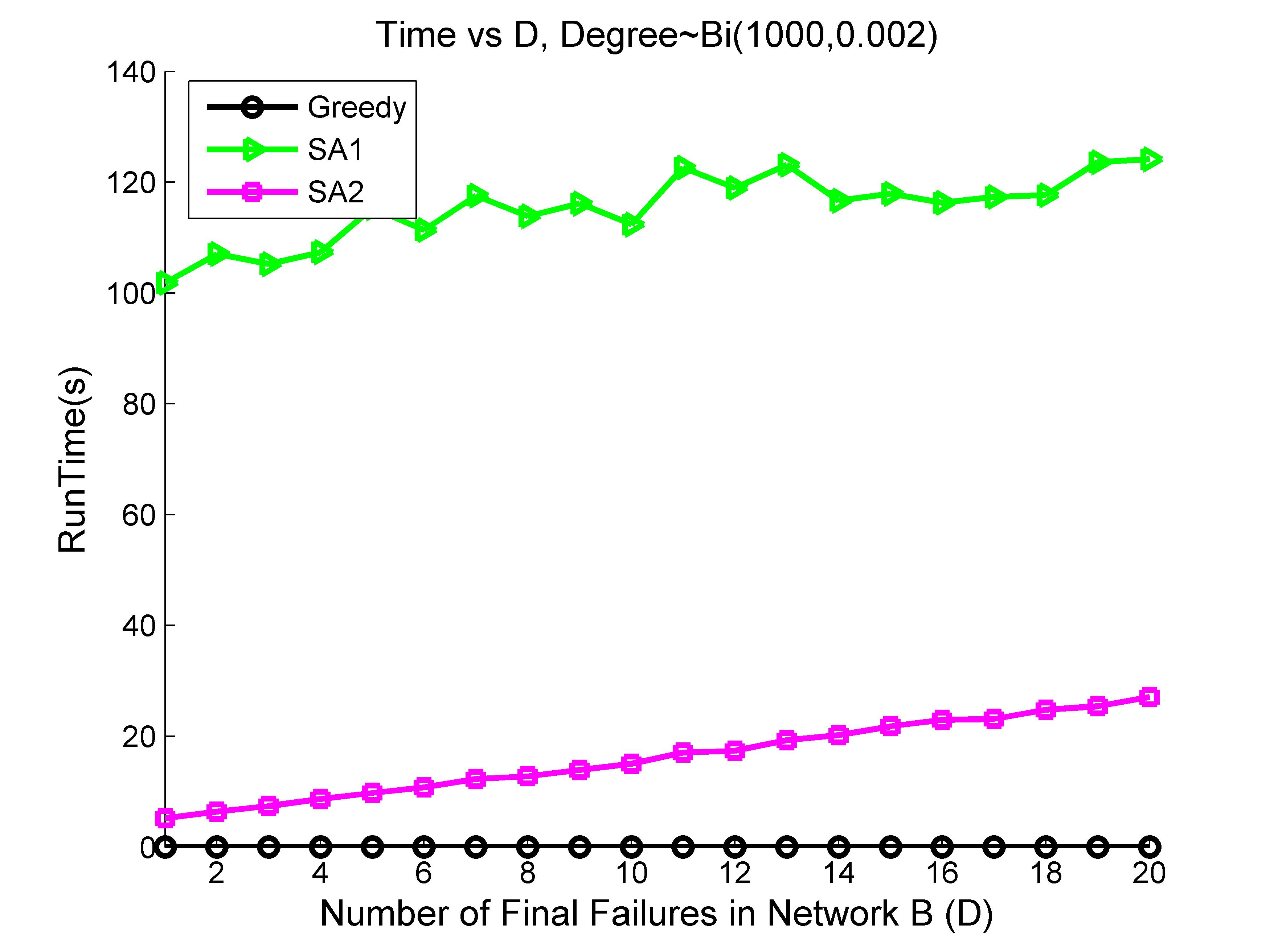}\vspace{-0.3cm}}
\caption{Minimum Node Removal and Run-Time vs Final Failure Size, Type(1) network of size $N=1000$, Failure sizes $D \in [1,\cdots,20]$}
\label{LargeNetwork_Binomial}\vspace{-0.2cm}
\end{figure}

\begin{figure}[ht]
\centering
\subfigure[Minimum Node Removal vs Final Failure Size]
{\label{MRD_LargeNetwork_TwoBinomial}\includegraphics[scale=0.25]{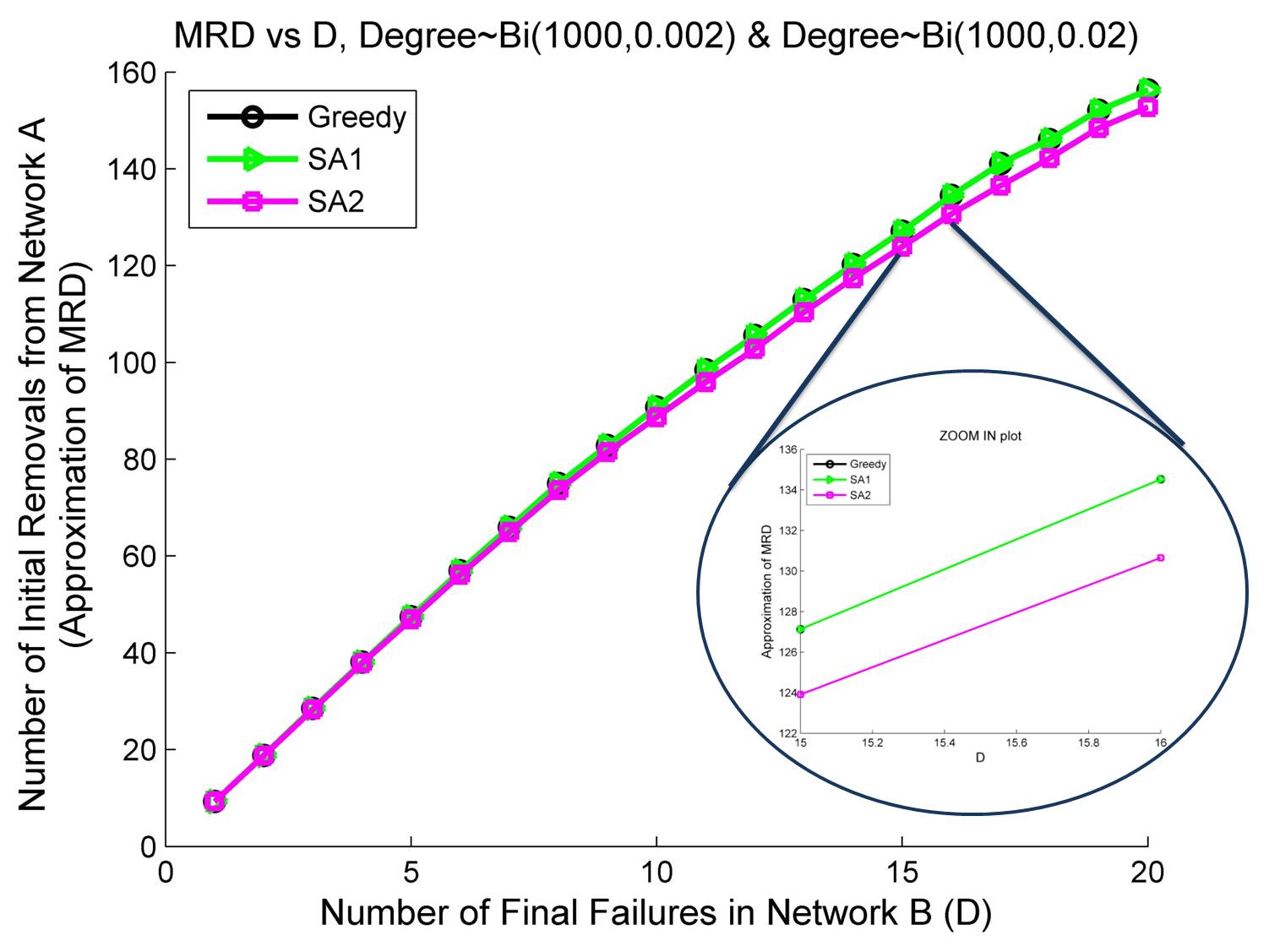}\vspace{-0.3cm}}
\subfigure[Run-Time vs Final Failure Size]
{\label{Time_LargeNetwork_TwoBinomial}\includegraphics[scale=0.05]{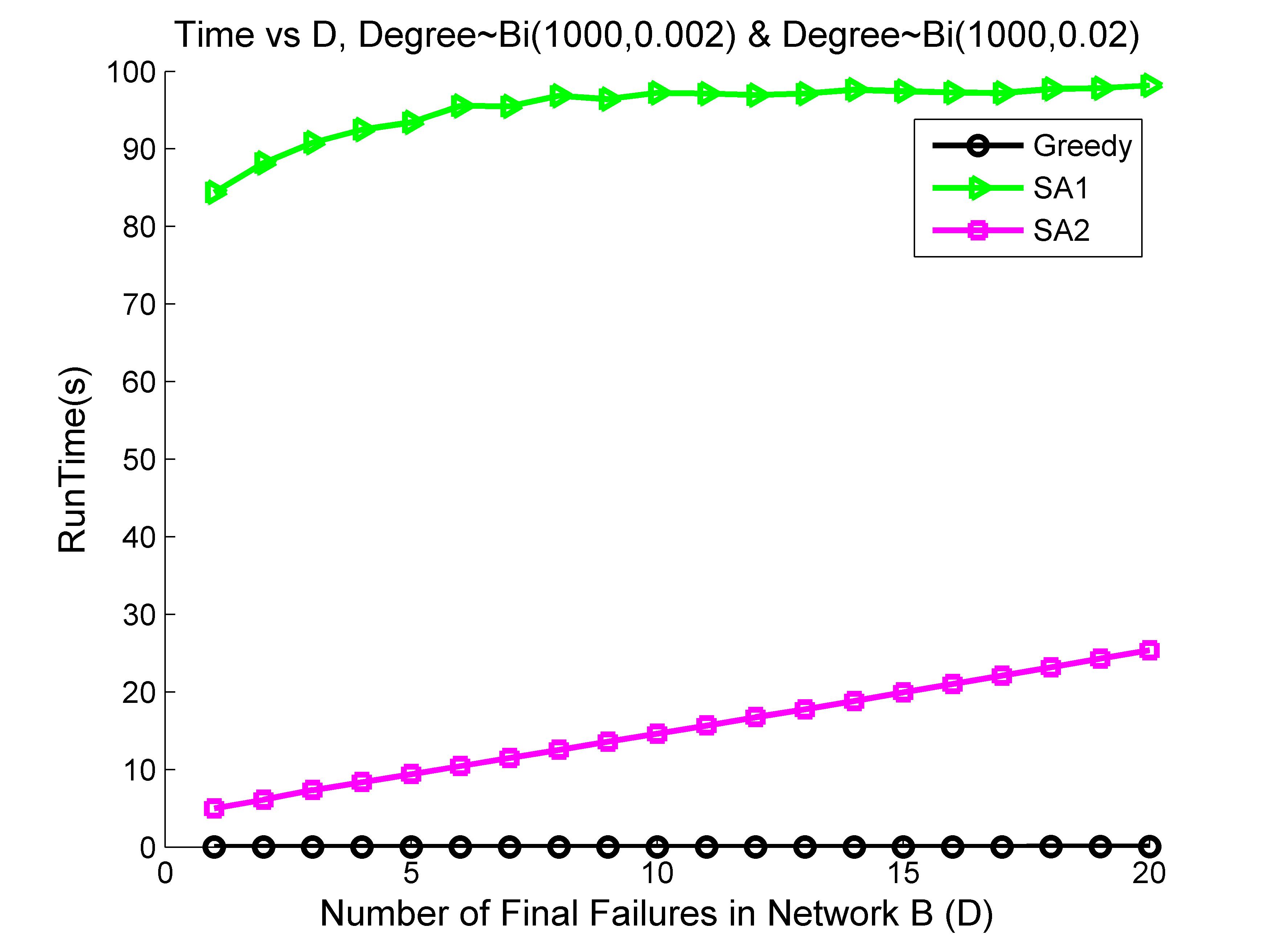}\vspace{-0.3cm}}
\caption{Minimum Node Removal and Run-Time vs Final Failure Size, Type(2) network of size $N=1000$ and $k_1,k_2=[2,20]$, Failure sizes $D \in [1,\cdots,20]$, Two Binomial Distribution}
\label{LargeNetwork_TwoBinomial}\vspace{-0.2cm}
\end{figure}

\section{Robust Design}\label{Design_sec}
In this section, our goal is to design the interdependency between two given networks $A$ and $B$ with star topologies such that network $B$ is robust to failures in network $A$ and network $A$ is robust to failures in network $B$. For simplicity, we assume that both networks have the same number of nodes $N_A=N_B=N$. We also assume that the number of edges between the networks is $E$.

We introduce two definitions for robustness, and propose algorithms for robust design under each definition.

\begin{definition}\label{Lexi_Robust}
\textbf{Lexicographic Robustness:} Network $G^*$ is $K-$robust if for every $D \in \{1, \cdots, K\} $, it has the largest $\mathcal{MR}(D)$ among all networks $G$ with the same number of nodes and edges.
\end{definition}

\begin{definition}\label{Relative_Robust}
\textbf{Relative Robustness:} Network $G^*$ is the most robust network if it has the largest lower bound on the relative $\mathcal{MR}(D)$ for all values of $D \in \{1, \cdots, N\}$; i.e. largest $\min_{1\leq D\leq N}\frac{\mathcal{MR}(D)}{D}$.
\end{definition}

\subsection{Design Under Lexicographic Definition}

\begin{proposition}\label{Robust_Regular}
Consider the set of bidirectional interdependent networks with $N$ nodes and $kN$ edges. The $k$-regular network is the 1-robust network.
\end{proposition}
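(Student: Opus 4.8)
The plan is to reduce the statement to the $D=1$ instance of Proposition~\ref{Polynomial_MRD}, which already identifies $\mathcal{MR}(1)$ with the minimum degree of the nodes of network $B$. Under Definition~\ref{Lexi_Robust}, being $1$-robust therefore means attaining the largest possible minimum $B$-degree over all bidirectional interdependent networks with $N$ nodes in each part and $E=kN$ interdependency edges. So the whole proposition becomes an extremal statement about the degree sequence on the $B$ side.

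First I would record the counting bound. Each interdependency edge contributes exactly one endpoint on the $B$ side, so $\sum_{v\in B} d(v)=kN$; averaging over the $N$ nodes of $B$ gives $\min_{v\in B} d(v)\le k$, hence $\mathcal{MR}(1)\le k$ for every admissible network, with equality precisely when $d(v)=k$ for every $v\in B$, i.e. when network $B$ is $k$-regular. This is the only inequality in the argument.

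Second I would exhibit a network meeting the bound: join node $i$ of $A$ to nodes $i,i+1,\dots,i+k-1 \pmod N$ of $B$ (equivalently, take a union of $k$ perfect matchings). This graph is $k$-regular on both sides, so every node has degree $k\ge 1$: conditions (1)--(2) of the model and the requirement that every node has at least one outgoing edge all hold, and the network is operating. By the previous step its $\mathcal{MR}(1)$ equals $k$, the maximum, so it is $1$-robust. I would also note that $k\le N$ is automatic, since a simple bipartite graph on $N+N$ vertices has at most $N^2\ge kN$ edges, so the $k$-regular construction is always available.

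There is essentially no obstacle beyond bookkeeping; the one point that deserves care is verifying that the extremal configuration is genuinely realizable as an operating interdependent network in the sense of Section~\ref{Model_sec} (no isolated nodes, every node with an outgoing edge), which the explicit $k$-regular construction settles at once. I would read ``the $1$-robust network'' as the claim that the $k$-regular network attains the optimum; if one wanted uniqueness one would add that any other optimizer must also be $k$-regular on the $B$ side (and, invoking the symmetric version of the metric on $A$, on the $A$ side), though this still does not pin down a unique graph up to isomorphism.
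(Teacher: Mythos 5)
Your proposal is correct and follows essentially the same argument as the paper: both invoke Proposition~\ref{Polynomial_MRD} to identify $\mathcal{MR}(1)$ with the minimum degree on the $B$ side and then use the degree-count $\sum_{v\in B} d(v)=kN$ to conclude that any non-regular allocation forces $\mathcal{MR}(1)<k$ while the $k$-regular one attains $k$. The only difference is cosmetic: the paper phrases this as a contradiction and leaves realizability implicit, whereas you argue directly and add an explicit $k$-regular construction, a harmless strengthening rather than a different route.
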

\begin{proof}
By contradiction - By Proposition \ref{Polynomial_MRD}, in a $k$-regular network, $\mathcal{MR}(D=1)=k$. Suppose that the 1-robust network is irregular. This means that there exist at least one node with degree less than $k$; thus, $\mathcal{MR}(D=1)<k$ which is a contradiction.
\end{proof}

Note that for arbitrary values of $E$, the 1-robust network contains a $k$-regular subnetwork with $k=\lfloor \frac{E}{N} \rfloor$ .

Next, we want to design a 2-robust network. By definition, a 2-robust network is 1-robust, as well. Thus, it is a regular graph by Proposition \ref{Robust_Regular}. However, it can be seen from Figure \ref{Diff_Robustness} that not all regular graphs have the same 2-robustness. In particular, it can be seen that the minimum number of node removals from network $A$ ($B$) to cause the failure of any two nodes in network $B$ ($A$) is $3$. However, in network $G_2$ this value is $2$. Comparison of the structures of graphs $G_1$ and $G_2$ shows that $G_2$ has a more clustered structure than $G_1$. Our goal is to find the structure of the most 2-robust network.

\begin{figure}[ht]
\centering
\subfigure[Graph $G_1$]
{\label{Diff_Robustness1}\includegraphics[scale=0.2]{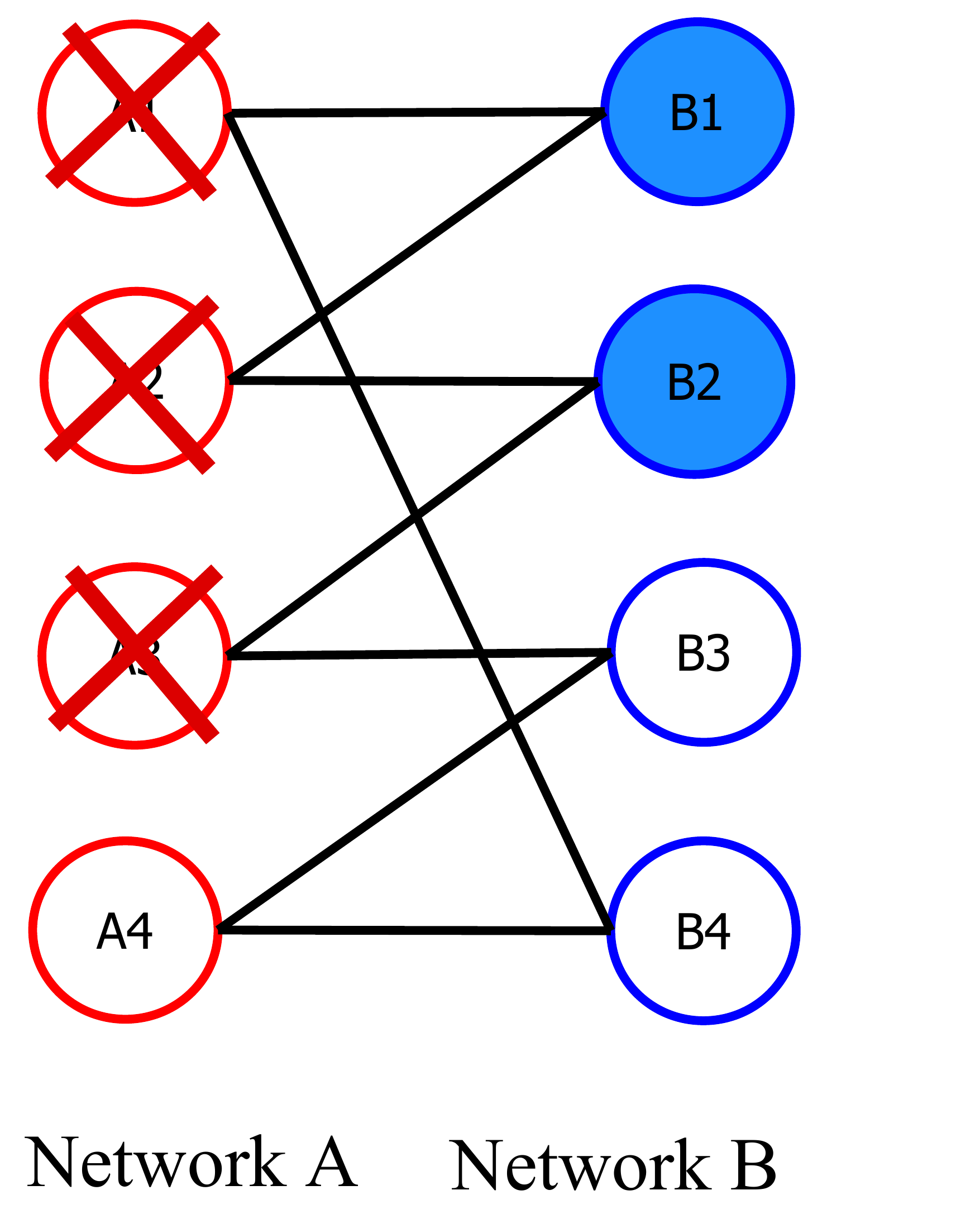}}                
\subfigure[Graph $G_2$]
{\label{Diff_Robustness2}\includegraphics[scale=0.2]{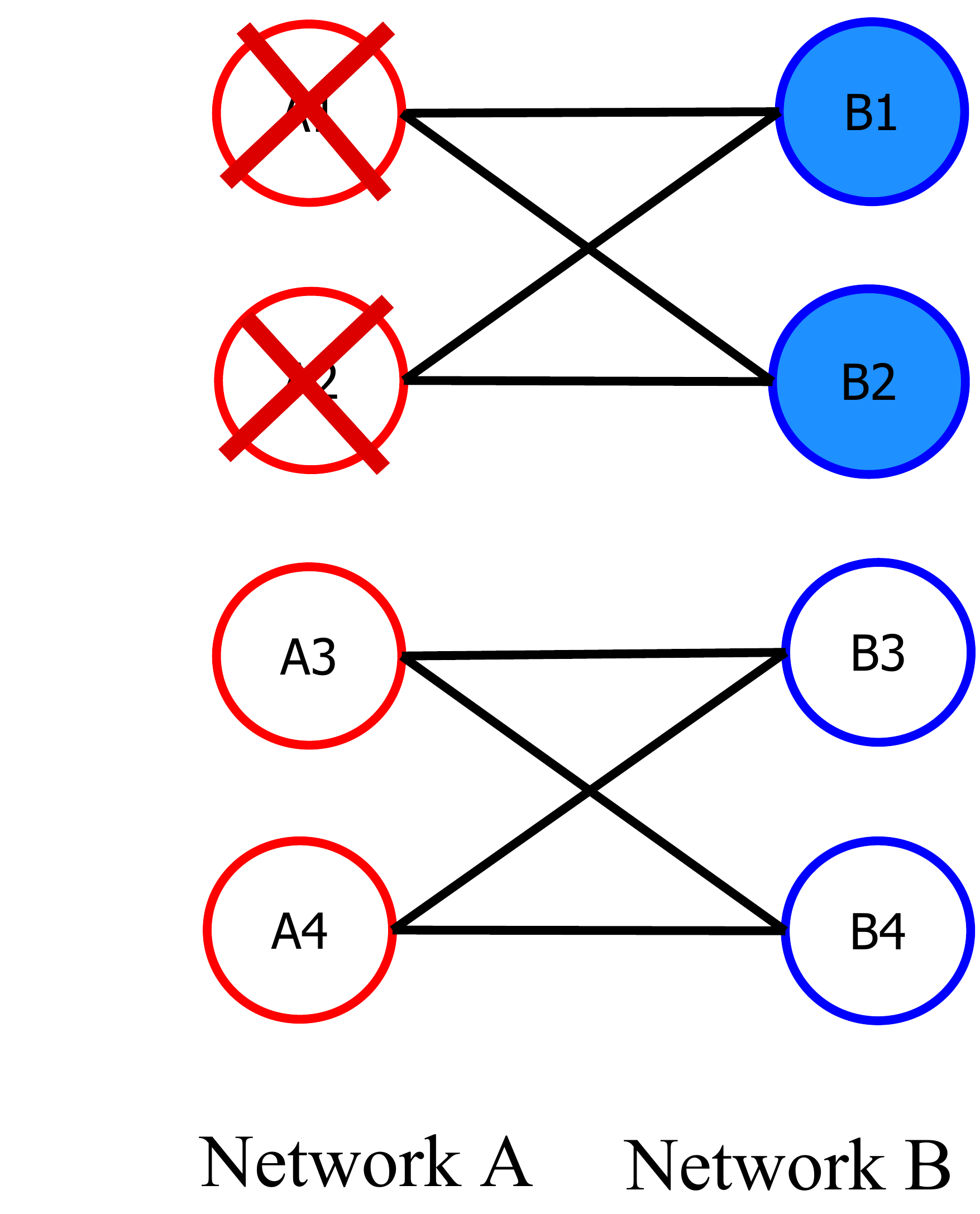}}
\caption{Regular Interdependent networks with robustness for two failures. In network $G_1$, minimum node removal to cause the failure of any two nodes is $3$, where in network $G_2$, the minimum node removals is $2$.}
\label{Diff_Robustness}
\end{figure}

In order to find the structural pattern of the 2-robust networks, we formulate the optimal design problem as an ILP (See Appendix \ref{Robust_ILP} for the formulation details). Figure \ref{robust_simulation} shows the pattern of $\mathcal{MR}(D)$ for networks with different network sizes and node degrees. It can be seen that for any give degree $k$, as number of nodes $N$ increases, $\mathcal{MR}(D)$ increases until it reaches a threshold. This observation is summarized as follows. 

\begin{figure}[ht]
	\centering
	\includegraphics[scale=0.28]{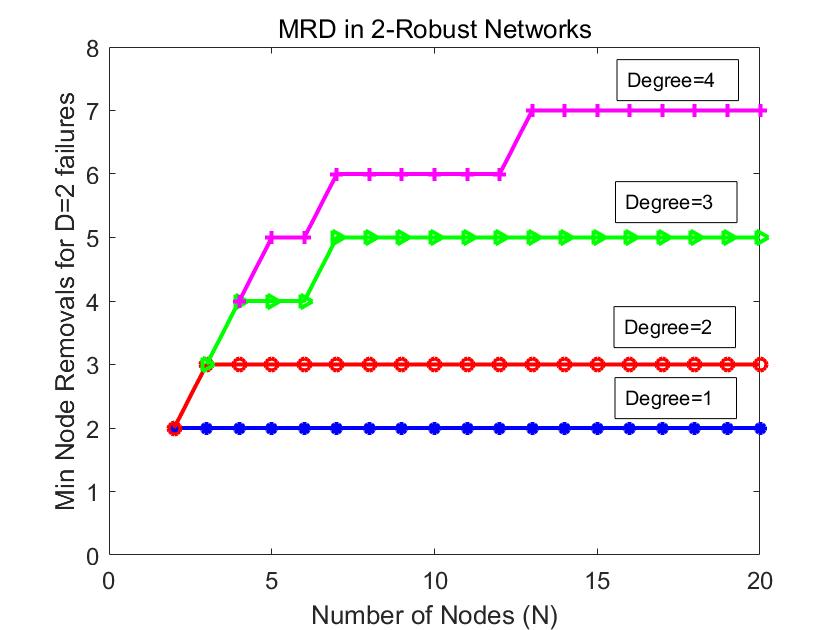}
	\caption{$\mathcal{MR}(D)$ for 2-robust networks with different sizes and node-degrees.}
	\label{robust_simulation}
	\vspace{-0.2cm}
\end{figure}

\textit{Conjecture:} Let  $N_0=k(k-1)+1$ where $k\geq2$. Any 2-robust network with $N \geq N_0$ nodes and degree $k$ has $\mathcal{MR}(D=2)=2k-1$. Moreover, any 2-robust network with $N < N_0$ nodes and degree $k$ has $\mathcal{MR}(D=2)<2k-1$.

In the following, we prove the correctness of this conjecture.

\begin{lemma}\label{2robust_Lemma1}
For $N = N_0$ and $k\geq2$, one can construct a 2-robust network with $N$ nodes and degree $k$ such that $\mathcal{MR}(D=2)=2k-1$.
\end{lemma}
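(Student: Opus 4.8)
The plan is to exhibit an explicit construction of a $k$-regular bipartite interdependency graph on $N_0 = k(k-1)+1$ nodes per side that realizes $\mathcal{MR}(D=2) = 2k-1$, and then verify the two things that make it work: (i) every node has degree exactly $k$, and (ii) no two nodes in network $B$ share more than one common neighbor in network $A$. By Lemma \ref{NeighborRemoval}, $\mathcal{MR}(2) = \min\{|\mathcal{N}(\{u,v\})| : u,v \in B\}$, and $|\mathcal{N}(\{u,v\})| = d(u) + d(v) - |\mathcal{N}(u)\cap\mathcal{N}(v)|$. If the graph is $k$-regular and any two $B$-nodes share at most one neighbor, then $|\mathcal{N}(\{u,v\})| \geq 2k - 1$ for all pairs, with equality achieved whenever some pair actually shares a neighbor; since $k \geq 2$ and $N_0 > k$, equality is forced, giving $\mathcal{MR}(2) = 2k-1$.

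The natural candidate for such a construction is a \emph{projective-plane / resolvable-design} type incidence structure: the number $N_0 = k(k-1)+1 = k^2 - k + 1$ is exactly the number of points (and lines) of a projective plane of order $k-1$, in which every line has $k$ points, every point lies on $k$ lines, and any two lines meet in exactly one point (dually, any two points lie on exactly one common line). So I would take network $A$'s nodes to be the ``points'' and network $B$'s nodes to be the ``lines'' (or vice versa), and put an interdependency edge between $A_i$ and $B_j$ exactly when point $i$ lies on line $j$. This is $k$-regular on both sides by the design axioms, and any two lines (two $B$-nodes) share exactly one point, so $|\mathcal{N}(u)\cap\mathcal{N}(v)| = 1$ for every pair, yielding $|\mathcal{N}(\{u,v\})| = 2k-1$ uniformly. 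One then also has to confirm the resulting network is \emph{operating} in the sense of Section \ref{Model_sec} (every $A$-node has an incoming $B$-edge and vice versa, which is immediate since all degrees are $k \geq 2 \geq 1$) and that it is genuinely 2-robust, i.e. no competing network on $N_0$ nodes and $kN_0$ edges does better for $D=1$ or $D=2$; for $D=1$ regularity already gives $\mathcal{MR}(1)=k$, which is optimal by Proposition \ref{Robust_Regular}, and for $D=2$ the value $2k-1$ will be shown to be the global maximum in the companion lemma, so attaining it certifies 2-robustness.

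The main obstacle is that a projective plane of order $k-1$ is only known to exist when $k-1$ is a prime power, so a literal ``take the projective plane'' argument does not cover all $k \geq 2$. I expect the paper to sidestep this either by giving a direct combinatorial construction of a ``near-pencil'' or partial linear space with the two required properties (degree $k$, pairwise intersection $\le 1$) on $k^2-k+1$ points that does not need full projective-plane axioms — only a \emph{linear space}/packing with all blocks of size $k$ — or by a more hands-on incidence matrix built block-by-block. Concretely, I would arrange the $k^2-k+1$ $B$-nodes into one ``central'' node adjacent to a spread of $A$-nodes together with $k-1$ parallel classes, and check greedily that each new block can be chosen to meet all previous blocks in at most one point while keeping every point-degree at most $k$; a counting argument ($k^2-k+1$ points, each in $k$ blocks, blocks of size $k$, $\binom{k}{2}$ pairs per block times $k^2-k+1$ blocks $= \binom{k^2-k+1}{2}$ pairs total, so every pair of points is covered \emph{exactly} once) shows the packing is forced to be perfect, i.e. a resolvable linear space, which is precisely the regularity-plus-unique-intersection condition we need. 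Verifying that this greedy/counting construction actually closes up without obstruction — i.e. that the degree and intersection constraints are simultaneously satisfiable for every $k$ — is the delicate step; the rest is the short degree-counting computation above.
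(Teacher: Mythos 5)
Your reduction of the lemma to a purely combinatorial statement is correct and is the same one the paper uses: by Lemma \ref{NeighborRemoval}, in a $k$-regular bipartite interdependency graph $\mathcal{MR}(2)=\min_{u,v\in B}|\mathcal{N}(\{u,v\})|=2k-\max_{u,v}|\mathcal{N}(u)\cap\mathcal{N}(v)|$, and since $kN_0>N_0$ some node of $A$ has two $B$-neighbors, attaining $2k-1$ is equivalent to every pair of $B$-nodes (and, for 2-robustness in both directions, every pair of $A$-nodes) sharing at most one common neighbor. The genuine gap is that you never actually produce such a graph for all $k\ge 2$, and producing it is the entire content of the lemma. The projective-plane instance covers only those $k$ for which a plane of order $k-1$ exists (in particular $k-1$ a prime power), and your proposed greedy/packing fallback cannot close this gap: your own counting identity $(k^2-k+1)\binom{k}{2}=\binom{k^2-k+1}{2}$ shows that on these parameters any packing with pairwise block intersections at most one must cover every pair \emph{exactly} once, i.e.\ must be the incidence structure of a projective plane of order $k-1$. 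Hence the ``delicate step'' you defer is not merely delicate — it is equivalent to constructing such a plane, which provably does not exist for some $k$ (e.g.\ order $6$, i.e.\ $k=7$, by Bruck--Ryser/Tarry). So the route you sketch cannot be completed for all $k\ge 2$.

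For comparison, the paper does not go through projective planes explicitly: it gives a concrete four-group construction (Figure \ref{Robust_Construction}) — a hub node of $A$ joined to $k$ nodes of $B$, fanning out to $k$ batches of $k-1$ nodes of $A$, with the remaining edges assigned by a cyclic-shift rule modulo $k-1$ — and then asserts the list of intersection properties that amount exactly to ``every pair in the same network shares exactly one neighbor.'' Your observation in fact cuts against the paper as well: since any graph with those asserted properties \emph{is} a projective plane of order $k-1$, the paper's modular rule can only deliver them when such a plane exists (and the cyclic shifts behave as claimed essentially only when $k-1$ is prime), so the construction cannot be valid for every $k\ge2$ as stated. That is a worthwhile remark to raise, but it does not repair your argument: as written, your proposal verifies why a suitable graph would have $\mathcal{MR}(2)=2k-1$ yet leaves the existence/construction step — the claim being proved — unestablished, and the completion strategy you propose would fail for the values of $k$ noted above.
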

\begin{proof}
By construction - Divide the nodes in network $A$ and $B$ into four groups as in Figure \ref{Robust_Construction}. From left, group 1 has a single node from network $A$, group 2 has $K$ nodes from network $B$, group 3 has $k$ batches of $k-1$ nodes from network $A$, and finally group 4 has $k-1$ batches of $k-1$ nodes from network $B$. Note that the total number of nodes in both networks $A$ and $B$ is $k(k-1)+1$.

Next, we connect the nodes as follows. Connect the single node in group 1 to all the $k$ nodes in group 2. Next, connect node $i$ in group 2 to all the $k-1$ nodes inside batch $i$ in group 3. Finally, connect node $i$ from the last batch of group 3 to all the $k-1$ nodes of batch $i$ in group 4. So far, all the nodes in group 1, group 2 and the last batch of group 3 has degree $k$. Moreover, every pair of nodes in group 2 (part of network $B$) share exactly one neighbor which is the single node in group 1. Next, we connect the nodes in the batches $1,\cdots,(k-1)$ in group 3 to batches $1,\cdots,(k-1)$ in group 4 as follows.

\textit {For $i,j \in \{1,\cdots,k-1\}$, connect node $i$ in batch $j$ of group 3 to node $i\pmod{k-1}$ in batch $1$ of group 4, node $i+j-1\pmod{k-1}$ in batch $2$ of group 4, ... , node $i+j+k-3\pmod{k-1}$ in batch $k-1$ of group 4}

This rule satisfies the following conditions: 
\begin{enumerate}
	\item every node is connected to $k-1$ new edges;
	\item every node from group 3 is connected to exactly one node inside each batch in group 4;
	\item no pair of nodes inside a batch in group 3 share a neighbor in group 4;
	\item every pair of nodes from two different batches in group 3 share exactly one neighbor in group 4;
	\item every node from group 4 is connected to exactly one node inside each of the first $(k-1)$ batches of group 3;
	\item no pair of nodes inside a batch in group 4 share a neighbor in the first $(k-1)$ batches of group 3;
	\item every pair of nodes from two different batches in group 4 share exactly one neighbor in the first $(k-1)$ batches of group 3. 
\end{enumerate}

Note that conditions (2)-(4) guarantee that every pair of nodes in network $A$ share exactly one neighbor. In addition, conditions (5)-(7) guarantee that any pair of nodes inside group 4 or between groups 2 and 4 share exactly one neighbor.

Therefore, we have constructed a graph with $N_0$ nodes where every node has degree $k$, and every pair of nodes in either network $A$ or network $B$ share exactly one neighbor; i.e. $\mathcal{MR}(D)=2k-1$.

\begin{figure}[ht]
	\centering
	\includegraphics[scale=0.6]{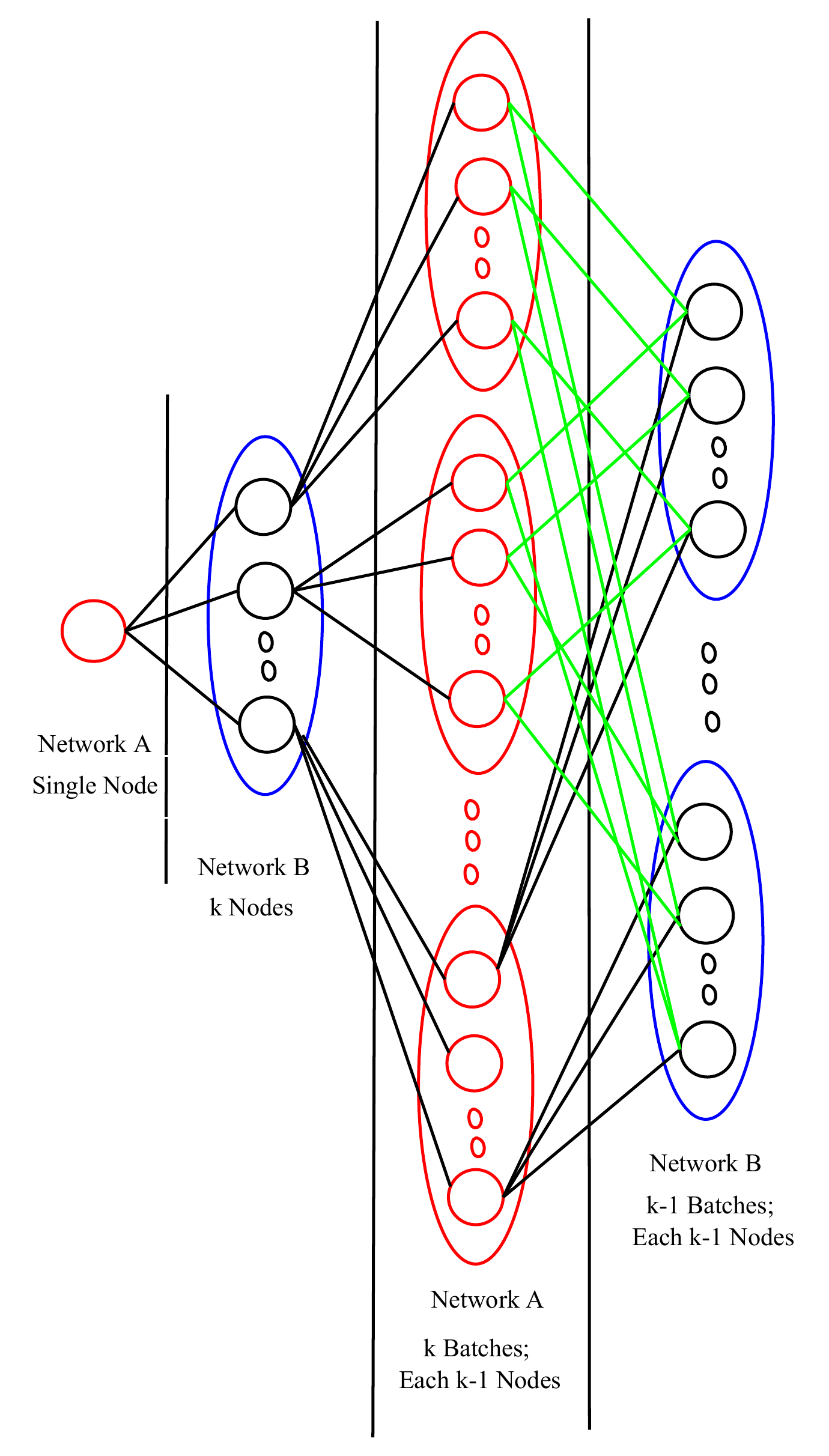}
	\caption{An example of construction of a 2-robust network. Black links denote the first set of edges connecting the nodes in group 1, group 2 and the last batch of group 3. Green links denote the set of edges connecting the nodes in group 3 and group 4.}
	\label{Robust_Construction}
	\vspace{-0.2cm}
\end{figure}

\end{proof}

\begin{lemma}\label{2robust_Lemma2}
For $N \geq N_0$ and $k\geq2$, there exists no regular network with $N$ nodes and degree $k$ such that $\mathcal{MR}(D=2)>2k-1$.
\end{lemma}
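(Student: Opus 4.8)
The plan is to reduce the claim, via Lemma~\ref{NeighborRemoval}, to a statement about common neighborhoods in the underlying $k$-regular bipartite interdependency graph, and then to observe that such a common neighbor is forced to exist as soon as $k\geq 2$; the hypothesis $N\geq N_0$ will turn out not to be needed for the upper bound itself.

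First I would record that, since failures cascade in a single stage, Lemma~\ref{NeighborRemoval} gives
\begin{equation}
\mathcal{MR}(D=2)=\min\{\,|\mathcal{N}(u)\cup\mathcal{N}(v)| : u,v\in B,\ u\neq v\,\},
\end{equation}
where $\mathcal{N}(\cdot)$ denotes the set of nodes of $A$ supplying an interdependency edge. In a $k$-regular network $|\mathcal{N}(u)|=|\mathcal{N}(v)|=k$, so inclusion--exclusion gives $|\mathcal{N}(u)\cup\mathcal{N}(v)|=2k-|\mathcal{N}(u)\cap\mathcal{N}(v)|$. Hence $\mathcal{MR}(D=2)>2k-1$ would force $|\mathcal{N}(u)\cap\mathcal{N}(v)|=0$ for \emph{every} pair of distinct $u,v\in B$, i.e.\ no node of $A$ has two distinct neighbors in $B$.

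Next I would derive the contradiction: pick any node $a\in A$; since the graph is $k$-regular with $k\geq 2$, $a$ has at least two distinct neighbors $u,v\in B$, whence $a\in\mathcal{N}(u)\cap\mathcal{N}(v)$ — contradicting $|\mathcal{N}(u)\cap\mathcal{N}(v)|=0$. (Equivalently, and more globally: pairwise-disjoint neighborhoods $\mathcal{N}(u)$, $u\in B$, inside the $N$-element set $A$ would force $Nk=\sum_{u\in B}|\mathcal{N}(u)|\leq N$, i.e.\ $k\leq 1$.) So no $k$-regular network attains $\mathcal{MR}(D=2)>2k-1$; together with Lemma~\ref{2robust_Lemma1} this shows $2k-1$ is the exact optimum at $N=N_0$.

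I do not anticipate a real obstacle here — the content of the lemma is essentially this one observation, and the only subtlety is where the threshold $N_0$ comes from. Writing $\mathcal{MR}(D=2)=2k-\max_{u\neq v}|\mathcal{N}(u)\cap\mathcal{N}(v)|$, one sees that $\mathcal{MR}(D=2)=2k-1$ is equivalent to the bipartite interdependency graph being $C_4$-free; the double count $\sum_{a\in A}\binom{d(a)}{2}=N\binom{k}{2}$ of $2$-paths centered in $A$ against the $\binom{N}{2}$ endpoint pairs in $B$ shows that a $C_4$-free $k$-regular such graph can exist only when $N\binom{k}{2}\leq\binom{N}{2}$, i.e.\ $N\geq N_0=k(k-1)+1$. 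Thus $N\geq N_0$ is precisely the range in which the upper bound $2k-1$ of this lemma is attainable (and, contrapositively, $N<N_0$ forces some pair of $B$-nodes to share two neighbors, so $\mathcal{MR}(D=2)\leq 2k-2$ there — the remaining half of the conjecture).
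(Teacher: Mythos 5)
Your proof is correct and follows essentially the same route as the paper's: both reduce $\mathcal{MR}(D=2)>2k-1$ (equivalently $\mathcal{MR}(D=2)=2k$) to the requirement that all pairs of $B$-nodes have disjoint neighborhoods, which forces every node of $A$ to have degree at most $1$ and contradicts $k\geq 2$; indeed neither argument actually uses $N\geq N_0$. Your closing remark correctly locates the role of $N_0$ in the attainability of $2k-1$ (Lemmas \ref{2robust_Lemma1} and \ref{2robust_Lemma3}) rather than in this upper bound.
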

\begin{proof}
Since each node has degree $k$, no two nodes can be connected to more than $2k$ nodes; i.e. $\mathcal{MR}(2)\leq 2k$. Next, suppose that there exists a $k-$regular network with $\mathcal{MR}(D=2)=2k$. Thus, every pair of nodes in network $B$ are exactly connected to $2k$ nodes; i.e. each node in network $B$ is connected to $k$ distinct nodes. Equivalently, no node in network $A$ is connected to two nodes in network $B$; i.e., every node in network $A$ has degree 1, which is a contradiction to $k \geq 2$. 
\end{proof}

\begin{lemma}\label{2robust_Lemma3}
For $N < N_0$ and $k\geq2$, there exists no regular network with $N$ nodes and degree $k$ such that $\mathcal{MR}(D=2) \geq 2k-1$.
\end{lemma}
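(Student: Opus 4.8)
The plan is a proof by contradiction that reduces the statement to an elementary double-counting inequality on common neighborhoods. Suppose, for some $N<N_0=k(k-1)+1$ and some $k\ge 2$, there were a $k$-regular bidirectional interdependent network on $N$ nodes per side (so $|A|=|B|=N$) with $\mathcal{MR}(D=2)\ge 2k-1$. By Lemma~\ref{NeighborRemoval}, $\mathcal{MR}(2)=\min\{|\mathcal{N}(Y_2)| : Y_2\subseteq B,\ |Y_2|=2\}$. For a pair $Y_2=\{u,v\}$ with $u,v$ both of degree $k$, inclusion--exclusion gives $|\mathcal{N}(\{u,v\})|=|\mathcal{N}(u)\cup\mathcal{N}(v)|=2k-|\mathcal{N}(u)\cap\mathcal{N}(v)|$. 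Hence the hypothesis $\mathcal{MR}(2)\ge 2k-1$ is \emph{equivalent} to the statement that every pair of nodes in network $B$ has at most one common neighbor in network $A$; this is the only place the value $2k-1$ enters.

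Next I would count in two ways the incidences $(a,\{u,v\})$ where $a\in A$ and $\{u,v\}$ is a pair of distinct neighbors of $a$ in $B$. Summing over $a\in A$ yields exactly $N\binom{k}{2}$, since every node of $A$ has degree $k$. Summing over pairs $\{u,v\}\subseteq B$, each pair contributes its number of common neighbors in $A$, which by the previous paragraph is at most $1$; so the total is at most $\binom{N}{2}$. Therefore $N\binom{k}{2}\le\binom{N}{2}$, i.e. $k(k-1)\le N-1$, i.e. $N\ge k(k-1)+1=N_0$, contradicting $N<N_0$. Together with Lemmas~\ref{2robust_Lemma1} and~\ref{2robust_Lemma2}, this settles the conjecture.

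I do not anticipate a serious obstacle: the argument is short. The one point requiring care --- the ``hard part'' --- is keeping straight which side of the bipartition plays which role: regularity is used both to fix $|\mathcal{N}(u)|=|\mathcal{N}(v)|=k$ on the failing side $B$ and to fix the number $N$ of summands $\binom{k}{2}$ on the removed side $A$, and one must invoke $|A|=|B|=N$ so that $\binom{N}{2}$ really is an upper bound on the number of pairs in $B$. By the symmetry of bidirectional edges the same bound also follows from the analogous metric on the $B\to A$ side, but one application on a single side already gives the contradiction, so nothing further is needed.
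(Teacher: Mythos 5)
Your proof is correct, and it takes a genuinely different route from the paper's. The paper argues locally: it fixes a single node $i\in A$, notes that its $k$ neighbors in $B$ send out $k(k-1)$ further edges into $A\setminus\{i\}$, and applies pigeonhole with $|A\setminus\{i\}|\le k(k-1)-1$ when $N=N_0-1$ to produce a pair in $B$ with two common neighbors; it then extends to all $N<N_0$ by an informal monotonicity remark (``as the total number of nodes decreases, $\mathcal{MR}(D=2)$ decreases, too''). You instead translate $\mathcal{MR}(2)\ge 2k-1$ into the $\lambda\le 1$ condition via inclusion--exclusion and run a global double count of cherries centered in $A$, getting $N\binom{k}{2}\le\binom{N}{2}$, hence $N\ge k(k-1)+1=N_0$ in a single inequality. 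What your Fisher-type count buys is that all $N<N_0$ are handled uniformly with no separate monotonicity step (the weakest point of the paper's write-up), and it makes the tightness at $N=N_0$ transparent: equality forces every pair in $B$ to share exactly one neighbor, which is exactly the structure built in Lemma~\ref{2robust_Lemma1}. The paper's local pigeonhole, in exchange, is more elementary and needs only one vertex's second neighborhood. Your care about which side supplies which regularity assumption ($A$-regularity for the $N\binom{k}{2}$ count, $B$-regularity for the intersection bound, and $|A|=|B|=N$ for the $\binom{N}{2}$ pair count) is exactly the right bookkeeping, and the degenerate cases ($N<k$) are vacuous since no $k$-regular simple bipartite graph exists there.
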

\begin{proof}
Suppose $N=N_0-1=k(k-1)$. Consider an arbitrary node $i$ from network $A$. Node $i$ is connected to $k$ nodes in network $B$, namely set $X$, where each of these nodes are also connected to $k-1$ nodes in network $A$, namely set $Y$. Note that $i \notin Y$. Since the total number of nodes in each network is $k(k-1)$, $|Y|\leq k(k-1)-1$. Thus, there exist at least two nodes in $X$ that share a neighbor in $Y$. On the other hand, all nodes in $X$ share node $i$ as their neighbor, too. Therefore, there exist at least two nodes in network $B$ that share more than one node in network $A$. The same argument holds for network $A$. Thus, $\mathcal{MR}(D=2) < 2k-1$ for $N=N_0-1$. Clearly, as the total number of nodes decreases, $\mathcal{MR}(D=2)$ decreases, too. Thus, for any regular network with $N<N_0$, $\mathcal{MR}(D=2) < 2k-1$.
\end{proof}

In order to design a 3-robust network under the definition of Lexicographic robustness, one should search among the 2-robust networks which becomes very complicated. In the next section, we consider the relative robustness and show its relation with expander graphs.

\subsection{Design under Relative Robustness}
First, we show that by definition, a network with large relative $\mathcal{MR}(D)$ has also a large node expansion. Then, we show the construction of expander graphs.

\begin{definition}
The Node Expansion of a bipartite graph $G=\{A,B\}$, denoted by $h(G)$, is defined as:\\
	\begin{equation}
		h(G)=\min_{S \subseteq B} \frac{|\mathcal{N}(S)|}{|S|}
	\end{equation}
	where $\mathcal{N}(S)$ denotes the neighbor nodes of set $S$.
\end{definition}

\begin{lemma}\label{expansion_equivalency}
Under relative robustness definition, the most robust network has the largest node expansion.
\end{lemma}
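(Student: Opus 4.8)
The plan is to unwind the two definitions and show they measure essentially the same quantity. By Lemma \ref{NeighborRemoval} we already know that $\mathcal{MR}(D) = \min\{|\mathcal{N}(Y_D)| : Y_D \subseteq B, |Y_D| = D\}$, i.e.\ $\mathcal{MR}(D)$ is the minimum neighborhood size over all $D$-subsets of $B$. So the relative quantity can be rewritten as
\begin{align}
\min_{1 \leq D \leq N} \frac{\mathcal{MR}(D)}{D}
 &= \min_{1 \leq D \leq N} \min_{|Y_D| = D} \frac{|\mathcal{N}(Y_D)|}{D} \nonumber \\
 &= \min_{\emptyset \neq S \subseteq B} \frac{|\mathcal{N}(S)|}{|S|} = h(G), \nonumber
\end{align}
where the middle equality just merges the two nested minimizations (every nonempty $S \subseteq B$ has some size $D \in \{1,\dots,N\}$, and conversely). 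Thus the lower bound on the relative $\mathcal{MR}(D)$ that Definition \ref{Relative_Robust} maximizes is literally $h(G)$.

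First I would state this identity $\min_{1\le D\le N}\mathcal{MR}(D)/D = h(G)$ explicitly as the crux, citing Lemma \ref{NeighborRemoval} for the expression of $\mathcal{MR}(D)$ in terms of neighborhoods. Then I would note that the "largest lower bound on the relative $\mathcal{MR}(D)$" in Definition \ref{Relative_Robust} is exactly the value $\min_{1\le D\le N}\mathcal{MR}(D)/D$ itself (the tightest lower bound on a quantity being its infimum), so maximizing that lower bound over all admissible networks $G$ with $N$ nodes and $E$ edges is the same as maximizing $h(G)$. Hence the most robust network under the relative definition is the one with the largest node expansion, which is the claim.

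The main obstacle — and it is a mild one — is being careful about the quantifier bookkeeping in the double minimization: one must check that ranging $S$ over all nonempty subsets of $B$ and ranging first over sizes $D$ then over $D$-subsets produce the same collection, and that the empty set is correctly excluded (it is, since $D \geq 1$). A secondary point to address cleanly is the phrase "largest lower bound": I would remark that for any fixed $G$, the function $D \mapsto \mathcal{MR}(D)/D$ attains a minimum since $D$ ranges over the finite set $\{1,\dots,N\}$, so the best (largest) lower bound valid for all $D$ is precisely this minimum, making the identification with $h(G)$ exact rather than merely up to constants. No expander-graph machinery is needed for this lemma; it is purely a definitional equivalence, and the construction of graphs achieving large $h(G)$ is deferred to the subsequent discussion.
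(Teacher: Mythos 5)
Your proposal is correct and follows essentially the same route as the paper: expand $\mathcal{MR}(D)$ via Lemma \ref{NeighborRemoval} and collapse the double minimization over sizes $D$ and $D$-subsets of $B$ into the single minimization over nonempty $S \subseteq B$ defining $h(G)$. Your extra remarks on excluding the empty set and on the finite minimum being the tightest lower bound are fine clarifications but do not change the argument.
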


\begin{proof}
\begin{subequations}
	\begin{align}
	 & \min_{1 \leq D \leq N} \frac{\mathcal{MR}(D)}{D} = \\
	 & \min_{1 \leq D \leq N} \frac{\min_{Y_D \subset B , |Y_D|=D} |\mathcal{N}(Y_D)|}{D} = \\
	 & \min_{S \subseteq B} \frac{|\mathcal{N}(S)|}{|S|} = h(G)
	\end{align}
\end{subequations}
\end{proof}

Lemma \ref{expansion_equivalency} shows that in order to design a robust interdependent network, it is enough to design a network with large node expansion; i.e. an expander bipartite graph. Analysis and design of node and edge expander graphs is a well-studied topic (See \cite{hoory2006expander, vadhan2012pseudorandomness}). It has been shown that some random graphs share the properties of expander graphs, and they have been used to prove the existence of expander graphs. However, explicit construction of an expander graph is more difficult and there exist only three main strategies for designing them (See \cite{yehudayoff2012proving,Expanders2013Odonnell} for more details). 

In the following, we mention one of the main results regarding random graphs and their relation with expander graphs.

Let a bipartite graph $G=\{A,B\}$ be a $(D,r)$ node expander if for all sets $S \subseteq B$ of size at most $D$, the neighborhood $\mathcal{N}(S)$ is of size at least $r|S|$. Moreover, let $Bip_{N,k}$ be the set of bipartite graphs that have $N$ nodes on each side and are $k$-Bregular, meaning that every node in network $B$ has degree $k$.

\begin{theorem}\label{random_expander}
For every constant $k$, there exists a constant $\alpha > 0$ such that for all $N$, a uniformly random graph from $Bip_{N,k}$ is an $(\alpha N,k - 2)$ node expander with probability at least $\frac{1}{2}$.
\end{theorem}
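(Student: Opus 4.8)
The plan is to run a first-moment (union-bound) argument over all potential witnesses to non-expansion, which is the standard route for showing that random regular (bi-)graphs are good vertex expanders. Fix the natural model for a uniformly random $G \in Bip_{N,k}$: each vertex $v \in B$ independently chooses its neighborhood $\mathcal{N}(v) \subseteq A$ to be a uniformly random $k$-subset of the $N$ vertices of $A$. The graph fails to be an $(\alpha N, k-2)$ node expander precisely when there exist $S \subseteq B$ with $1 \le |S| = s \le \alpha N$ and $T \subseteq A$ with $|T| = (k-2)s$ such that $\mathcal{N}(v) \subseteq T$ for every $v \in S$ (if $|\mathcal{N}(S)| < (k-2)s$ we may pad $\mathcal{N}(S)$ up to a set of that size, which is possible since $(k-2)s \le N$ for the small $\alpha$ we will choose). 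For a fixed such pair, $\Pr[\mathcal{N}(v) \subseteq T] = \binom{(k-2)s}{k}/\binom{N}{k} \le \big((k-2)s/N\big)^k$, and these events are independent over $v \in S$, so I would bound
\[
\Pr[\text{not an }(\alpha N, k-2)\text{ expander}] \;\le\; \sum_{s=1}^{\lfloor \alpha N\rfloor}\binom{N}{s}\binom{N}{(k-2)s}\Big(\tfrac{(k-2)s}{N}\Big)^{ks}.
\]

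The core of the proof is the elementary estimate that each summand is at most $(Cs/N)^s$ for a constant $C = C(k)$. Using $\binom{N}{s}\le (eN/s)^s$ and $\binom{N}{(k-2)s}\le \big(eN/((k-2)s)\big)^{(k-2)s}$ and simplifying, the powers of $N/s$ coming from the two binomial coefficients telescope against the factor $\big((k-2)s/N\big)^{ks}$, leaving a surplus single power of $s/N$; concretely the $s$-th term is at most $\big(e^{k-1}(k-2)^2\, s/N\big)^s$. I would then fix $\alpha$ small enough in terms of $k$, e.g. $\alpha \le \tfrac{1}{3e^{k-1}(k-2)^2}$, so that $e^{k-1}(k-2)^2\, s/N \le 1/3$ for every $s \le \alpha N$; the sum is then dominated by $\sum_{s\ge 1} 3^{-s} = 1/2$, which gives the stated probability bound (the $s=1$ term already vanishes since $\binom{k-2}{k}=0$, so the inequality is in fact strict, and $k=2$ is trivial since every graph is a $(D,0)$ expander).

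The main obstacle — indeed essentially the only step that needs care — is arranging the two binomial coefficients and the probability factor so that the cancellation leaves exactly a geometric series in $s$ with ratio $O_k(s/N)$, which one can then force below $1$ by shrinking $\alpha$; this is where the particular threshold $r = k-2$ matters, since repeating the computation with $r = k-1$ leaves no surplus power of $s/N$ and the union bound fails. Two minor points also have to be dispatched: pinning down the precise random model and checking that in it the per-vertex neighborhoods really are independent and uniform (or adapting the estimate for $\Pr[\mathcal{N}(v)\subseteq T]$ if a configuration/multigraph model is preferred), and handling the small-$s$ boundary terms together with the rounding of $(k-2)s$ to an integer — neither of which affects the geometric-series conclusion.
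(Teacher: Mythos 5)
Your proposal is correct: the paper itself gives no proof of Theorem \ref{random_expander}, deferring entirely to the cited pseudorandomness monograph, and your argument is essentially the standard one given there --- a union bound over pairs $(S,T)$ witnessing non-expansion, the estimate $\Pr[\mathcal{N}(v)\subseteq T]\le((k-2)s/N)^k$ with independence across $v\in S$, the binomial bounds $\binom{N}{s}\le(eN/s)^s$ yielding a per-term bound of the form $\bigl(e^{k-1}(k-2)^2 s/N\bigr)^s$, and a choice of $\alpha$ small enough that the resulting series sums to at most $\tfrac12$. Your observation that the surplus factor $s/N$ disappears if one aims for expansion $k-1$, and your check that the uniform distribution on $Bip_{N,k}$ coincides with independent uniform $k$-subsets for the $B$-side vertices, are exactly the points on which the cited proof also turns, so no gap remains.
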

\begin{proof}
See \cite{vadhan2012pseudorandomness} for proof.
\end{proof}

Note that for every $S \subseteq B$, the largest possible neighbor has size of $k|S|$, and Theorem \ref{random_expander} denotes that in a uniform random graph, every $S \subseteq B$ of size $\alpha N$ has neighbors of size $(k-2)|S|$ with probability more than half.

\section{Discussion}\label{Discussion_sec}

Throughout this paper, we analyzed the robustness of interdependent networks for star topologies. We defined metric $\mathcal{MR}(D)$ and proved the hardness of evaluating this metric for arbitrary values of $D$ in both unidirectional and bidirectional interdependent networks. We also proved that uniform distribution of edges in a network would result in more robustness. A natural direction of future research would be analyzing more general topologies for interdependent networks which makes the problem more complicated. In fact, we prove that for the tree topologies, evaluating metric $\mathcal{MRB}(D)$ in bidirectional interdependent networks becomes NP-complete even for the special case of total failure in network $B$; i.e. $D=N_2$. This is due to the fact that failures cascade both inside the networks and between the networks.

\begin{theorem}\label{Tree_Hardness}
For $D=N_2$, finding the $\mathcal{MRB}(D)$ in a bidirectional interdependent network with tree topology is an NP-complete problem.
\end{theorem}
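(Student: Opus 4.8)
The plan is to exhibit a polynomial-time reduction from a known NP-complete problem to the problem of computing $\mathcal{MRB}(N_2)$ in a bidirectional interdependent network where network $A$ has a tree topology. A natural candidate source problem is the \emph{minimum vertex cover} problem (equivalently set cover, or the balanced biclique problem already used earlier in the paper); I would try vertex cover first because the quantity $\mathcal{MRB}(N_2)$ counts node removals from \emph{both} networks, and in the tree case the removals inside $A$ must break the connectivity of $A$ to its source $S_A$, which has the flavor of ``hitting'' paths — a covering condition. First I would set up the gadget: given a graph $H=(V,E_H)$ for which we want a vertex cover of size $\le t$, build network $A$ as a star/tree rooted at $S_A$ whose structure encodes $V$ and $E_H$ (e.g. a depth-two tree where the first level corresponds to vertices of $H$ and deeper nodes or the interdependency edges encode the edges of $H$), and build network $B$ together with the bidirectional interdependency edges so that achieving total failure $D=N_2$ in $B$ forces a cascade that propagates back into $A$ and down the tree.

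The key steps, in order, are: (1) describe the construction precisely and verify it is polynomial in the size of $H$; (2) argue the ``forward'' direction — a vertex cover of $H$ of size $t$ yields a removal set of size $f(t)$ (some explicit affine function) in the two networks whose cascade kills all of $B$ — here I would track the cascade explicitly using the fact, emphasized in the Discussion, that in the tree topology failures cascade both within $A$ (a removed internal tree node disconnects its whole subtree from $S_A$) and across the interdependency edges; (3) argue the ``reverse'' direction — any removal set of size $\le f(t)$ causing total failure in $B$ can be transformed, without increasing its size, into one of a canonical form (e.g. all removed $A$-nodes lie at the vertex level of the tree), and that canonical removal set projects to a vertex cover of $H$ of size $\le t$; (4) combine (2) and (3) to conclude that deciding $\mathcal{MRB}(N_2)\le f(t)$ is equivalent to deciding whether $H$ has a vertex cover of size $\le t$, and note membership in NP is immediate since a candidate removal set can be checked by simulating the (polynomially long) cascade.

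The main obstacle I expect is step (3), the normalization argument: because removals are allowed in \emph{both} $A$ and $B$ and because the tree allows removing an internal node to wipe out an entire subtree ``for free,'' an adversarial optimal solution could be structurally very different from the vertex-cover-induced one, and I must show an exchange argument that pushes all removals into the intended level of the gadget without increasing the total count. Getting the cost function $f(t)$ and the gadget sizes (number of dummy leaves, multiplicities of interdependency edges) calibrated so that this exchange is always size-non-increasing — and so that a minimum vertex cover is strictly cheaper than any ``cheating'' solution — is the delicate part; the rest is routine bookkeeping of the one-versus-multi-stage cascade behavior. If vertex cover turns out to resist a clean exchange argument, the fallback is to reduce instead from the balanced complete bipartite subgraph problem exactly as in Theorem \ref{MRD_hardness}, embedding that bipartite instance as the $A$-to-$B$ interdependency and using a trivial tree (a star) on the $A$ side so that only the cross-network cascade matters, with the tree structure used only on the small auxiliary part needed to force $D=N_2$.
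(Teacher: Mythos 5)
Your submission is a plan rather than a proof, and the gap sits exactly where you yourself locate it: the gadget is never specified, and the normalization/exchange step (3) — showing that an arbitrary optimal removal set, which may exploit removals in \emph{both} networks and the fact that deleting one internal tree node disconnects its entire subtree from the source, can be pushed into the canonical ``vertex level'' form without increasing its size — is precisely the hard content of the theorem and is left unresolved. Without a concrete construction and that exchange argument, neither direction of the reduction can be verified, and calibrating the cost function $f(t)$ and the multiplicities so that no ``cheating'' solution beats the vertex-cover-induced one is not routine bookkeeping; it is the proof.

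The fallback you sketch does not repair this, because it removes the very feature that makes $D=N_2$ hard. If network $A$ is a star (and the auxiliary tree part is negligible), you are essentially back in the bidirectional star setting, where failures cascade in only one stage: there $\mathcal{MR}(N_2)=N_1$ trivially (Proposition \ref{Polynomial_MRD}), and $\mathcal{MRB}(N_2)$ amounts to choosing $S\subseteq B$ to fail via neighbor removal and removing $B\setminus S$ directly, i.e.\ minimizing $N_2-|S|+|\mathcal{N}(S)|$, a deficiency-type (submodular) quantity that can be computed in polynomial time — so the biclique reduction of Theorem \ref{MRD_hardness}, which targets \emph{arbitrary} $D$ for $\mathcal{MR}$, cannot be transplanted ``exactly'' to total failure with removals allowed in $B$. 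The paper's proof instead reduces from $\mathcal{MRB}(N_2)$ in a \emph{unidirectional} interdependent network with star topology, which is NP-complete by the authors' earlier work: every node $A_i$ (resp.\ $B_j$) is given a child $A_{ii}$ (resp.\ $B_{jj}$) in a depth-two tree, and each directed edge $A_i\rightarrow B_j$ (resp.\ $B_j\rightarrow A_i$) is replaced by a bidirectional edge from the child $A_{ii}$ to $B_j$ (resp.\ from $B_{jj}$ to $A_i$); the tree's intra-network cascade then simulates the multi-stage unidirectional cascade, and a simple exchange (replace removed children by their parents) maps optimal removal sets between the two instances. To salvage your route you would need either that prior hardness result or a fully specified vertex-cover gadget together with the completed exchange argument.
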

\begin{proof}
The proof is based on a reduction from the problem of finding $\mathcal{MRB}(D)$ in a unidirectional interdependent network with star topology which is proved to be NP-complete \cite{parandehgheibi2013robustness}. Consider graph $G$, an arbitrary unidirectional interdependent network with star topology, with nodes $\{A_1,\cdots,A_{N_1}\}$ in network $A$, nodes $\{B_1,\cdots,B_{N_2}\}$ in network $B$ and two sources $S_1$ and $S_2$ which are directly connected to nodes in network $A$ and $B$, respectively. Let $E_{AB}$ represent the set of edges from network $A$ to network $B$. Similarly, let $E_{BA}$ represent the set of edges from network $B$ to network $A$.

Construct graph $G'$, a bidirectional interdependent network with tree topology, using graph $G$ as follows. Generate graph $G'$ similar to graph $G$, and remove all the edges $E_{AB}$ and $E_{BA}$. For each node $A_i$, generate a child node $A_{ii}$ with an edge from $A_i$ to $A_{ii}$. Similarly, for each node $B_j$, generate a child node $B_{jj}$ with an edge from $B_j$ to $B_{jj}$. For every edge in $E_{AB}$ connecting node $A_i$ to node $B_j$, connect the child node $A_{ii}$ to node $B_j$ . Similarly, for every edge in $E_{BA}$ connecting node $B_j$ to node $A_i$, connect the child node $B_{jj}$ to node $A_i$ (See Figure \ref{TreeHardnessProof}). This construction guarantees that removal of any set of parent nodes $X$ in both graphs $G$ and $G'$ would lead to the failure of the same parent nodes $Y$ in both graphs. Moreover, note that under this construction, failure of any parent node in graph $G'$ guarantees failure of its child.

\begin{figure}
\centering
\subfigure[Graph $G$]
{\label{UniStarProof}\includegraphics[scale=0.3]{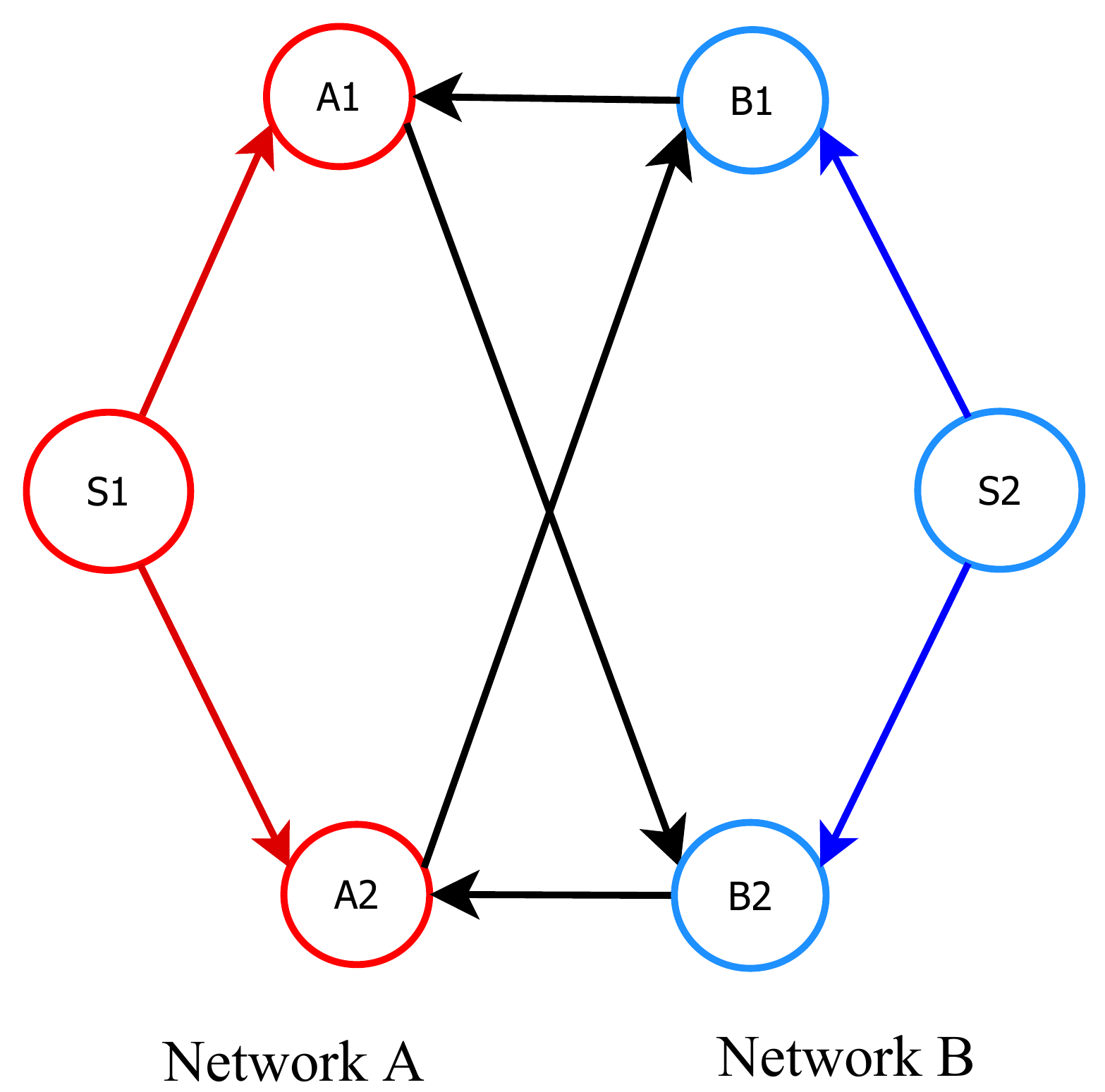}}                
\subfigure[Graph $G'$]
{\label{BiTreeProof}\includegraphics[scale=0.3]{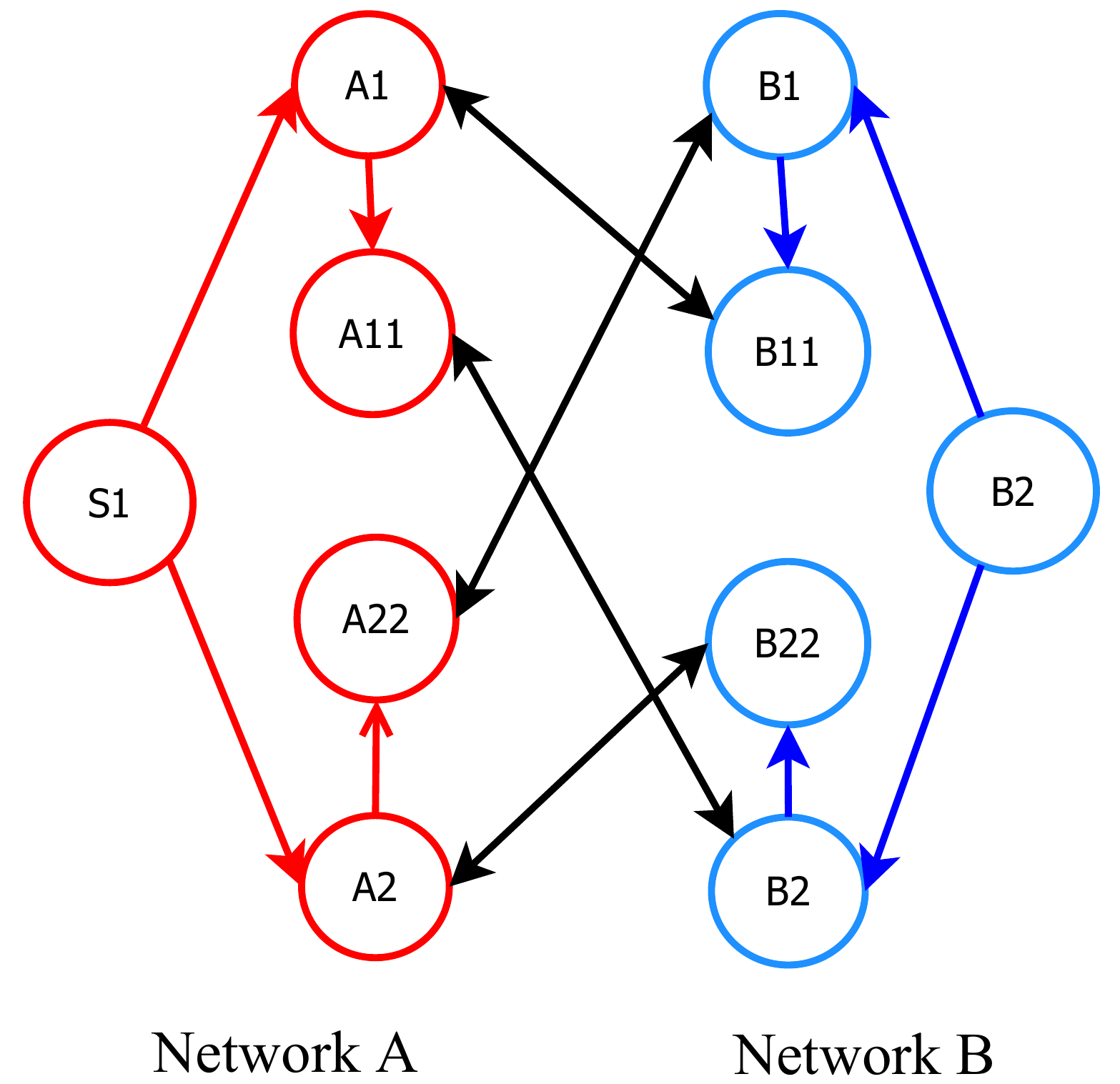}}
\caption{Graph Topologies for Proof of Theorem \ref{Tree_Hardness}}
\label{TreeHardnessProof}
\end{figure}

Next, we show that if $\mathcal{MRB}(D=N'_2)$ in graph $G'$ can be found in polynomial time, $\mathcal{MRB}(D=N_2)$ in graph $G$ can also be found in polynomial time which is a contradiction \cite{parandehgheibi2013robustness}.

Suppose $R'$ is the optimal set of node removals that lead to the failure of the entire network $B$ in graph $G'$. Note that $R'$ could contain both parent and child nodes; however, it is clear that the effect of removal of any parent node $A_i$ (respectively, parent node $B_j$) is more than or equal to the effect of removal of its child node $A_{ii}$ (respectively, child node $B_{jj}$). Thus, we replace all the child nodes in $R'$ to the parent nodes, and call the new set $R'_P$. It is enough to show that $R'_P$ is also the optimal removal set in graph $G$.

Due to the construction of graph $G'$ from $G$, removal of $R'_P$ in $G$ leads to the total failure of network $B$. Next, we prove by contradiction that it is also the optimal solution. Suppose that $R$ is the optimal removal in $G$ where $|R|<|R'_P|$. By construction, removal of nodes $R$ in $G'$ will lead to the total failure of network $B$; thus, $R'_P$ is not optimal which is a contradiction.

\end{proof}

In Theorem \ref{Tree_Hardness}, we proved that finding the optimal removal sets in graph $G$ and $G'$ are equivalent. This was due to the fact that under our construction, parent nodes could replicate the entire cascading failure process. However, this is not true for any arbitrary bidirectional interdependent network with tree topology. Consider the following example.

\textit{Example-} Consider the network in Figure \ref{CounterExample}. Here, nodes $A_1$, $A_2$, $A_3$ and $B_1$ are the parent nodes directly connected to the sources, and the rest of nodes are children. Moreover, the nodes between the two networks are connected via bidirectional edges. Suppose parent node $A_2$ fails. Thus, child nodes $A_{21}$, $B_{21}$ and $B_{22}$ fail which leads to the failure of parent node $A_3$.

\begin{figure}[ht]
	\centering
	\includegraphics[scale=0.3]{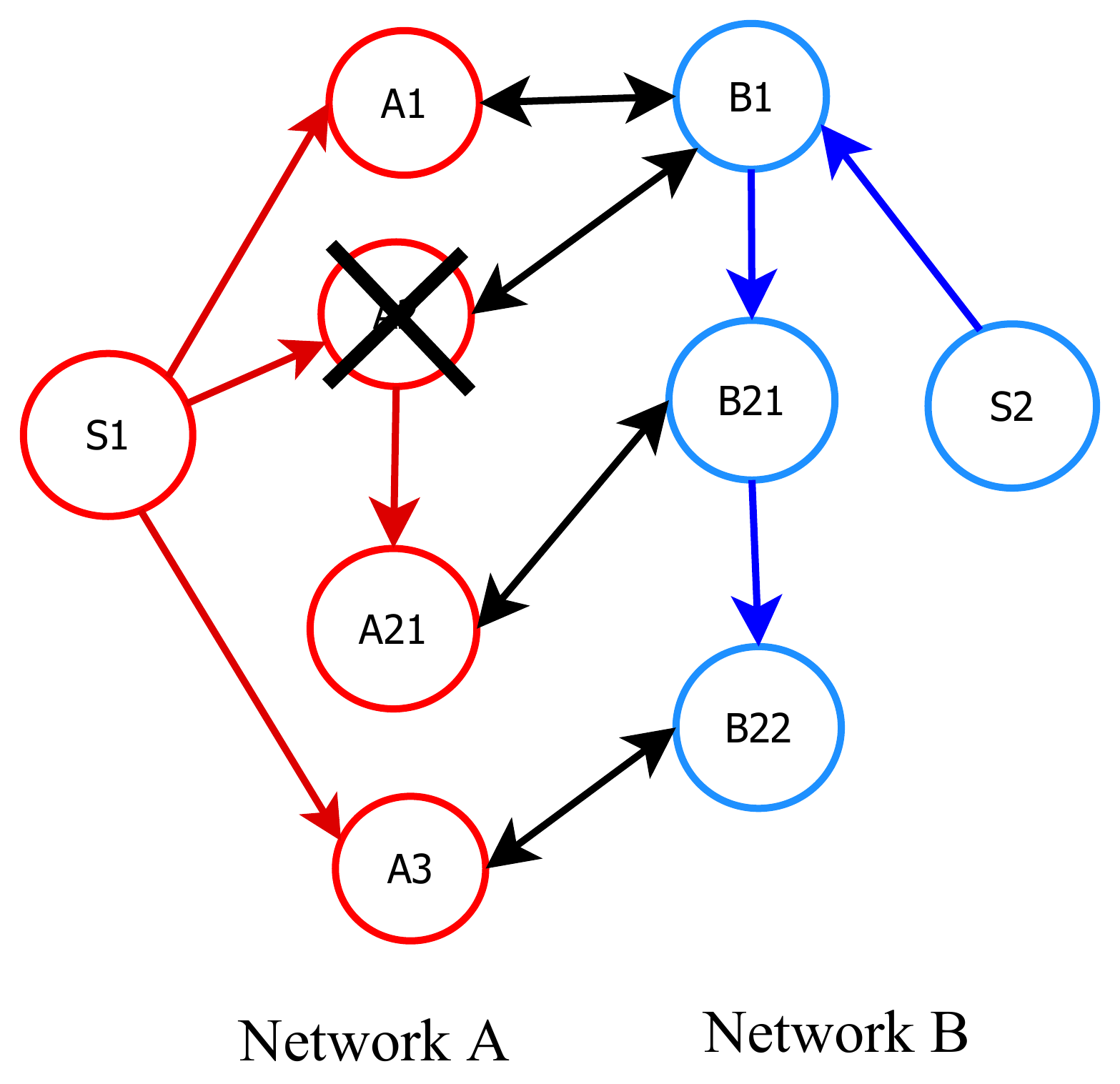}
	\caption{An example of bidirectional interdependent networks with tree topology. Here, the failure of parent node $A_2$ leads to the failure of parent node $A_3$ without affecting any parent node in network $B$.}
	\label{CounterExample}
	\vspace{-0.2cm}
\end{figure}

This example illustrates that the failure of a parent node in network $A$ can lead to the failure of another parent node in network $A$ without affecting any parent node in network $B$. Therefore, there are no graph structural mapping to replicate the cascade of failures from bidirectional networks to unidirectional ones with star topology. Thus, despite the fact that star topologies illustrate many important properties of interdependent networks, the analysis of general topologies cannot be a direct extension of star topologies and requires extensive analysis.

\section{Conclusion}\label{Conclusion_sec}
In this paper, we studied the robustness of interdependent networks. We proposed a deterministic model for analyzing interdependent networks with given topologies. We focused on the networks with star topologies to capture the effect of cascading failures due to interdependency. We defined two metrics $\mathcal{MR}(D)$ ($\mathcal{MRB}(D)$) as the minimum number of nodes that should be removed from network $A$ (both networks) to cause the failure of $D$ nodes in network $B$. We proved that evaluating these metrics is not only NP-complete, but also inapproximable. Moreover, we proposed several algorithms based on greedy, randomize rounding and simulated annealing for evaluating our metrics and compared their performances using simulation results. We proved that in the networks with the same number of nodes and edges, those with bidirectional interdependency are more robust than the ones with unidirectional interdependency. Next, we introduced two closely related definitions for robust interdependent networks, proposed algorithms for explicit design, and showed the relation of robust interdependent networks with expander graphs. Finally, we discussed some ideas about analysis of interdependent networks with general topologies.

\bibliographystyle{IEEEtran}
\bibliography{reference}


\appendices

\section{Proof of Theorem \ref{MRD_hardness}}\label{MRD-hardness-proof}
Consider graph $G$ as a bidirectional interdependent network. According to Lemma \ref{NeighborRemoval}, nodes in set $Y \in B$ fail if \textit{all} of their direct neighbors, namely nodes in $X \in A$, are removed. Note that nodes in $X$ can have direct neighbors other than nodes in $Y$; i.e. nodes in set $B \setminus Y$ (See Figure \ref{Proof1}). Finding $\mathcal{MR}(D)$ in graph $G$ is equivalent to finding the smallest set $X$ whose removal leads to the failure of set $Y$ with at least $D$ nodes.

In order to prove the hardness of finding $\mathcal{MR}(D)$, we construct a new bipartite graph $G'$ as the complement of graph $G$ where all of the interdependent edges are removed, and all disjoint pairs are connected with a bidirectional edge (See Figure \ref{Proof2}). Since there is no connection between nodes in sets $Y$ and $A\setminus X$ in graph $G$, subgraph $(Y,A\setminus X)$ forms a complete bipartite graph (biclique) in $G'$. Therefore, finding $\mathcal{MR}(D)$ is equivalent to finding the largest set $A\setminus X$ where $(A\setminus X,Y)$ is a biclique and $Y$ contains at least $D$ nodes.

\begin{figure}[ht]
\centering
\subfigure[Graph $G$]
{\label{Proof1}\includegraphics[scale=0.25]{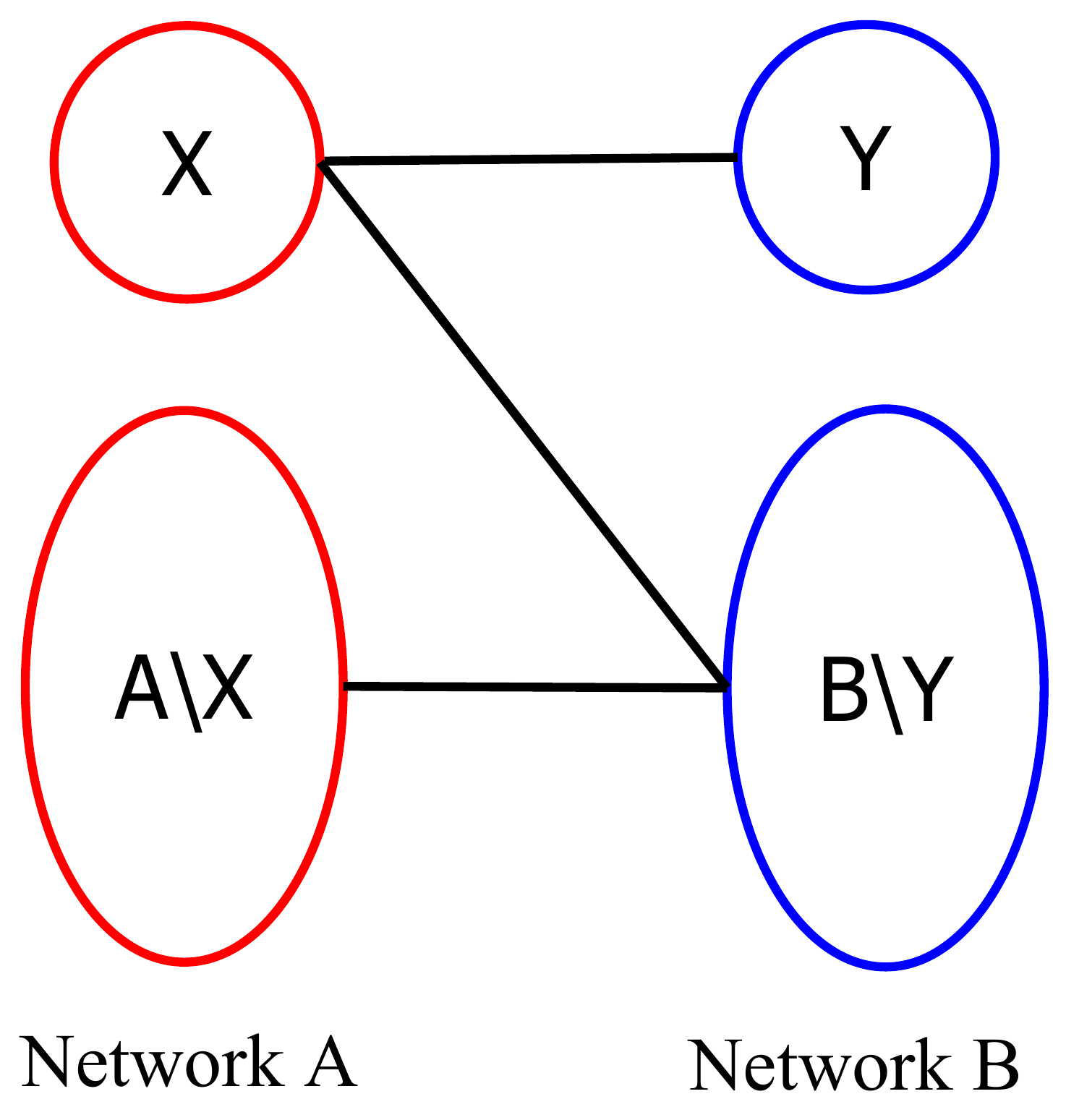}\vspace{-0.3cm}}
\subfigure[Graph $G'$]
{\label{Proof2}\includegraphics[scale=0.25]{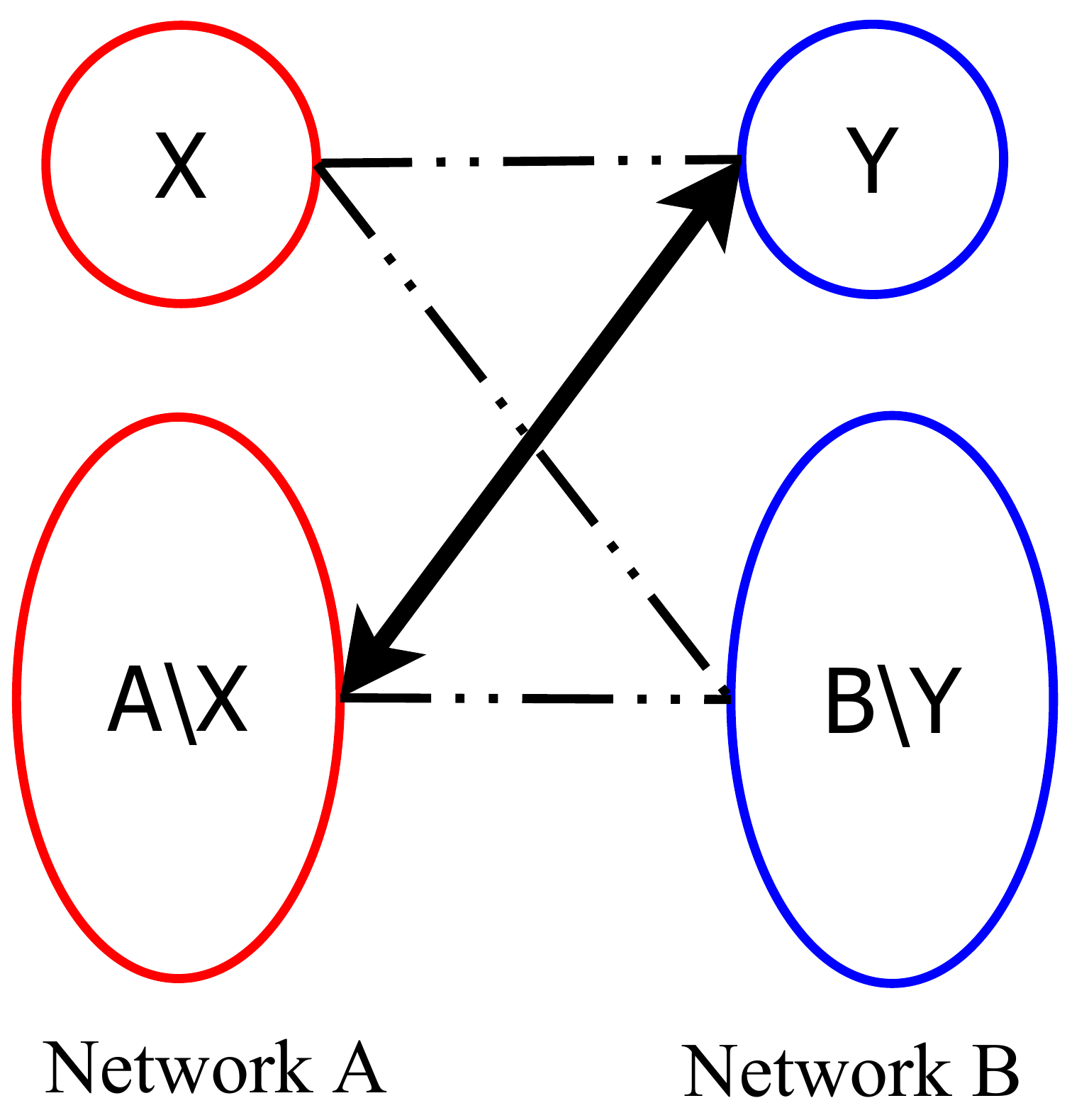}\vspace{-0.3cm}}
\caption{Graph Topologies in Proof of Theorem \ref{MRD_hardness}}
\label{RAID-fig}\vspace{-0.2cm}
\end{figure}

Next we show that finding such biclique is an NP-complete problem. The proof of hardness is based on a reduction from the problem of balanced complete bipartite subgraph which is known to be NP-complete \cite{Garey}. 

\begin{definition}
\textit{Balanced Complete Bipartite Subgraph:} Given a bipartite graph $G=\{V,E\}$ and a positive integer $K \leq |V|$, are there two disjoint subsets $V_1,V_2 \subset V$ such that $|V_1|=|V_2|=K$ and any pair of nodes in $(V_1,V_2)$ be an edges in $E$?
\end{definition}

Next, we show that if we can solve our problem in graph $G$ for every value $D$, then we can solve the Balanced Complete Bipartite Subgraph problem in graph $G'$ for every value $K=D$ as follows. Suppose that $\mathcal{MR}(D)$ can be evaluated in graph $G$ in polynomial time. Thus, for every value of $D$, we can find the largest $A\setminus X$ subgraph of $G'$ in polynomial time so that $(A\setminus X,Y)$ is complete and $|Y|\geq D$. If $|A\setminus X| \geq D$, there exists a complete bipartite graph of size $D$ in graph $G'$, and if $|A\setminus X| < D$, there exists no complete bipartite graph of size $D$ in $G'$. Therefore, we can decide if there exists a balanced complete bipartite subgraph of size $K=D$ in $G'$ in polynomial time which is a contradiction. Thus, our problem is NP-complete.

\section{Proof of Corollary \ref{MRBD_hardness}}\label{MRBD-hardness-proof}

For an arbitrary bipartite network $G$, construct network $G'$ as follows. Replace every node $i \in B$ with a cluster of $W$ nodes, where $W$ is a very large number ($W>N_1+N_2$). For every node $j \in A$ connected to node $i \in B $ in network $G$, connect $j\in A$ in network $G'$ to all $W$ nodes replacing $i\in B$ (See Figure \ref{GraphConversion}). Now it is enough to show that if metric $\mathcal{MRB}(D)$ can be evaluated in polynomial time for network $G'$, metric $\mathcal{MR}(D)$ can also be evaluated in polynomial time for network $G$, which is a contradiction to Theorem \ref{MRD_hardness}.

\begin{figure}[ht]
\centering
\subfigure[Subnetwork $G$]
{\label{GraphConversion1}\includegraphics[scale=0.28]{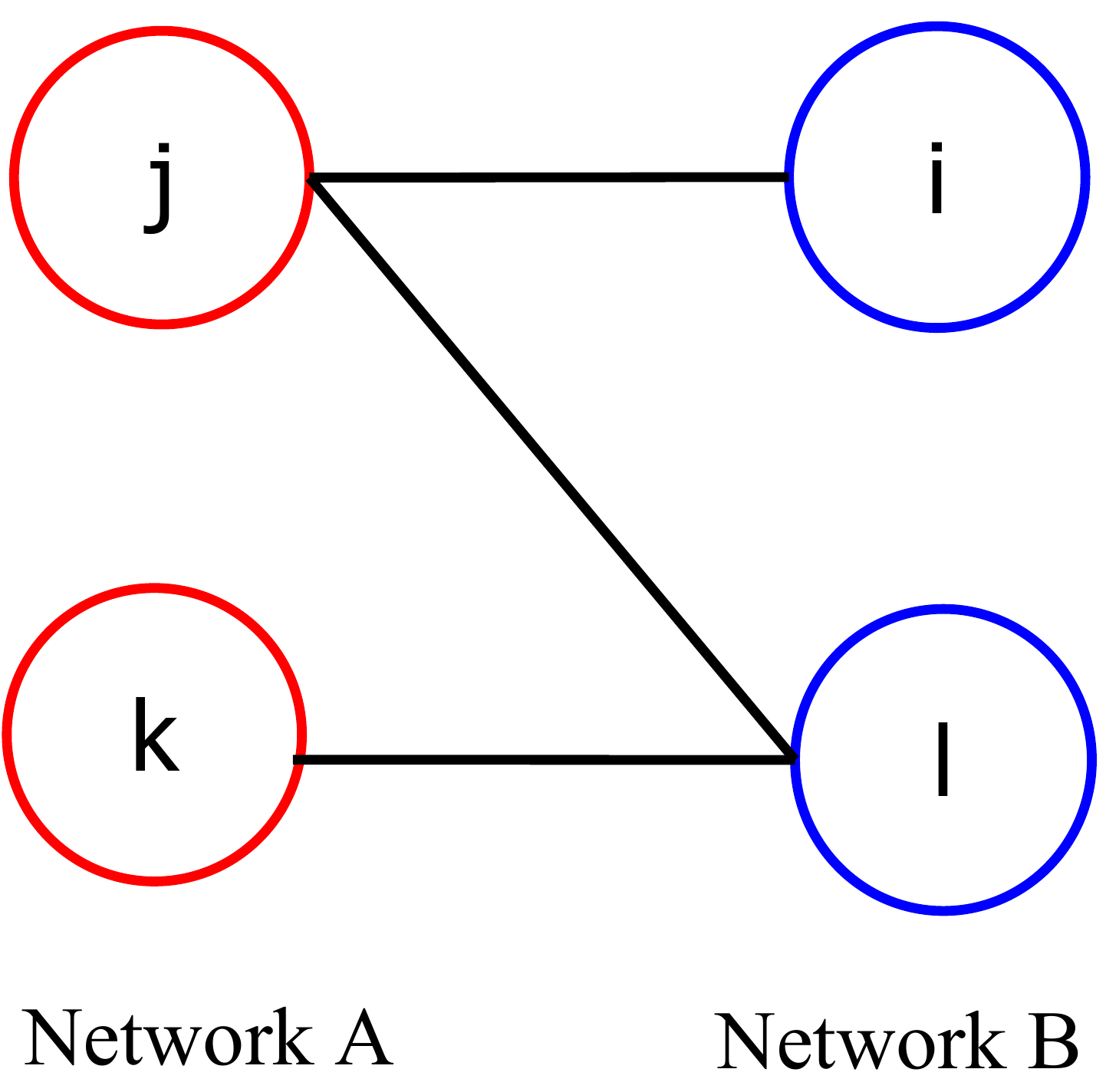}\vspace{-0.3cm}}
\subfigure[Subnetwork $G'$]
{\label{GraphConversion2}\includegraphics[scale=0.28]{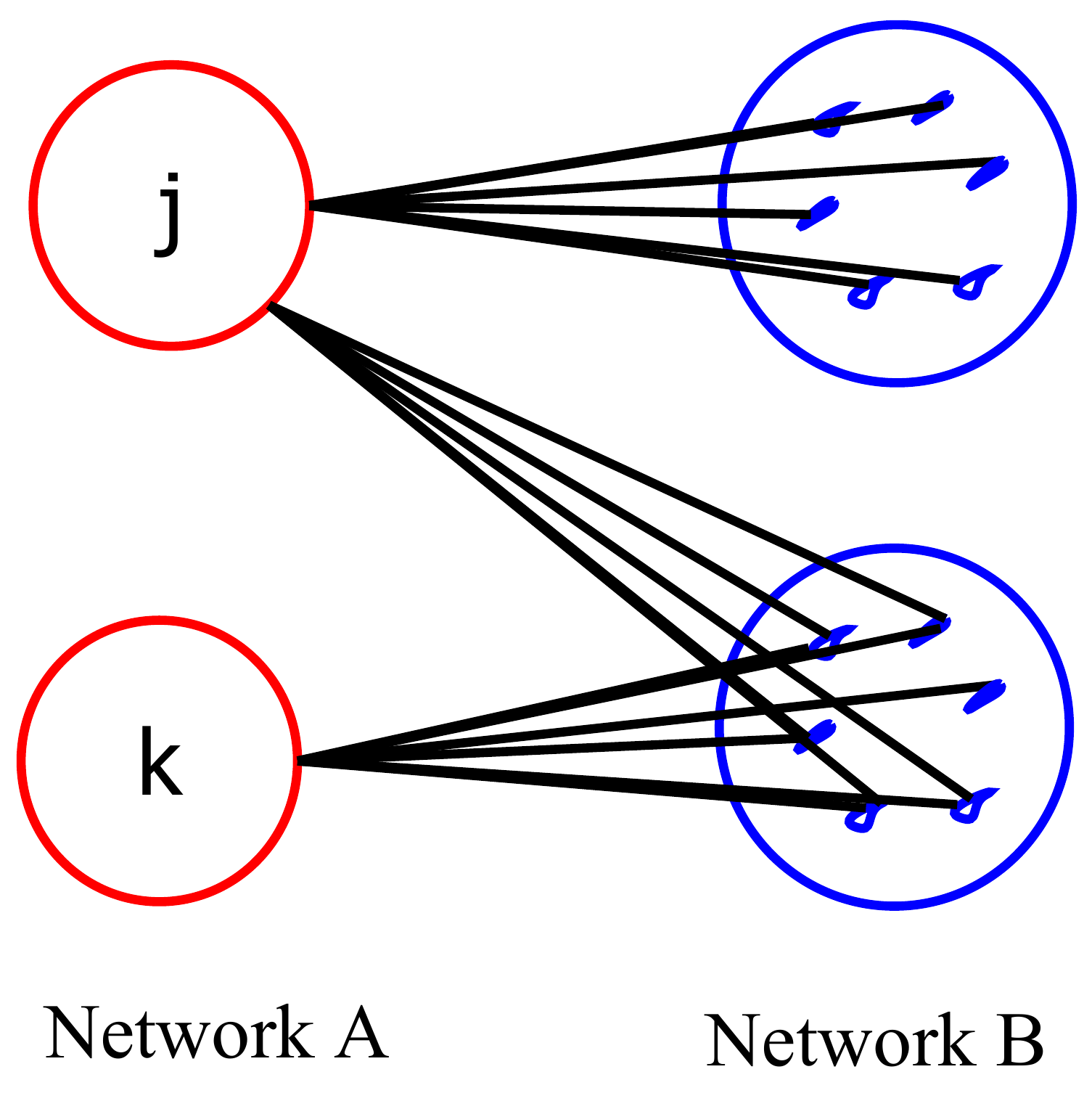}\vspace{-0.3cm}}
\caption{Conversion of subnetwork $G$ to subnetwork $G'$}
\label{GraphConversion}\vspace{-0.2cm}
\end{figure}

Suppose one can evaluate $\mathcal{MRB}(D)$ in network $G'$ for failure of $D'=WD$ nodes in $B$ in polynomial time. It is easy to see that for any removal of nodes $X\in A$ in network $G$ leading to $D$ failures in $B$, removal of the same exact nodes from $A$ in network $G'$ leads to the failure of $D'=DW$ nodes in $B$ and vice versa. This is due to the fact that in network $G'$, all edges between every $j \in A$ and a cluster of $W$ node in $B$ are mapped according to edges in graph $G$. Moreover, since $W>N_1+N_2$, all removals will be only from network $A$. Thus, $\mathcal{MRB}(D)$ for $D'=DW$ failures in network $G'$ is exactly the same as $\mathcal{MR}(D)$ for $D$ failures in network $G$.

\section{Inapproximability of metric $\mathcal{MR}(D)$}\label{Inapproximability_proof}
Consider networks $G$ and $G'$ in Figure \ref{RAID-fig}. For simplicity, we assume that $N_1=N_2=N$. Let $W_D$ be the approximation of $\mathcal{MR}(D)$ in network $G$ where $W_D^*$ is the optimal value. Suppose there exists a PTAS which approximates $\mathcal{MR}(D)$ within factor of $r>1$; i.e. $W_D^* \leq W_D\leq r \cdot W_D^*$. Moreover, define variables $Z_D=N-W_D$ and $Z_D^*=N-W_D^*$.

Moreover, let $X^*$ be the largest balanced biclique in network $G'$. It is easy to see that $X^*=\max_{1\leq D\leq N} \min \{Z_D^*,D\}$. Similarly, let $X$ be the largest balanced biclique in network $G'$ found using the approximated value of metric $\mathcal{MR}(D)$; i.e. $X=\max_{1\leq D\leq N} \min \{Z_D,D\}$. 

Suppose $Z_D^* \geq \frac{N}{k}$ for some constant $k>1$ \footnote{Note that for constant values of $D$ or $Z_D^*$, one can find an exact balanced biclique in polynomial time by an argument similar to Proposition \ref{Polynomial_MRD}. Therefore, the difficulty is for non-constant values of $D$ and $Z_D^*$; i.e. $D$ and $Z_D^*$ are a fraction of network. Thus, $k$ will a constant value.}. Then, the following equations hold.

\begin{subequations}
	\begin{align}
								 & W_D^* \leq W_D\leq r \cdot W_D^* \label{inapp_Eq1}\\
		\Rightarrow  & N-r \cdot W_D^* \leq N-W_D \leq N-W_D^* \\
		\Rightarrow  & -N(r-1)+r \cdot Z_D^* \leq Z_D \leq Z_D*  \\
		\Rightarrow  & [r-k(r-1)] \cdot Z_D^* \leq Z_D \leq Z_D*  \\
		\Rightarrow  & \min\{D, [r-k(r-1)] \cdot Z_D^*\} \leq \min\{D, Z_D\}  \nonumber \\
								 & \leq \min\{D, Z_D^* \}  \quad \forall D \in \{1\leq D\leq N\}\\
		\Rightarrow  & [r-k(r-1)] \cdot \min\{D, \cdot Z_D^*\} \leq \min\{D, Z_D\} \nonumber \\
								 & \leq \min\{D, Z_D^* \}  \quad \forall D \in \{1\leq D\leq N\}\\
		\Rightarrow  & [r-k(r-1)] \cdot \max_{1\leq D\leq N}\min\{D, \cdot Z_D\}  \nonumber \\
						     & \leq \max_{1\leq D\leq N}\min\{D, Z_D^*\} \leq \max_{1\leq D\leq N}\min\{D, Z_D^* \}  \\
		\Rightarrow  & [r-k(r-1)] \cdot X_D^* \leq X_D \leq X_D^*  \label{inapp_Eqn}
	\end{align}
\label{inapp_proof}
\end{subequations}

Equations (\ref{inapp_Eq1})-(\ref{inapp_Eqn}) indicate that for any $1 \leq r \leq 1+\frac{1-m\cdot.N^{-\epsilon}}{k-1}$ where $m$ and $\epsilon$ are some positive constants, if there exists an r-approximation for $\mathcal{MR}(D)$, there exists an $m\cdot N^{-\epsilon}$ approximation for the maximum balanced biclique (MBB) problem which is a contradiction \cite{feige2004hardness, khot2006ruling}.

\section{ILP formulation for 2-robust design}\label{Robust_ILP}

Here, we formulate the problem of allocating edges in a 2-robust bidirectional interdependent network. As discussed previously, by definition \ref{Lexi_Robust}, a 2-robust network is also 1-robust. Thus, we search for the optimal allocation in regular networks.

Let $N$ be the number of nodes in networks $A$ and $B$, and $k$ be the degree of each node. Let $X$ denote the lower bound on the number of neighbors of any pair of nodes; i.e. $X=\mathcal{MR}(D=2)$. Let $E \in \{0,1\}^{N\times N}$ be a binary matrix where $E_{ij}=1$ if node $i\in A$ is connected to node $j \in B$, and $E_{ij}=0$ otherwise. Moreover, let $Z_i^{jr}$ be a binary variable where $Z_i^{jr}=1$ if node $i \in A$ is a neighbor of node $j \in B$ or node $r \in B$, and $Z_i^{jr}=0$ otherwise. Furthermore, let $Y_j^{ir}$ be a binary variable where $Y_j^{ir}=1$ if node $j \in B$ is a neighbor of node $i \in A$ or node $r \in A$, and $Y_j^{ir}=0$ otherwise. The ILP formulation is as follows.

\begin{subequations}
\begin{align}
\max \quad & X \\
\mbox{s.t.}	\quad & \sum_{i=1}^{N} E_{ij} = k \quad j \in 1,\cdots,N            \label{degree_1}\\
			\quad & \sum_{j=1}^{N} E_{ij} = k \quad i \in 1,\cdots,N                   \label{degree_2}   \\
			\quad & Z_i^{jr} \leq E_{ij}+E_{ik} \quad j \in 1,\cdots,N-1;r \in j+1,\cdots,N \nonumber\\
			\quad & \quad \quad \quad \quad \quad \quad \quad \quad i\in 1,\cdots,N                \label{Pair_neighborsB}\\
			\quad & X \leq \sum_{i=1}^{N} Z_i^{jk} \quad j \in 1,\cdots,N-1;k \in j+1,\cdots,N      \label{bound_pairB} \\
			\quad & Y_j^{ir} \leq E_{ij}+E_{rj} \quad i \in 1,\cdots,N-1;r \in i+1,\cdots,N \nonumber\\
			\quad & \quad \quad \quad \quad \quad \quad \quad \quad j\in 1,\cdots,N                 \label{Pair_neighborsA}\\
			\quad & X \leq \sum_{j=1}^{N} Y_j^{ir} \quad i \in 1,\cdots,N-1;r \in i+1,\cdots,N      \label{bound_pairA}   \\		
			\quad & E_{ij} \in \{0,1\} \quad i \in 1,\cdots,N, j \in 1,\cdots,N      \label{binary1}\\
			\quad & Z_i^{jr} \in \{0,1\} \quad i \in 1,\cdots,N; j,r \in 1,\cdots,N               \label{binary2}\\
			\quad & Y_j^{ir} \in \{0,1\} \quad j \in 1,\cdots,N; i,r \in 1,\cdots,N                \label{binary3}
\end{align}
\label{ILP_2robust_simple}
\end{subequations}

The objective is to maximize the lower bound $X$; i.e. maximize metric $\mathcal{MR}(D=2)$. Constraints (\ref{degree_1}) and (\ref{degree_1}) guarantee that the degree of each node is $k$. Moreover, Constraints (\ref{Pair_neighborsB}) and (\ref{bound_pairB}) find the neighbors of any pair of nodes in network $B$ and set $X$ as the lower bound on the number of these neighbors. Similarly, Constraints (\ref{Pair_neighborsA}) and (\ref{bound_pairA}) find the neighbors of any pair of nodes in network $B$ and set $X$ as the lower bound on the number of these neighbors. Finally, constraints (\ref{binary1}-\ref{binary2}) sets the variables to be binary.

\end{document}